\newcommand{\mb}{\mathbf}
\newcommand\scalemath[2]{\scalebox{#1}{\mbox{\ensuremath{\displaystyle #2}}}}
\newcommand{\norm}[1]{\left\lVert #1 \right\rVert}
\newtheorem{theorem}{Theorem}
\newtheorem{remark}{Remark}
\newtheorem{lemma}[theorem]{Lemma}
\newtheorem{definition}{Definition}
\begin{document}
%

\title{A Novel Observer-Centric Approach for Detecting Faults in Islanded AC Microgrids with Uncertainties}

\author{
    \IEEEauthorblockN{Gabriel Intriago,~\IEEEmembership{Student Member,~IEEE}, Andres Intriago,~\IEEEmembership{Student Member,~IEEE},\\ Charalambos Konstantinou,~\IEEEmembership{Senior Member,~IEEE} and Yu Zhang,~\IEEEmembership{Member,~IEEE}}


    \thanks{G. Intriago is with the Department of Electrical and Computer Engineering at the University of California Santa Cruz, USA (Email: gintriag@ucsc.edu). He is also affiliated with the Department of Mathematics at the Escuela Politécnica Nacional, Ecuador (Email: gabriel.intriago01@epn.edu.ec).}
    \thanks{A. Intriago and C. Konstantinou are with the Computer, Electrical, and Mathematical Sciences and Engineering Division, King Abdullah University of Science and Technology, Saudi Arabia (Email: \{andres.intriagovelasquez,charalambos.konstantinou\}@kaust.edu.sa).}
    \thanks{Y. Zhang is with the Department of Electrical and Computer Engineering at the University of California, Santa Cruz, USA (Email: zhangy@ucsc.edu).}
}

\maketitle

\begin{abstract}
Fault detection is vital in ensuring AC microgrids' reliable and resilient operation. Its importance lies in swiftly identifying and isolating faults, preventing cascading failures, and enabling rapid power restoration. This paper proposes a strategy based on observers and residuals for detecting internal faults in grid-forming inverters with power-sharing coordination. The dynamics of the inverters are captured through a nonlinear state space model. The design of our observers and residuals considers $H_{-}/H_{\infty}$ conditions to ensure robustness against disturbances and responsiveness to faults. The proposed design is less restrictive than existing observer-based fault detection schemes by leveraging the properties of quadratic inner-boundedness and one-sided Lipschitz conditions. The internal faults considered in this paper include actuator faults, busbar faults, and inverter bridge faults, which are modeled using vector-matrix representations that modify the state space model of the inverters. One significant advantage of the proposed approach is its cost-effectiveness, as it does not require additional sensors. Experiments are conducted on an islanded AC microgrid with three inductive lines, four inductive loads, and four grid-forming inverters to validate the merits of the proposed fault detection strategy. The results demonstrate that our design outperforms existing methods in the field.


\end{abstract}

\begin{IEEEkeywords}
AC microgrids, grid-forming inverter, fault detection, observer, residual.
\end{IEEEkeywords}

\nomenclature{\(m_{Pi}\)}{Droop coefficient for active power }
\nomenclature{\(n_{Qi}\)}{Droop coefficient for reactive power}
\nomenclature{\(L_{ci}\)}{Inductance of the output connector}
\nomenclature{\(R_{ci}\)}{Resistance of the output connector}
\nomenclature{\(C_{fi}\)}{Capacitance of the LC filter}
\nomenclature{\(L_{fi}\)}{Inductance of the LC filter}
\nomenclature{\(R_{fi}\)}{Resistance of the LC filter}
\nomenclature{\(K_{IV_i}\)}{Integral gain of the voltage PI controller}
\nomenclature{\(K_{IC_i}\)}{Integral gain of the current PI controller}
\nomenclature{\(K_{PV_i}\)}{Proportional gain of the voltage PI controller}
\nomenclature{\(K_{PC_i}\)}{Proportional gain of the current PI controller}
\nomenclature{\(v_{idi,iqi}^*\)}{Output dq voltage of the PI current controller}
\nomenclature{\(i_{ldi,lqi}^*\)}{Output dq current of the PI voltage controller}
\nomenclature{\(v_{odi,oqi}^*\)}{Reference dq voltages for the PI voltage controller}
\nomenclature{\(v_{bdi,bqi}\)}{dq voltage of the output connector }
\nomenclature{\(V_{ni}\)}{Voltage magnitude setpoint}
\nomenclature{\(\omega_{ni}\)}{Frequency setpoint}
\nomenclature{\(\omega_{ci}\)}{Low-pass filters cut-off frequency for the active and reactive power}
\nomenclature{\(\omega_{b}\)}{Nominal frequency of the system}
\nomenclature{\(\omega_{com}\)}{Frequency of the common reference frame}
\nomenclature{\(\omega_{i}\)}{Operating frequency of each inverter}
\nomenclature{\(i_{odi,oqi}\)}{dq output current of the LC filter}
\nomenclature{\(v_{odi,oqi}\)}{dq output voltage of the LC filter}
\nomenclature{\(i_{ldi,lqi}\)}{dq output current of the inverter bridge }
\nomenclature{\(\gamma_{di,qi}\)}{dq auxiliary state variables for the current PI controllers}
\nomenclature{\(\phi_{di,qi}\)}{dq auxiliary state variables for the voltage PI controllers}
\nomenclature{\(Q_{i}\)}{Reactive power of each inverter}
\nomenclature{\(P_{i}\)}{Active power of each inverter}
\nomenclature{\(\alpha_{i}\)}{Rotating power angle of each inverter}
\nomenclature{\(t_{o}\)}{Fault detection response time}
\nomenclature{\(t_{c}\)}{Fault clearing response time}

\printnomenclature

\section{Introduction}
The introduction of converter-interfaced power generation has successfully integrated renewable energy sources into small-scale power systems, such as microgrids \cite{Fahimi2011} \cite{Ku2019}. Microgrids can operate in two modes: grid-connected or islanded. In the grid-connected mode, a microgrid relies on the main grid for voltage and frequency regulation, which benefits from its upstream protection. On the other hand, an islanded microgrid functions as an independent system and must independently maintain reference voltage magnitude and frequency for its components. Consequently, ensuring stability in the islanded mode is more challenging than in the grid-connected mode \cite{Farrokhabadi2020}. In addition, independent microgrids face risks to their stability during abnormal events, such as internal faults, which can result in significant imbalances between energy demand and supply. These imbalances can lead to partial power outages or blackouts. Furthermore, when an islanded microgrid disconnects from the main power system, the fault current strength decreases, which helps mitigate the decline in voltage magnitude and frequency following a severe event \cite{Ramasubramanian2019}. This situation is exacerbated when the power generators are grid-forming inverters, as they reduce the available electrical inertia in the system \cite{Lasseter2020}. To ensure stable microgrid performance and prevent interruptions in energy supply, it is crucial to have smart fault detection (FD) that is resilient against disturbances and sensitive to faults.


\subsection{Prior work}
Various fault detection schemes are implemented in microgrids to ensure reliable and efficient operation of AC microgrids \cite{Bansal2018}. These schemes have gained popularity, particularly those employing data-driven techniques, owing to advancements in storage, software, handling, and hardware devices \cite{Srivastava2022,Baghaee2021,Baghaee2020}.  However, the success of such methods heavily relies on the quantity and quality of the data they utilize.

Signal processing-based fault detection schemes have proven effective in identifying fault signatures using non-parametric techniques. However, these techniques are often limited to specific conditions and scenarios \cite{Jarrahi2020,Yadav2019,Sadeghkhani2018}. Another approach involves utilizing signal patterns and local measurements near the generation units to create a dependable fault detection module \cite{Sharma2022,Bhargav2022,Xiao2022}. Nevertheless, these strategies may require additional hardware and incur additional costs. In addition to these techniques, model-based methods are employed for fault detection when a system model is available. These methods have gained popularity due to their reliance on the physical relationships governing system dynamics. Among the model-based approaches, observer-based methods have garnered significant interest due to their fast detection capabilities, cost-effectiveness, and the availability of powerful tools for observer design \cite{Campos2011,Hawkins2020,Wang2021,Anagnostou2018,Shoaib2022}.

In \cite{Hawkins2020}, the authors propose a nonlinear observer for a grid-connected photovoltaic (PV) circuit that monitors unmodeled fault signatures to detect output deviations that could indicate the presence of a fault. However, their study does not account for disturbances and uncertainties in the system parameters. \cite{Wang2021} presents an improved fault detection and identification method using $\mathcal{H}_{-}/\mathcal{H}_{\infty}$ optimization, which can effectively handle various component faults. However, their study focuses on DC microgrids with a linear state-space model.
Anagnostou et al.~\cite{Anagnostou2018} develop a time-varying observer for predicting the states of synchronous machines. Their approach linearizes the model at each time step, leading to increased computational complexity. Similarly, they do not consider model uncertainties and disturbances.
\cite{Shoaib2022} introduces a constrained minimization program with linear matrix inequality (LMI) constraints and $\mathcal{L}_{-}/\mathcal{L}_{\infty}$ performance indices to detect faults in a microgrid comprising synchronous machines. They utilize a Lipschitz equivalent nonlinear model. Their study neglects parametric uncertainties and the influence of faults on the nonlinear function in the state transition model. Moreover, the restrictiveness of the Lipschitz condition is not addressed regarding the feasibility of the LMI constraints.

The primary limitation of fault detection strategies based on the Lipschitz condition for nonlinear systems is their susceptibility to the Lipschitz constant, resulting in conservative observer designs \cite{Abbaszadeh2010,Zhang2012}. In \cite{Abbaszadeh2010}, an observer design problem is introduced for nonlinear systems that take into account the quadratic inner-boundedness (QB) and one-sided Lipschitz (OL) conditions. \cite{Zhang2012} presents reduced- and full-order observer designs for nonlinear systems satisfying the QB and OL conditions, incorporating the Ricatti equation. However, these previous works do not specifically focus on fault detection and do not consider the robustness against disturbances and the responsiveness to faults.

\subsection{Contributions}
Considering the limitations of prior works, the main contributions of our work are summarized as follows. First, our research demonstrates the successful application of LMI-based techniques combined with $\mathcal{H}_{-}/\mathcal{H}_{\infty}$ optimization for fault detection in droop-controlled grid-forming converters (GFM) connected to AC microgrids in islanded mode. The proposed fault detection strategy, based on observers, exhibits robustness against disturbances and responsiveness to faults.

To the best of our knowledge, this is the first approach that proposes a GFM nonlinear model incorporating the one-sided Lipschitz (OL) and quadratic inner-boundedness (QB) conditions. We thoroughly investigate the relationship between the observer designs based on Lipschitz conditions and the proposed design based on the OL and QB conditions. Furthermore, we derive a deterministic matrix representation for various types of faults affecting GFMs. Our analysis considers the influence of parametric uncertainties and disturbances on the system model, including the nonlinear function in the state transition equation.

Utilizing a nonlinear observer offers the advantage of performing the one-time computation of the observer's gain offline. In contrast, a linearized observer design increases the time complexity by recalculating the observer gain at each time step \cite{Anagnostou2018}. Unlike the Lipschitz-based observer design in \cite{Shoaib2022}, the OL and QB conditions allow for a less restrictive observer design, avoiding the sensitivity of the observer to the Lipschitz constant. One major challenge in grid-forming inverter technology is the influence of different controller designs. Our proposed method also possesses a crucial advantage in that it can be implemented with the two most commonly used practical control techniques: droop control and virtual synchronous machine control. While this work primarily focuses on detecting internal faults in grid-forming inverters, we believe that our observer design can be easily adapted to detect internal faults in grid-following inverters, given the similar dynamics shared by both technologies. These advantages greatly enhance our proposed method's value for real-world applications.

\subsection{Notation and organization}
$\mathbb{R}$ denotes the set of real numbers. Symbol $\times$ denotes the Cartesian product. Bold letters represent vectors. $\norm{\mb{a}}$ is the $\ell_2$-norm of the vector while $\|\mb{a}\|_{[0,t]} := \sqrt{\int_0^t\|\mb{a}\|_2^2\:dt}$ is the $\ell_2$-norm up to time $t$. $A \succ 0$ indicates that matrix $A$ is positive definite. $\langle\mb{a}, \mb{b}\rangle$ represents the dot product between vectors $\mb{a}$ and $\mb{b}$. $\mb{a}^{\top}$ is the transpose of  vector $\mb{a}$.

The rest of the paper is organized as follows. Section \ref{preliminaries} introduces the background for observer design in nonlinear systems and residual-based fault detection. Section \ref{inverter-model} presents the models of grid-forming inverters and different faults. Section \ref{proposed} outlines the Lipschitz and proposed observer design. The numerical tests and discussion are given in Sections \ref{numerical-tests} and \ref{discussion}, respectively. The concluding remarks are presented in Section \ref{conclusion}.

\section{Preliminaries}\label{preliminaries}
\subsection{Observers for nonlinear dynamic systems}
The state-space representation of a nonlinear system can be typically expressed as:
\begin{align}
    \left\{
    \begin{matrix}
        \dot{\mb{x}} = A\mb{x} + B\mb{u} + \boldsymbol{\phi}(\mb{u},\mb{x}) \\
        \mb{y} = C\mb{x} + D\mb{u} \qquad \qquad \qquad
    \end{matrix}
    \right.\, , \label{basic-state-space-model}
\end{align}
where $\bm{y}$ represents the measurement vector, $\bm{x}$ is the state vector, $\bm{u}$ the inputs vector, and function $\boldsymbol{\phi}(\mb{x},\mb{u})$ captures nonlinearity of the system. The parametric matrices $A$, $B$, $C$, and $D$ are formed according to the equations that govern the system.  We further define two convex sets $\mathcal{D}$ and $\mathcal{U}$ that make up the feasible operating region of the system. They are the Cartesian products of all intervals formed by the maximum and minimum limits of the entire control inputs and states, respectively \cite{Qi2018}:
\begin{align*}
    \mathcal{D} &= [x_1^{\min},x_1^{\max}]\times\dots\times[x_n^{\min},x_n^{\max}], \\
    \mathcal{U} &= [u_1^{\min},u_1^{\max}]\times\dots\times[u_p^{\min},u_p^{\max}].
\end{align*}
The nonlinear model~\eqref{basic-state-space-model} can be modified to incorporate both disturbances and faults as follows:
\begin{equation}
    \begin{cases}
        \dot{\mb{x}} &= A\mb{x} + B\mb{u} + \boldsymbol{\phi}(\mb{x},\mb{u}) + E_w\mb{w} + E_f\mb{f} + \eta_1(\mb{x},\mb{u}) \\
        \mb{y} &= C\mb{x} + D\mb{u} + F_w\mb{w} + F_f\mb{f} + \eta_2(\mb{x},\mb{u}),
    \end{cases}\, \label{system-model-with-disturb-faults}
\end{equation}
where $\mb{w}$ is a disturbance input vector, $\mb{f}$ is the fault vector, $E_w$ and $F_w$ are constant disturbance matrices, $E_f$ and $F_f$ are deterministic fault matrices. The unknown functions $\eta_1$ and $\eta_2$ represent the model uncertainties which are considered to be bounded and finite, i.e.,
\[\|\eta_i(\mb{u},\mb{x})\| < \infty,\quad \forall\:\mb{x} \in \mathcal{D}~\text{and}~\forall\:\mb{u}\:\in\mathcal{U},~i =1,2.\]
A Luenberger observer of the nonlinear system is defined as follows:
\begin{equation}
    \begin{cases}
        \dot{\hat{\mb{x}}} &= A\mb{\hat{x}} + B\mb{u} + \phi(\mb{\hat{x}},\mb{u}) + L(\mb{y} - \hat{\mb{y}}) \\
        \hat{\mb{y}} &=  C\mb{\hat{x}} + D\mb{{u}}
    \end{cases}\,,
    \label{luenberger-observer}
\end{equation}
where $\hat{\mb{y}}$ is the vector of estimated measurements, $\mb{\hat{x}}$ is the vector of estimated states, and the matrix $L$ is the observer gain to be designed. We have two assumptions of model \eqref{system-model-with-disturb-faults}:
\begin{enumerate}
\item The system is observable.
\item The signal vectors $\mb{w}$ and $\mb{f}$are $\mathcal{L}_2$ square-integrable that satisfy:
$\|\mb{w}\|_{[0,t]} <\infty$ and $
\|\mb{f}\|_{[0,t]} <\infty$.
\end{enumerate}

\subsection{Residual generation}
Define the error of state $\mb{e} = {{\mb{x}}} - {\hat{\mb{x}}}$ and measurement residual $\mb{r} = W(\mb{y} -\hat{\mb{y}})$. Then, the dynamics of the error of state can be derived as follows:
\begin{equation}
    \begin{cases}
        \dot{\mb{e}} &= (A - LC)\mb{e} + (E_w - LF_w)\mb{w}  \\
        &\quad + (E_f - LF_f)\mb{f} + \phi(\mb{x},\mb{u}) - \phi(\hat{\mb{x}},\mb{u})\\
        \mb{r} &=W(\mb{y} - C\mb{\hat{x}} - D\mb{{u}})\quad 
    \end{cases}. \label{error-dynamics}
\end{equation} 
We can compactly rewrite \eqref{error-dynamics} as:
\begin{equation}
    \begin{cases}
        \dot{\mb{e}} &= \bar{A}\mb{e} + \Phi + \bar{E}_w\mb{w} + \bar{E}_f\mb{f} \\
        \mb{r} &= \bar{C}\mb{e} + \bar{F}_w\mb{w} + \bar{F}_f\mb{f}
        \label{error-dynamics-2}
    \end{cases}
\end{equation}
with $\Phi \triangleq \phi(\mb{x},\mb{u}) - \phi(\hat{\mb{x}},\mb{u})$ and
\begin{align}
    \left\{\begin{matrix}
        \bar{A} = A - LC, & \bar{E}_w = E_w - LF_w, & \bar{E}_f = E_f - LF_f\\
        \bar{C} = WC, & \bar{F}_w = WF_w, & \bar{F}_f = WF_f
    \end{matrix}\right. .
    \label{definitions}
\end{align}
We assume $W=I$ as suggested by \cite{Buciakowski2017}. The residual $\mb{r}$ can be used for fault detection \cite{Shoaib2022}.

\subsection{Mixed $\mathcal{H}_{-}/\mathcal{H}_{\infty}$ optimization for observer design}
The fault detection problem becomes challenging when there is no clear distinction between disturbances and faults. Such a situation may mislead the fault detection filter, triggering false alarms. To cope with such a concern, we considered the mixed $\mathcal{H}_{-}/\mathcal{H}_{\infty}$ optimization framework for designing our proposed observer \cite{Shoaib2022}. This framework aims to simultaneously make the fault detector filter responsive to faults and sturdy against disturbances by satisfying the following criteria
\begin{enumerate}
    \item Robustness against disturbances $\mb{w}$:
        \begin{align}\|\mb{r}_{w}\|_{[0,t]} \leq \alpha\|\mb{w}\|_{[0,t]}\label{rob_cond}.\end{align}
    \item Sensitivity to faults $\mb{f}$:
        \begin{align}\|\mb{r}_{f}\|_{[0,t]} \geq \beta\|\mb{f}\|_{[0,t]}\label{sens_cond}\end{align}
\end{enumerate}
where
\begin{align}
    \mb{r}_{w} &= \bar{C}\mb{e} + \bar{F}_w\mb{w}\label{fault_free_res}\\
    \mb{r}_{f} &= \bar{C}\mb{e} + \bar{F}_f\mb{f}. \label{dist_free_res}
\end{align}

In Section \ref{proposed}, we pose the task of designing a fault detection filter as a convex optimization problem. We will find linear matrix inequalities to propose sufficient conditions to guarantee the existence of such a filter. In this sense, we transform conditions \eqref{rob_cond} and \eqref{sens_cond} into linear matrix inequalities to incorporate them in the filter design effectively.

\subsection{Triggering method of the proposed scheme}
We follow the procedure presented in \mbox{\cite{Ding2013}} to decide the triggering method that detects the occurrence of faults. The triggering mechanism consists of three important components: the residual norm $J=\|\mb{r}\|_2$, the threshold $J_{th}$, and a signal comparator that acts based on the magnitudes of $J$ and $J_{th}$. The first component, the comparator, triggers an alarm when the residual norm exceeds the threshold, as indicated by the following detection logic
\begin{equation}
    \begin{cases}
    J \leq J_{\mathrm{th}} &\longrightarrow\:\text{No faults} \\
    J > J_{\mathrm{th}} &\longrightarrow\:\text{Fault detected}
    \end{cases}.\label{fdi-logic}
\end{equation}

The second component, the threshold, is necessary because disturbances make $J$ non-zero even without faults. The threshold is calculated when the system is in a fault-free condition and subjected to disturbances
\begin{align}
    J_{\mathrm{th}} = \underset{\mb{w}\in\mathcal{L}_2}{\sup}\: \|\mb{r}_{\mb{w}}\|. \label{residual-threshold}
\end{align} 
The threshold is an upper bound for the residual norm in a fault-free condition. Notice that the threshold is a fixed magnitude immutable to the faults and the GFM's conditions. The third component, the residual norm, is computed from the proposed observers, which are designed to significantly perturb the residual norms by the presence of faults, making the residual norm exceed the threshold.

\section{Models of inverters and faults}\label{inverter-model}

\subsection{Dynamic model of grid-forming inverters}
\begin{figure}
    \centering
    \includegraphics[width=0.8\linewidth]{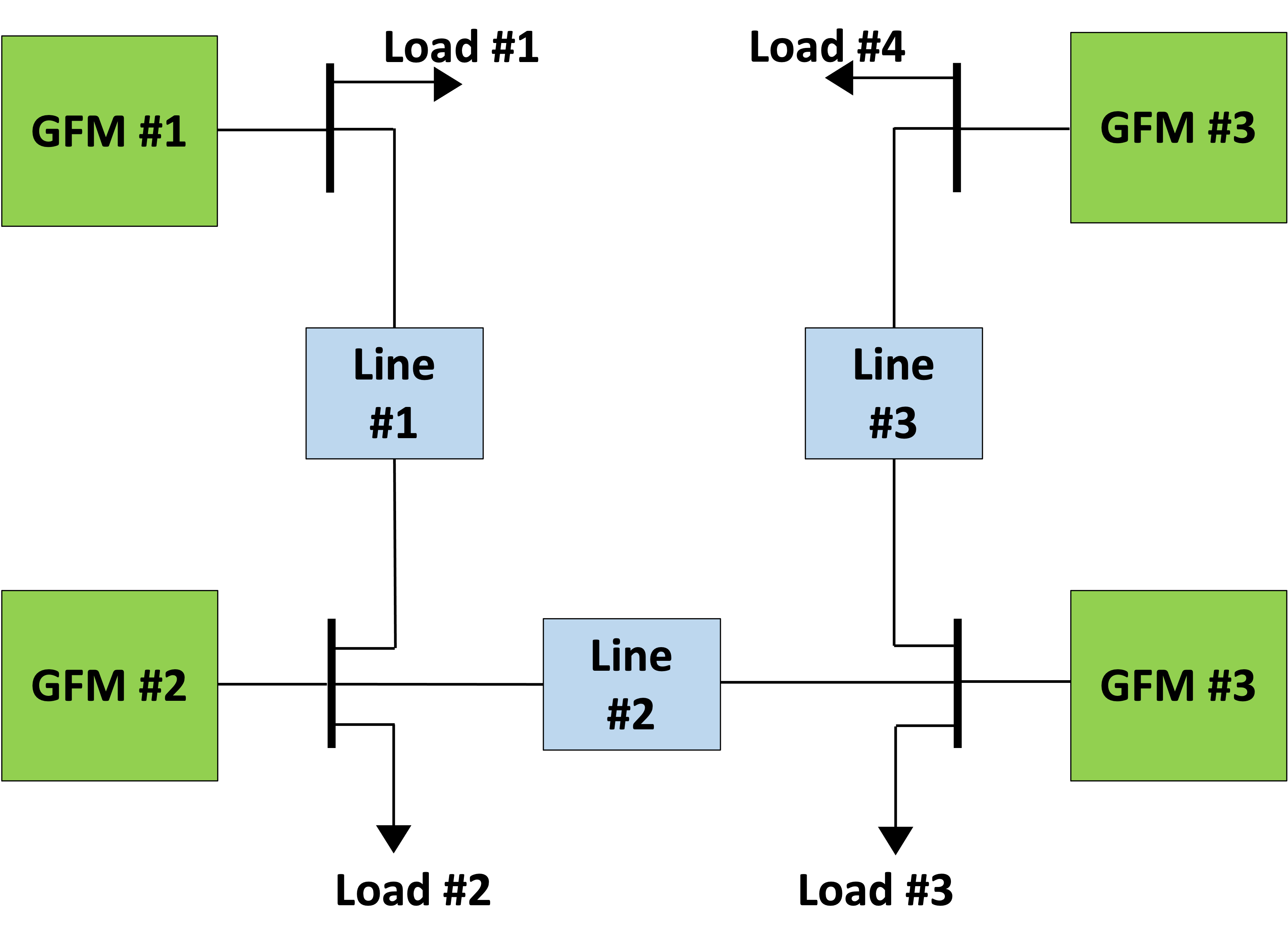}
    \caption{An islanded droop-controlled AC microgrid test system \cite{Bidram2013}.}
    \label{test-system}
    \vspace{-0.2cm}
\end{figure}

We use the $dq$ reference frame to represent the dynamics of the GFMs according to \cite{Bidram2013} where the $i$-th GFM is indexed with the subscript $i$. The dynamical model of each GFM in state space form considering disturbance- and fault-free scenarios is
\begin{align}
    \left\{
    \begin{matrix}
        \dot{\mb{x}} = A\mb{x} + B\mb{u} + \boldsymbol{\phi}(\mb{x},\mb{u}) \\ 
        \mb{y} = C\mb{x} + D\mb{u} \qquad \qquad \qquad
    \end{matrix}
    \right.\,, \label{system-model}
\end{align}
where
\begin{align*}
\mb{x} &= \left[\alpha_i \:\: P_i \:\: Q_i \:\: \phi_{di} \:\: \phi_{qi} \:\: \gamma_{di} \:\: \gamma_{qi} \:\:  i_{ldi} \:\: i_{lqi} \:\:  v_{odi} \:\: v_{oqi} \:\: i_{odi} \:\: i_{oqi} \right]^\top \\
\mb{u} &= \left[\omega_{com} \:\: \omega_{ni} \:\: V_{ni} \:\: v_{bdi} \:\: v_{bqi}\right]^\top \\
\mb{y} &= \left[\alpha_i \:\: \omega_i \:\: v_{odi}^* \:\: i_{ldi}^* \:\: i_{lqi}^* \:\: v_{idi}^* \:\: v_{iqi}^* \right]^\top.
\end{align*}
The nonlinear function $\boldsymbol{\phi}(\mb{x},\mb{u})$ and the parametric matrices $A$, $B$, $C$, $D$ are formed according to the differential equations governing the GFMs as shown in \cite{Bidram2013,Tito2023,Intriago2023}. The function $\boldsymbol{\phi}(\mb{x},\mb{u})$ in \eqref{system-model} arises from the Kirchoff voltage and current laws and the low-pass filtered reactive and active power.

Next, we introduce four fault models defined by the fault vector $\mb{f}$ and the fault matrices $E_f$ and $F_f$. The fault models are not derived from the theory of small signal analysis. Instead, they are modeled according to the theory of model-based fault diagnosis \mbox{\cite{Ding2013}}. Such modeling maps faults as alterations in the inverter's state space model, allowing the construction of the fault vector $\mb{f}$ and fault matrices $E_f$, $F_f$ \mbox{\cite{Shoaib2022,Wang2021}}. Mathematically, the alterations in the GFM state space representation originate from the offset sudden change in the variables of interest modeled as $z^\prime = z + \Delta z$ where $z$ can be $v_{bdi}$ and $v_{bqi}$ for busbar faults, $\omega_{ni}$ and $V_{ni}$ for actuator faults, or $\eta_{v_{idi}}$ and $\eta_{v_{iqi}}$ for bridge faults.

\subsection{Busbar faults}
A busbar fault affects the voltage phasor at the point of common coupling (PCC) with the microgrid. Such voltage is represented by the inputs $v_{bdi}$ and $v_{bqi}$. The busbar fault is implemented as a symmetrical fault corresponding to the grounding of the point of common coupling (busbar that connects the GFM with the microgrid, also known as PCC) through a balanced low-resistance branch (0.1 $\Omega$). The sudden change in the bus voltage at the PCC is modeled as follows
\begin{align}
    v_{bdi}^\prime &= v_{bdi} + \Delta v_{bdi} \label{3phasef-vbdi}\\
    v_{bqi}^\prime &= v_{bqi} + \Delta v_{bqi}\label{3phasef-vbqi}
\end{align}
Hence, by plugging {\eqref{3phasef-vbdi}} and {\eqref{3phasef-vbqi}} into {\eqref{system-model}}, the fault model for the busbar fault is given as
\begin{align}
    f &= \left[
                \begin{matrix}
                        \Delta v_{bdi} & \Delta v_{bqi} 
                \end{matrix}
            \right]^\top, \notag\\
    E_f &= 
    \left[
        \begin{matrix}
            0_{1\times11} & -\frac{1}{L_{ci}} & 0 \\
            0_{1\times11} & 0 & -\frac{1}{L_{ci}}
        \end{matrix}
    \right]^\top, \notag\\
    F_f &= 
    \left[
        \begin{matrix}
            0_{7\times 2}
        \end{matrix}
    \right]. \notag
\end{align}

\subsection{Actuator faults}
The actuator signals consist of the inputs $\omega_{ni}$ and $V_{ni}$, respectively, which set the GFMs' desired frequency and voltage magnitude. A sudden 10\% change in the actuator reference signals is modeled as follows
\begin{align}
    \omega_{ni}^\prime &= \omega_{ni} + \Delta \omega_{ni} \label{actuatorf-omegani}\\
    V_{ni}^\prime &= V_{ni} + \Delta V_{ni}.\label{actuatorf-Vni}
\end{align}
The actuator fault on $\omega_{ni}$ affects the linear term $B\mb{u}$ and the nonlinear function $\boldsymbol{\phi}(\mb{x},\mb{u})$ of \eqref{system-model}. Hence, by replacing {\eqref{actuatorf-omegani}} in {\eqref{system-model}}, the fault model of the actuator $\omega_{ni}$ fault is
\begin{align}
    f_{\omega_{ni}} &= \left[
            \begin{matrix}
                    \Delta\omega_{ni} & \Delta\omega_{ni}i_{lqi} & \Delta\omega_{ni}i_{ldi} & \Delta\omega_{ni}v_{oqi}
            \end{matrix}
        \right. \notag \\
        &\qquad\left. 
            \begin{matrix}
                \Delta\omega_{ni}v_{odi} & \Delta\omega_{ni}i_{oqi} & \Delta\omega_{ni}i_{odi}
            \end{matrix}
        \right]^\top
        \notag\\
    E_{f_{\omega_{ni}}} &=
    \left[
        \begin{matrix}
            1 & 0_{1\times6} \\
            0_{6\times1} & 0_{6\times6} \\
            0_{6\times1} & A_{\omega_{ni}}
        \end{matrix}
    \right] \notag\\
    A_{\omega_{ni}} &= \text{diag}\left([\begin{matrix}
        1 & -1 & 1 & -1 & 1 & -1
    \end{matrix} ]\right) \notag\\
    F_{f_{\omega_{ni}}} &=
    \left[
        \begin{matrix}
            0 & 1 & 0_{1\times 5} \\
            0_{6\times 1} & 0_{6\times 1} & 0_{6\times 5}
        \end{matrix}
    \right]^\top \notag
\end{align}
The actuator fault on $V_{ni}$ affects the linear term $B\mb{u}$ of \eqref{system-model} exclusively. Similarly, by replacing {\eqref{actuatorf-Vni}} in {\eqref{system-model}}, the fault model of the actuator $V_{ni}$ fault is
\begin{align}
    f_{V_{ni}} &= \left[
            \begin{matrix}
                    \Delta V_{ni}
            \end{matrix}
        \right] \notag\\
    E_{f_{V_{ni}}} &=
    \left[
        \begin{matrix}
            0_{1\times 3} & 1 & 0 & K_{PV_i} & 0 & \frac{1}{L_{fi}}K_{PC_i}K_{PV_i} & 0_{1\times 5}
        \end{matrix}
    \right]^\top \notag\\
    F_{f_{V_{ni}}} &=
    \left[
        \begin{matrix}
            0 & 0 & 1 & K_{PV_i} & 0 & K_{PC_i}K_{PV_i} & 0
        \end{matrix}
    \right]^\top \notag
\end{align}

In this study, the actuator $\omega_{ni}$ and $V_{ni}$ signals are the outputs of secondary controllers designed for the regulation of both reference signals. We consider that the actuator faults occur due to malfunctioning of the secondary controllers' input signals rather than the controllers themselves. The actuator signals these controllers compute rely on measurements such as output voltage magnitude $v_{oi}$ and frequency $\omega_i$ of each inverter \mbox{\cite{Xiao2023,BidramMulti2014}}. A physical fault or the malfunctioning of the devices that measure $v_{oi}$ and $\omega_i$ impacts the output of the secondary controller and, consequently, the grid-forming inverter's performance \mbox{\cite{Bidram2013}}.

\subsection{Inverter bridge faults}
The inverter bridge fault is modeled as an abrupt 10\% change in the inverter's efficiency. Ideally, the DQ output voltages of the current controller $v_{idi}^*$ and $v_{iqi}^*$ are equal to the inverter bridge's output voltage $v_{idi}$ and $v_{iqi}$
\begin{align*}
    v_{idi} &= \eta_{v_{idi}}v_{idi}^*,\quad \eta_{v_{idi}} = 1 \\
    v_{iqi} &= \eta_{v_{iqi}}v_{iqi}^*,\quad \eta_{v_{iqi}} = 1
\end{align*}
In this sense, we model the sudden change in the inverter bridge's efficiency as a parametric abrupt change
\begin{align*}
    v_{idi} &= (1 - \Delta\eta_{v_{idi}})v_{idi}^*,\quad \Delta\eta_{v_{idi}} \in (0,1]\\
    v_{iqi} &= (1 - \Delta\eta_{v_{iqi}})v_{iqi}^*,\quad \Delta\eta_{v_{iqi}} \in (0,1]
\end{align*}
We find the inverter bridge fault model by modifying the corresponding dynamical equations that depend on $v_{idi}^*$ and $v_{iqi}^*$. Consequently, we split the analysis into two parts.
\subsubsection{Fault vector and fault matrices for $v_{idi}$}
\begin{align*}
    \mb{f}_{v_{idi}} &= \Delta\eta_{v_{idi}}
        \left[
            \begin{matrix}
                    Q_i & \phi_{di} & \gamma_{di} & i_{ldi} & i_{lqi} & v_{odi}
            \end{matrix}
        \right. \\
        &\qquad \qquad \qquad
        \left.
            \begin{matrix}
                v_{oqi} & i_{odi} & V_{ni}
            \end{matrix}
        \right]^\top \\
    E_{f_{v_{idi}}} &= 
    \left[
        \begin{matrix}
            0_{9\times7} & \boldsymbol{\xi}_{v_{idi}} & 0_{9\times5}
        \end{matrix}
    \right]^\top\\
    F_{f_{v_{idi}}} &= 
    \left[
        \begin{matrix}
            0_{9\times5} & \boldsymbol{\tau}_{v_{idi}} & 0_{9\times1}
        \end{matrix}
    \right]^\top\\
    \boldsymbol{\xi}_{v_{idi}} &= 
        \left[
            \begin{matrix}
                \frac{K_{PC_i}K_{PV_i}n_{Q_i}}{L_{fi}} & - \frac{K_{PC_i}K_{IV_i}}{L_{fi}} & - \frac{K_{IC_i}}{L_{fi}} & \frac{K_{PC_i}}{L_{fi}} & \omega_b 
            \end{matrix}
        \right. \\
        & \quad 
        \left.
            \begin{matrix}
                \frac{K_{PC_i}K_{PV_i}}{L_{fi}} & \frac{K_{PC_i}\omega_bC_{fi}}{L_{fi}} & - \frac{K_{PC_i}F_i}{L_{fi}} & - \frac{K_{PC_i}K_{PV_i}}{L_{fi}}
            \end{matrix}
        \right]^\top \\
    \boldsymbol{\tau}_{v_{idi}} &= 
        \left[
            \scalemath{1}{
            \begin{matrix}
                K_{PC_i}K_{PV_i}n_{Q_i} & -K_{PC_i}K_{IV_i} & -K_{IC_i} & K_{PC_i}
            \end{matrix}
            }
        \right.\\
        & \quad \quad
        \left.
        \begin{matrix}
            \omega_b L_{fi} & K_{PC_i}K_{PV_i} & K_{PC_i}\omega_bC_{fi} & - K_{PC_i}F_i
        \end{matrix}
        \right. \\
        & \quad \quad
        \left.
        \begin{matrix}
            -K_{PC_i}K_{PV_i}
        \end{matrix}
        \right]^\top
\end{align*}
\subsubsection{Fault vector and fault matrices for $v_{iqi}$}
\begin{align*}
    \mb{f}_{v_{iqi}} &= \Delta\eta_{v_{iqi}}
        \left[
            \begin{matrix}
                    \phi_{qi} &
                    \gamma_{qi} &
                    i_{ldi} &
                    i_{lqi} &
                    v_{odi} &
                    v_{oqi} &
                    i_{oqi}
            \end{matrix}
        \right]^\top \\
    E_{f_{v_{iqi}}} &= 
    \left[
        \begin{matrix}
            0_{7\times8} & \boldsymbol{\xi} & 0_{7\times4}
        \end{matrix}
    \right]^\top \\
    F_{f_{v_{iqi}}} &= 
    \left[
        \begin{matrix}
            0_{7\times6} & \boldsymbol{\tau}_{v_{iqi}}
        \end{matrix}
    \right]^\top\\
    \boldsymbol{\xi}_{v_{iqi}} &= 
        \left[
            \begin{matrix}
                - \frac{K_{PC_i}K_{IV_i}}{L_{fi}} & - \frac{K_{IC_i}}{L_{fi}} & - \omega_b & \frac{K_{PC_i}}{L_{fi}}
            \end{matrix}
        \right. \\
        & \quad \quad
        \left.
            \begin{matrix}
                - \frac{K_{PC_i}\omega_b C_{fi}}{L_{fi}} & \frac{K_{PC_i}K_{PV_i}}{L_{fi}} & - \frac{K_{PC_i}F_i}{L_{fi}}
            \end{matrix}
        \right] \\
    \boldsymbol{\tau}_{v_{iqi}} &= 
        \left[
            \scalemath{1}{
            \begin{matrix}
                - K_{PC_i}K_{IV_i} & -K_{IC_i} & -\omega_bL_{fi} & K_{PC_i}
            \end{matrix}
            }
        \right. \\
        & \quad \quad
        \left.
        \begin{matrix}
            -K_{PC_i}\omega_bC_{fi} & K_{PC_i}K_{PV_i} & - K_{PC_i}F_i
        \end{matrix}
        \right]^{\top}.
\end{align*}
The inverter bridge fault affects both efficiencies simultaneously. The fault model of the inverter bridge fault is given as
$\mb{f} = 
    \left[
        \begin{matrix}
            \mb{f}_{v_{idi}}, \mb{f}_{v_{iqi}}
        \end{matrix}
    \right]^{\top}$, 
    $E_{f} = 
    \left[
        \begin{matrix}
            E_{f_{v_{idi}}}, E_{f_{v_{iqi}}}
        \end{matrix}
    \right]$, and 
    $F_{f} = 
    \left[
        \begin{matrix}
            F_{f_{v_{idi}}},  F_{f_{v_{iqi}}}
        \end{matrix}
    \right]$.


\section{Proposed observer design}\label{proposed}
\subsection{Lipschitz observer design}
Lipschitz systems include a broad range of physical systems providing flexibility for the design of observers. Let us assume the nonlinear function $\boldsymbol{\phi}(\cdot)$ satisfies the Lipschitz condition given by the definition given as follows:
\begin{definition}\label{lipschitz-definition}
A function $\boldsymbol{f}(\cdot)$ is \textit{Lipschitz} continuous if there exists a constant $\gamma >0$ such that $\forall\:\mb{u}\in\mathcal{U}$ and $\forall\:\mb{x},\hat{\mb{x}}\in\mathcal{D}$ we have
\begin{align}
    \|\boldsymbol{f}(\mb{x},\mb{u})-\boldsymbol{f}(\hat{\mb{x}},\mb{u})\| \leq \gamma \|\mb{x}-\hat{\mb{x}}\|. \label{lipschitz-condition}
\end{align}
\end{definition}
The lemma presented below gives sufficient conditions for the existence of a filter gain matrix $L$ based on the mixed $\mathcal{H}_{-}/\mathcal{H}_{\infty}$ optimization framework for Lipschitz nonlinear systems.

\begin{lemma}[see \cite{Shoaib2022}]\label{lipschitz-proposition}
Consider the system \eqref{system-model-with-disturb-faults} with a Lipschitz continuous observer defined in \eqref{luenberger-observer} and  \eqref{lipschitz-condition}. If exist a gain matrix $L$, strictly positive scalars $\alpha$, $\beta$, $\varepsilon_1$, $\varepsilon_2$, and identical positive definite matrices $Q$ and $P$ such that the following two LMIs hold:
\begin{align}
\begin{split}
    \left[\begin{matrix}
    \Omega_1 & P\bar{E}_w + \bar{C}^\top\bar{F}_w & P \\
    * & - \alpha^2I + \bar{F}_w^\top\bar{F}_w & 0 \\
    * & * & -\varepsilon_1I
    \end{matrix}\right] \prec 0 
\end{split}\label{L1-lipschitz}
\end{align}
\begin{align}
\begin{split}
    \left[\begin{matrix}
        \Omega_2 & Q\bar{E}_f - \bar{C}^\top\bar{F}_f & Q \\
        * & - \beta^2I + \bar{F}_f^\top\bar{F}_f & 0 \\
        * & * & -\varepsilon_2I
    \end{matrix}\right] \prec 0, 
\end{split}\label{L2-lipschitz}
\end{align}
where $\Omega_1 = \bar{A}^\top P + P\bar{A} + \bar{C}^\top\bar{C} + \varepsilon_1\gamma^2I$, $\Omega_2 = \bar{A}^\top Q + Q\bar{A} - \bar{C}^\top\bar{C} + \varepsilon_2\gamma^2I$, and 
$\gamma$ is the given Lipschitz constant. Then, i) the residual generator in \eqref{error-dynamics} is stable; and ii) the resulting observer satisfies the robustness and sensitivity constraints \eqref{rob_cond} and \eqref{sens_cond}.
\end{lemma}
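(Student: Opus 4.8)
The plan is to treat each LMI as an analysis (dissipation) condition for the error dynamics \eqref{error-dynamics-2}, obtained by pairing a quadratic storage function with the Lipschitz bound on $\Phi$ through the S-procedure, and then to integrate the resulting pointwise inequality to recover the $\mathcal{H}_{\infty}$-type robustness bound \eqref{rob_cond} and the $\mathcal{H}_{-}$-type sensitivity bound \eqref{sens_cond}. Throughout I would invoke the key consequence of Definition \ref{lipschitz-definition}, namely $\Phi^\top\Phi \le \gamma^2\mb{e}^\top\mb{e}$, together with the standing assumption that the estimation error starts from rest, so that the storage functions vanish at $t=0$.

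For the first LMI \eqref{L1-lipschitz} and the robustness claim, I would set the fault to zero, take the storage function $V_1 = \mb{e}^\top P\mb{e}$, and compute $\dot V_1$ along \eqref{error-dynamics-2}. Forming the dissipation expression $\dot V_1 + \mb{r}_w^\top\mb{r}_w - \alpha^2\mb{w}^\top\mb{w}$ with $\mb{r}_w$ as in \eqref{fault_free_res}, I would collect all terms into a single quadratic form in the augmented vector $\zeta_1 = [\,\mb{e}^\top\ \mb{w}^\top\ \Phi^\top\,]^\top$; the cross term $2\mb{e}^\top P\Phi$ is exactly what produces the block $P$ in the third column, and $\mb{r}_w^\top\mb{r}_w$ supplies the $\bar C^\top\bar C$, $\bar C^\top\bar F_w$, and $\bar F_w^\top\bar F_w$ entries. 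Treating $\Phi$ as a free variable constrained only by $\gamma^2\mb{e}^\top\mb{e} - \Phi^\top\Phi \ge 0$, the S-procedure with multiplier $\varepsilon_1 > 0$ shows that \eqref{L1-lipschitz} (whose $(1,1)$ block is precisely $\Omega_1 = \bar A^\top P + P\bar A + \bar C^\top\bar C + \varepsilon_1\gamma^2 I$ and whose $(3,3)$ block is $-\varepsilon_1 I$) is equivalent to $\dot V_1 + \mb{r}_w^\top\mb{r}_w - \alpha^2\mb{w}^\top\mb{w} \le 0$ along every trajectory. Integrating on $[0,t]$ and using $V_1(0)=0$, $V_1(t)\ge 0$ yields $\|\mb{r}_w\|_{[0,t]}^2 \le \alpha^2\|\mb{w}\|_{[0,t]}^2$, i.e. \eqref{rob_cond}. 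Stability (claim i) then follows by specializing to $\mb{w}=\mb{f}=0$: the strict inequality gives $\dot V_1 \le -\mb{r}_w^\top\mb{r}_w < 0$ for $\mb{e}\ne 0$, so the residual generator \eqref{error-dynamics} is asymptotically stable.

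For the second LMI \eqref{L2-lipschitz} and the sensitivity claim, I would set the disturbance to zero, take $V_2 = \mb{e}^\top Q\mb{e}$, and build the dissipation inequality with the opposite sign on the residual term, i.e. analyze $\dot V_2 + \beta^2\mb{f}^\top\mb{f} - \mb{r}_f^\top\mb{r}_f$ with $\mb{r}_f$ as in \eqref{dist_free_res}. Writing this as a quadratic form in $\zeta_2 = [\,\mb{e}^\top\ \mb{f}^\top\ \Phi^\top\,]^\top$ explains the minus sign on $\bar C^\top\bar C$ inside $\Omega_2$ and the off-diagonal block $Q\bar E_f - \bar C^\top\bar F_f$; a second S-procedure application with multiplier $\varepsilon_2 > 0$ absorbs the Lipschitz constraint and reproduces \eqref{L2-lipschitz} (with $(3,3)$ block $-\varepsilon_2 I$). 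Integrating the resulting inequality and again using $Q\succ 0$, $V_2(0)=0$ gives $\|\mb{r}_f\|_{[0,t]}^2 \ge \beta^2\|\mb{f}\|_{[0,t]}^2$, i.e. \eqref{sens_cond}. Requiring the \emph{same} matrix $P=Q$ in both LMIs is what guarantees that one gain $L$ (entering $\bar A$, $\bar E_w$, $\bar E_f$) simultaneously certifies robustness and sensitivity.

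I expect the sensitivity step to be the main obstacle. Unlike the $\mathcal{H}_{\infty}$ bound, the $\mathcal{H}_{-}$ inequality \eqref{sens_cond} is a \emph{lower} bound on a gain and is therefore not a genuine dissipativity property; the argument works only because the storage function is nonnegative and vanishes at $t=0$, and one must be careful with the sign convention of the quadratic fault term so that integrating $\dot V_2$ produces an inequality in the correct direction. The remaining effort is bookkeeping: checking that the off-diagonal blocks and the S-procedure multipliers align exactly with the entries of \eqref{L1-lipschitz}--\eqref{L2-lipschitz}, and confirming that $W=I$ makes $\bar C=C$ so that $\bar C^\top\bar C$, $\bar C^\top\bar F_w$, and $\bar F_w^\top\bar F_w$ are the constant blocks appearing in the statement.
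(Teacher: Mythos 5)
First, note that the paper itself never proves Lemma \ref{lipschitz-proposition}; it defers entirely to \cite{Shoaib2022}. The closest in-paper argument is the proof of Theorem \ref{one-sided-lipschitz-theorem}, and your strategy (quadratic storage function, S-procedure absorption of the nonlinearity bound, integration from zero initial error) is exactly that technique specialized to the Lipschitz case, so the approach is the right one. Your robustness half is correct and complete: expanding $\dot V_1 + \mb{r}_w^\top\mb{r}_w - \alpha^2\mb{w}^\top\mb{w} + \varepsilon_1(\gamma^2\mb{e}^\top\mb{e} - \Phi^\top\Phi)$ as a quadratic form in $(\mb{e},\mb{w},\Phi)$ gives precisely the matrix of \eqref{L1-lipschitz}, and the integration argument yields \eqref{rob_cond}. (Two minor slips there: the S-procedure gives sufficiency of the LMI for the dissipation inequality, not the ``equivalence'' you claim — sufficiency is all the lemma needs; and for claim i) you write $\dot V_1 \le -\mb{r}_w^\top\mb{r}_w < 0$, but $\bar C^\top\bar C$ is singular here since $C$ has fewer rows than columns, so strict negativity of $\dot V_1$ must be drawn from strictness of the LMI, not from the residual term.)

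The genuine gap is in the sensitivity half — exactly the step you flagged as delicate and then did not actually verify. The quadratic form of $\dot V_2 + \beta^2\mb{f}^\top\mb{f} - \mb{r}_f^\top\mb{r}_f + \varepsilon_2(\gamma^2\mb{e}^\top\mb{e} - \Phi^\top\Phi)$ has $(2,2)$ block $\beta^2 I - \bar F_f^\top\bar F_f$, i.e., the \emph{negative} of the block $-\beta^2 I + \bar F_f^\top\bar F_f$ that appears in \eqref{L2-lipschitz}; all other blocks ($\Omega_2$, $Q\bar E_f - \bar C^\top\bar F_f$, $Q$, $-\varepsilon_2 I$) do match. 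So your derivation does not ``reproduce'' \eqref{L2-lipschitz}; it proves a lemma with a different second LMI, namely the standard $\mathcal{H}_-$/KYP form, whose feasibility requires $\bar F_f^\top\bar F_f \succ \beta^2 I$. Conversely, if you run your integration argument on \eqref{L2-lipschitz} as stated, the mismatch contributes a surplus term $2\mb{f}^\top(\beta^2 I - \bar F_f^\top\bar F_f)\mb{f}$ and you only obtain $\|\mb{r}_f\|^2_{[0,t]} > 2\|\bar F_f\mb{f}\|^2_{[0,t]} - \beta^2\|\mb{f}\|^2_{[0,t]}$, which is strictly weaker than \eqref{sens_cond} because \eqref{L2-lipschitz} itself forces $\bar F_f^\top\bar F_f \prec \beta^2 I$ (and is vacuous when $\bar F_f = 0$, which is the case for this paper's busbar fault model, where $F_f = 0_{7\times 2}$). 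As written, therefore, your proof does not close for the statement as printed: you must either prove the sign-corrected LMI and explicitly flag the discrepancy with \eqref{L2-lipschitz} (whose sign pattern appears to have been copied from the $\mathcal{H}_\infty$ block $-\alpha^2 I + \bar F_w^\top\bar F_w$), or supply a genuinely different argument connecting \eqref{L2-lipschitz} to \eqref{sens_cond}; the dissipation route you propose cannot do it.
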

The proof of \textit{Lemma} \ref{lipschitz-proposition} can be found in \cite{Shoaib2022}, where an LMI solution is proposed by considering a nonlinear system that satisfies the Lipschitz condition under the $\mathcal{H}_{-}/\mathcal{H}_{\infty}$ framework.

\subsection{Proposed OL and QB observer design}
The Lipschitz constant is sensitive to the operating region of the system $\mathcal{D}\times\mathcal{U}$ and the parameters that define the nonlinear function $\boldsymbol{\phi}(\cdot)$. In this scenario, \textit{Lemma} \ref{lipschitz-proposition} might presumably fail to obtain the observer's gain matrix $L$ and make the residual generator dynamics stable. Generalizing the Lipchitz condition, by considering the QB and OL conditions, is a suitable alternative according to \cite{Abbaszadeh2010,Zhang2012}.
\begin{definition}\label{one-sided-definition}
The function $\boldsymbol{f}(\cdot)$ is one-sided Lipschitz continuous if a constant $\rho \in \mathbb{R}$ exists such that $\forall\:u\in\mathcal{U}$ and $\forall\:\mb{x},\hat{\mb{x}}\in \mathcal{D}$:
\begin{align}
    \langle\boldsymbol{f}(\mb{x},\mb{u})-\boldsymbol{f}(\hat{\mb{x}},\mb{u}),\mb{x}-\hat{\mb{x}}\rangle \leq \rho \|\mb{x}-\hat{\mb{x}}\|^2. \label{one-sided-lipschitz}
\end{align}
\end{definition}
\begin{definition}\label{qib-definition}
The function $\boldsymbol{f}(\cdot)$ is quadratic inner-bounded continuous if the constants $\delta, \varphi \in \mathbb{R}$ exist such that $\forall\:u\in\mathcal{U}$ and $\forall\:\mb{x},\hat{\mb{x}}\in \mathcal{D}$:
\begin{align}
    \|\boldsymbol{f}(\mb{x},\mb{u})-\boldsymbol{f}(\hat{\mb{x}},\mb{u})\|^2 &\leq \varphi \langle \mb{x}-\hat{\mb{x}}, \boldsymbol{f}(\mb{x},\mb{u})-\boldsymbol{f}(\hat{\mb{x}},\mb{u})\rangle +\notag\\
    &\quad \delta \|\mb{x}-\hat{\mb{x}}\|^2 . \label{quad-bounded}
\end{align}
\end{definition}
\begin{remark}
Different from the Lipschitz constant $\gamma$, the constants $\rho$, $\delta$, and $\varphi$ are not required to be positive. Hence, it is beneficial for a less conservative design of observers.
\end{remark}

In the following theorem, we propose sufficient conditions that incorporate the $\mathcal{H}_{-}/\mathcal{H}_{\infty}$ framework, and conditions \eqref{one-sided-lipschitz} and \eqref{quad-bounded} to demonstrate the existence of the matrix gain $L$.
\setcounter{theorem}{0}
\begin{theorem}\label{one-sided-lipschitz-theorem}
Consider the constants $\rho$, $\delta$, and $\varphi$, the system in \eqref{system-model-with-disturb-faults} satisfying the conditions \eqref{one-sided-lipschitz} and \eqref{quad-bounded}, the observer defined in \eqref{luenberger-observer} and the residual generator in \eqref{error-dynamics}. If exist a filter gain matrix $L$,  strictly positive scalars $\alpha$, $\beta$, and $\{\epsilon_i\}_{i=1}^4$, and identical positive definite matrices $Q$ and $P$, such that the LMIs given as follows hold:
\begin{align}
\begin{split}
\left[\begin{matrix}
    \Omega_1 & P\bar{E}_w + \bar{C}^\top\bar{F}_w & P + \frac{1}{2}(\epsilon_2\varphi - \epsilon_1)I \\
    * & - \alpha^2I + \bar{F}_w^\top\bar{F}_w & 0 \\
    * & * & -\epsilon_2I
    \end{matrix}\right]  
\prec 0 \label{L1}
\end{split}
\end{align}
\begin{align}
\begin{split}
    \left[\begin{matrix}
    \Omega_2 & Q\bar{E}_f - \bar{C}^\top\bar{F}_f & Q + \frac{1}{2}(\epsilon_4\varphi - \epsilon_3)I \\
    * & - \beta^2I + \bar{F}_f^\top\bar{F}_f & 0 \\
    * & * & -\epsilon_4I
\end{matrix}\right] 
\prec 0 \label{L2}
\end{split}
\end{align}
where $\Omega_1 = \bar{A}^\top P + P\bar{A} + \bar{C}^\top\bar{C} + (\epsilon_1\rho + \epsilon_2\delta) I$, $\Omega_2 = \bar{A}^\top Q + Q\bar{A} - \bar{C}^\top\bar{C} + (\epsilon_3\rho + \epsilon_4\delta) I$. Then, we have 
i) the residual generator is stable, and ii)
the observer satisfies the robustness and sensitivity constraints \eqref{rob_cond} and \eqref{sens_cond}.
\end{theorem}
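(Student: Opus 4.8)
The plan is to establish both claims from a single quadratic Lyapunov function $V = \mb{e}^\top P\mb{e}$ (using $Q = P$ as required) together with a dissipation-inequality argument in which the nonlinear mismatch $\Phi = \boldsymbol{\phi}(\mb{x},\mb{u}) - \boldsymbol{\phi}(\hat{\mb{x}},\mb{u})$ is treated as an \emph{independent} signal and absorbed via the S-procedure using the OL and QB certificates \eqref{one-sided-lipschitz} and \eqref{quad-bounded}. Written with $\mb{e} = \mb{x}-\hat{\mb{x}}$, these read $\langle\Phi,\mb{e}\rangle \leq \rho\norm{\mb{e}}^2$ and $\norm{\Phi}^2 \leq \varphi\langle\mb{e},\Phi\rangle + \delta\norm{\mb{e}}^2$, i.e. two valid inequalities that I may append to any quadratic expression with nonnegative multipliers without changing its sign. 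This is precisely the point of departure from the Lipschitz proof of \textit{Lemma}~\ref{lipschitz-proposition}: rather than eliminating $\Phi$ through the single bound $\norm{\Phi}^2 \leq \gamma^2\norm{\mb{e}}^2$ and a completion of squares, here $\Phi$ is kept as a coordinate and constrained by two separate certificates, which is what generates the extra $\Phi$-block (the $-\epsilon_2 I$ and $-\epsilon_4 I$ entries together with the off-diagonal $P + \tfrac12(\epsilon_2\varphi-\epsilon_1)I$) in \eqref{L1}--\eqref{L2}.

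For the robustness claim I would take the fault-free error system $\dot{\mb{e}} = \bar{A}\mb{e} + \Phi + \bar{E}_w\mb{w}$, $\mb{r}_w = \bar{C}\mb{e} + \bar{F}_w\mb{w}$. Differentiating $V$ along trajectories, forming the storage expression $\dot V + \norm{\mb{r}_w}^2 - \alpha^2\norm{\mb{w}}^2$, and adding the nonnegative terms $\epsilon_1(\rho\norm{\mb{e}}^2 - \langle\Phi,\mb{e}\rangle)$ and $\epsilon_2(\delta\norm{\mb{e}}^2 + \varphi\langle\mb{e},\Phi\rangle - \norm{\Phi}^2)$, I collect the result as a quadratic form in $[\mb{e}^\top\ \mb{w}^\top\ \Phi^\top]^\top$; the $\mb{e}$-block becomes $\Omega_1$, the $\mb{e}$--$\mb{w}$ block $P\bar{E}_w + \bar{C}^\top\bar{F}_w$, the $\mb{w}$-block $-\alpha^2 I + \bar{F}_w^\top\bar{F}_w$, and the remaining blocks as above, reproducing exactly the matrix in \eqref{L1}. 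Thus \eqref{L1}$\prec 0$ forces $\dot V + \norm{\mb{r}_w}^2 - \alpha^2\norm{\mb{w}}^2 < 0$, and integrating on $[0,t]$ with zero initial error and $V\geq 0$ yields $\norm{\mb{r}_w}_{[0,t]}^2 \leq \alpha^2\norm{\mb{w}}_{[0,t]}^2$, which is \eqref{rob_cond}. Stability (claim i) follows for free: with $\mb{w}=\mb{0}$, the $\mb{e}$--$\Phi$ principal submatrix of \eqref{L1} is still negative definite, and its quadratic form equals $\dot V + \norm{\bar{C}\mb{e}}^2$ plus the nonnegative OL/QB terms, so $\dot V < 0$ for $\mb{e}\neq\mb{0}$ and the residual generator is asymptotically stable.

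The sensitivity claim is the mirror-image construction on the disturbance-free system $\dot{\mb{e}} = \bar{A}\mb{e} + \Phi + \bar{E}_f\mb{f}$, $\mb{r}_f = \bar{C}\mb{e} + \bar{F}_f\mb{f}$, now with multipliers $\epsilon_3,\epsilon_4$ and the \emph{reverse} dissipation inequality: I would show that \eqref{L2}$\prec 0$ implies $\norm{\mb{r}_f}^2 - \beta^2\norm{\mb{f}}^2 \geq \dot V$ pointwise (the sign flip of $\bar{C}^\top\bar{C}$ between $\Omega_1$ and $\Omega_2$ is exactly what turns the upper bound into a lower bound), and then integrate to get $\norm{\mb{r}_f}_{[0,t]}^2 \geq \beta^2\norm{\mb{f}}_{[0,t]}^2 + V(t) \geq \beta^2\norm{\mb{f}}_{[0,t]}^2$, i.e. \eqref{sens_cond}. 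I expect this lower-bound ($\mathcal{H}_-$) direction to be the main obstacle: unlike the $\mathcal{H}_\infty$ part it is not a textbook dissipativity estimate, the sign bookkeeping on the residual quadratic and on the OL/QB multipliers is delicate, and one must check that the \emph{same} certificate $P=Q$ and gain $L$ satisfy \eqref{L1} and \eqref{L2} simultaneously so a single observer meets both specifications. As a closing sanity check I would confirm that, under the substitution $Y = PL$, both inequalities are affine in $(P,Y,\alpha^2,\beta^2,\{\epsilon_i\})$, so the stated LMIs are indeed the computable conditions from which $L$ is recovered.
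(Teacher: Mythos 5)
Your $\mathcal{H}_\infty$/stability half is exactly the paper's proof: the same Lyapunov function $Z=\mb{e}^\top P\mb{e}$, the same S-procedure step of adding the nonnegative quantities $\epsilon_1(\rho\mb{e}^\top\mb{e}-\Phi^\top\mb{e})$ and $\epsilon_2(\delta\mb{e}^\top\mb{e}+\varphi\mb{e}^\top\Phi-\Phi^\top\Phi)$ to $\dot Z+\mb{r}_w^\top\mb{r}_w-\alpha^2\mb{w}^\top\mb{w}$, the same assembly into a quadratic form in $[\mb{e}^\top\;\mb{w}^\top\;\Phi^\top]^\top$ whose matrix is \eqref{L1}, and the same zero-initial-condition integration; your stability remark (set $\mb{w}=\mb{0}$ and keep the $\mb{e}$--$\Phi$ principal submatrix) is also the paper's argument.

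The genuine gap is in the sensitivity step, exactly where you predicted trouble, and it cannot be closed as you describe. Your plan is to show that \eqref{L2} implies the pointwise reverse dissipation inequality $\norm{\mb{r}_f}^2-\beta^2\norm{\mb{f}}^2\geq\dot V$. But running the same S-procedure on $\dot Y-\mb{r}_f^\top\mb{r}_f+\beta^2\mb{f}^\top\mb{f}\leq 0$ produces a matrix whose $(2,2)$ block is $\beta^2 I-\bar F_f^\top\bar F_f$, the \emph{negative} of the block $-\beta^2 I+\bar F_f^\top\bar F_f$ appearing in \eqref{L2}: the sign flips in $\bar C^\top\bar C$ and $\bar C^\top\bar F_f$ are consistent with the lower-bound derivation, but the fault block is not. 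In fact the implication you assert is false whenever \eqref{L2} is feasible: \eqref{L2}$\prec 0$ forces $\bar F_f^\top\bar F_f\prec\beta^2 I$, while evaluating your pointwise inequality at $\mb{e}=\mb{0}$ (hence $\Phi=\mb{0}$ and $\dot V=0$, e.g.\ at $t=0$ under zero initial conditions) with $\mb{f}\neq\mb{0}$ demands $\norm{\bar F_f\mb{f}}^2\geq\beta^2\norm{\mb{f}}^2$, i.e.\ $\bar F_f^\top\bar F_f\succeq\beta^2 I$ --- a contradiction. So under \eqref{L2} the estimate you intend to integrate never holds. For fairness, the paper does not close this gap either: after the bound \eqref{upper-bound-Ydot} it merely asserts that ``a similar derivation \dots\ yields the matrix inequality \eqref{L2}'', whereas an honest repetition of the $\mathcal{H}_\infty$ steps yields the LMI with $+\beta^2 I-\bar F_f^\top\bar F_f$ in that block (which would additionally require $\sigma_{\min}(\bar F_f)>\beta$, the standard $\mathcal{H}_-$ feedthrough condition); the sign convention in \eqref{L2} is inherited from Lemma \ref{lipschitz-proposition}, i.e.\ from \cite{Shoaib2022}. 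Your instinct about where the difficulty lies was therefore correct, but the step as you state it would fail, and it is precisely the step the paper's own proof leaves unproven.
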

\begin{proof}
Let us choose the following candidate Lyapunov function ${Z} = \mb{e}^\top P\mb{e}$ with $P > 0$ to prove the internal stability of the residual generator \eqref{error-dynamics} in the fault-free case. The time derivative of $V$ is:
\begin{align}
    \dot{Z} &= \mb{e}^\top P\bar{E}_w\mb{w} + \mb{e}^\top(\bar{A}^\top P + P\bar{A})\mb{e} + \mb{w}^\top \bar{E}_w^\top P\mb{e} \notag\\
    &\quad + \mb{e}^\top P\Phi + \Phi^\top P\mb{e}.  \label{Vdot1}
\end{align}
Given $\epsilon_1>0$ and $\epsilon_2>0$, the conditions \eqref{one-sided-lipschitz} and \eqref{quad-bounded} can be expressed equivalently as
\begin{align}
    \Phi^\top\mb{e} \leq \rho\mb{e}^\top\mb{e} &\Leftrightarrow
    \epsilon_1\left(\rho\mb{e}^\top\mb{e} - \Phi^\top\mb{e}\right) \geq 0, \label{one-sided-lipschitz-rewritten1} \\
    \Phi^\top\Phi \leq \varphi \mb{e}^\top\Phi + \delta \mb{e}^\top\mb{e} &\Leftrightarrow
    \epsilon_2(\delta \mb{e}^\top\mb{e} + \varphi \mb{e}^\top\Phi - \Phi^\top\Phi) \geq 0. \label{quad-bounded-rewritten1}
\end{align}

Adding \eqref{one-sided-lipschitz-rewritten1} and \eqref{quad-bounded-rewritten1} to the right-hand side of \eqref{Vdot1}, we have:
\begin{align}
    \begin{split}
        \dot{Z} \leq \mb{w}^\top \bar{E}_w^\top P\mb{e} + \mb{e}^\top(\bar{A}^\top P + P\bar{A})\mb{e} + \mb{e}^\top P\bar{E}_w\mb{w}  \\
        + \Phi^\top P\mb{e} + \mb{e}^\top P\Phi + \epsilon_1 \rho \mb{e}^\top\mb{e} - \epsilon_1\Phi^\top \mb{e} \\
        + \epsilon_2 \delta\:\mb{e}^\top\mb{e} + \epsilon_2 \varphi\:\mb{e}^\top\Phi - \epsilon_2 \Phi^\top\Phi
    \end{split}\label{upper-bound-Vdot-1}
\end{align}
Rewriting \eqref{upper-bound-Vdot-1} as in a matrix inequality:
\begin{align}
    \dot{Z} \leq 
    \left[
        \begin{matrix}
            \mb{e} \\ \mb{w} \\ \mb{\Phi}
        \end{matrix}
    \right]^\top
    \underbrace{
    \scalemath{0.6}{
    \left[
        \begin{matrix}
            (\bar{A}^\top P + P\bar{A}) & P\bar{E}_w & P + \frac{1}{2}(\epsilon_2\varphi - \epsilon_1) I\\
            (\epsilon_1\rho + \epsilon_2\delta) I & & \\
            & & \\
            \bar{E}_w^\top P & 0 & 0 \\
            & & \\
            P + \frac{1}{2}(\epsilon_2\varphi - \epsilon_1) I & 0 & -\epsilon_2I
        \end{matrix}
    \right]}
    }_\text{${\mb{M}}$}
    \left[
        \begin{matrix}
            \mb{e} \\ \mb{w} \\ \mb{\Phi}
        \end{matrix}
    \right] \label{up-bound-without-rob-cond}
\end{align}
Hence, $\dot{Z} < 0$ if $M < 0$. Considering that $Z > 0$ and $\mb{r}_w$ in \eqref{fault_free_res}, we can satisfy constraint \eqref{rob_cond} by rewriting it as:
\begin{align}
    \int_0^t \left(\dot{Z} + \mb{r}_w^\top\mb{r}_w - \alpha^2 \: \mb{w}^\top\mb{w} \right)\: dt  &\leq 0 \notag\\
    \dot{Z} + \mb{r}_w^\top\mb{r}_w - \alpha^2 \: \mb{w}^\top\mb{w} &\leq 0\notag\\
    \mb{e}^\top\bar{C}^\top\bar{C}\mb{e} - \alpha^2 \: \mb{w}^\top\mb{w} + \mb{e}^\top\bar{C}^\top\bar{F}_w\mb{w} & \notag\\
    + \dot{Z} + \mb{w}^\top\bar{F}_w^\top\bar{F}_w\mb{w} + \mb{w}^\top\bar{F}_w\bar{C}\mb{e}  &\leq 0. \label{rob-cond-with-vdot1}
\end{align}
Assuming zero initial conditions, combining \eqref{rob-cond-with-vdot1} with \eqref{up-bound-without-rob-cond} we have:
\begin{align}
    \dot{Z} \leq 
    \left[
        \begin{matrix}
            \mb{e} \\ \mb{w} \\ \mb{\Phi}
        \end{matrix}
    \right]^\top
    \underbrace{
    \scalemath{0.6}{
    \left[
        \begin{matrix}
            (\bar{A}^\top P + P\bar{A}) + \bar{C}^\top\bar{C} & P\bar{E}_w + \bar{C}^\top\bar{F}_w & P + \frac{1}{2}(\epsilon_2\varphi - \epsilon_1) I\\
            (\epsilon_1\rho + \epsilon_2\delta) I & & \\
            & & \\
            \bar{E}_w^\top P + \bar{F}_w\bar{C} & - \alpha^2 I + \bar{F}_w^\top\bar{F}_w & 0 \\
            & & \\
            P + \frac{1}{2}(\epsilon_2\varphi - \epsilon_1) I & 0 & -\epsilon_2I
        \end{matrix}
    \right]}
    }_\text{${\mb{M}^*}$}
    \left[
        \begin{matrix}
            \mb{e} \\ \mb{w} \\ \mb{\Phi}
        \end{matrix}
    \right]
\end{align}
If ${\mb{M}^*} < 0$, then $\dot{Z} < 0$ which means that the residual generator is asymptotically stable in the Lyapunov sense. Therefore, the inequality ${\mb{M}^*} < 0$ is equivalent to \eqref{L1}. 

Similarly, defining the Lyapunov function $Y = \mb{e}^\top Q\mb{e}$ with $Q> 0$ to demonstrate the internal stability of \eqref{error-dynamics} in the disturbance-free case:
\begin{align}
    \begin{split}
    \dot{Y} = \mb{e}^\top(\bar{A}^\top P+P\bar{A})\mb{e} + \mb{e}^\top P\bar{E}_f\mb{f} + \mb{f}^\top \bar{E}_f^\top P\mb{e} \\
    + \mb{e}^\top P\Phi + \Phi^\top P\mb{e}.  
    \end{split}\label{Ydot1}
\end{align}
Adding \eqref{one-sided-lipschitz-rewritten1}, and \eqref{quad-bounded-rewritten1} with strictly positive scalars $\epsilon_3$ and $\epsilon_4$ to the right-hand side of \eqref{Ydot1}:
\begin{align}
    \begin{split}
        \dot{Y} \leq \mb{f}^\top \bar{E}_f^\top P\mb{e} + \mb{e}^\top P\bar{E}_f\mb{f} + \mb{e}^\top(\bar{A}^\top P + P\bar{A})\mb{e}  \\
        + \Phi^\top P\mb{e} + \epsilon_3 \rho \mb{e}^\top\mb{e} + \mb{e}^\top P\Phi  - \epsilon_3 \Phi^\top \mb{e} \\
        + \epsilon_4 \delta\:\mb{e}^\top\mb{e} + \epsilon_4 \varphi\:\mb{e}^\top\Phi - \epsilon_4 \Phi^\top\Phi.
    \end{split}\label{upper-bound-Ydot}
\end{align}
Conducting a similar derivation for constraint \eqref{sens_cond} with Schur complements and elementary manipulations yields the matrix inequality \eqref{L2}.
\end{proof}
Notice that matrices in \eqref{L1} and \eqref{L2} are nonlinear in terms of $L$, $P$ and $Q$. However, if we set $Y=PL$, then \eqref{L1} and \eqref{L2} become LMIs in $Y$. The matrix gain $L$ can be obtained using $L=P^{-1}Y$, once the problem is solved. Hereafter, we refer to our proposed observer design as the OL observer or OL-QB observer.

\begin{figure}
\centering
  \begin{subfigure}{0.23\textwidth}
    \centering
    \includegraphics[width=1.0\linewidth]{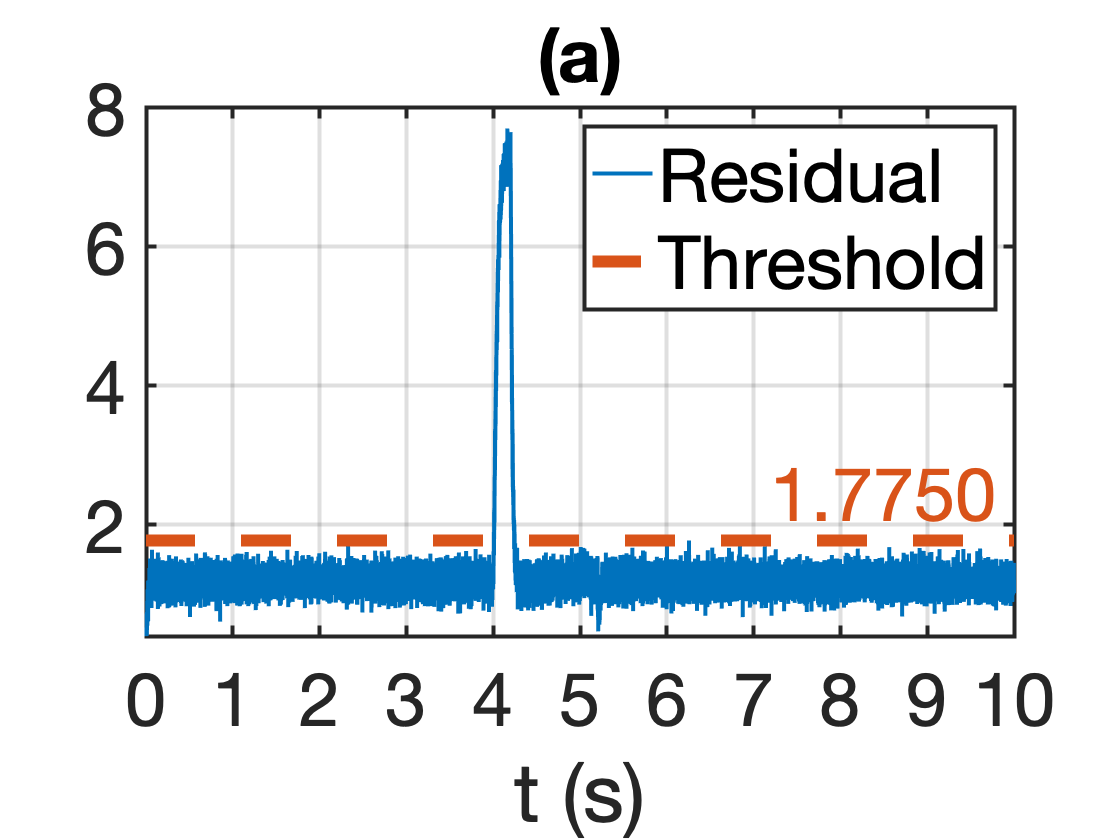}
  \end{subfigure}%
  \begin{subfigure}{0.23\textwidth}
    \centering
    \includegraphics[width=1.0\linewidth]{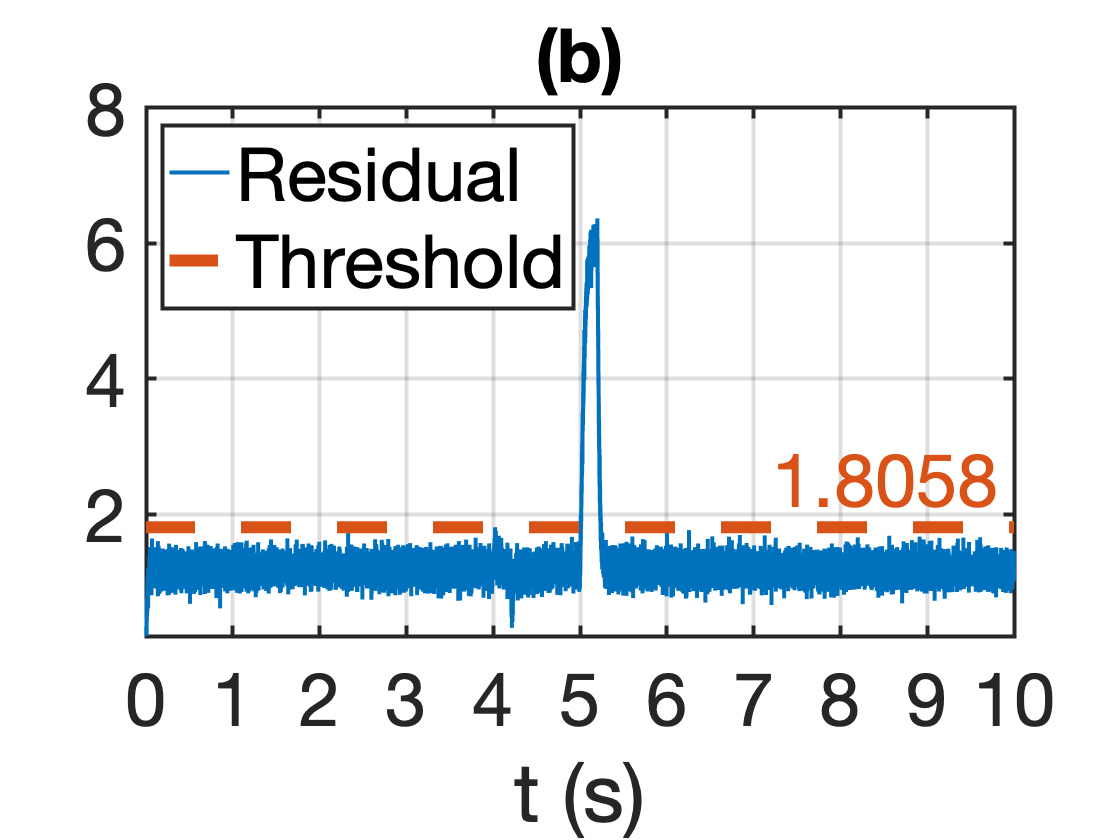}
  \end{subfigure}%
  
  \begin{subfigure}{0.23\textwidth}
    \centering
    \includegraphics[width=1.0\linewidth]{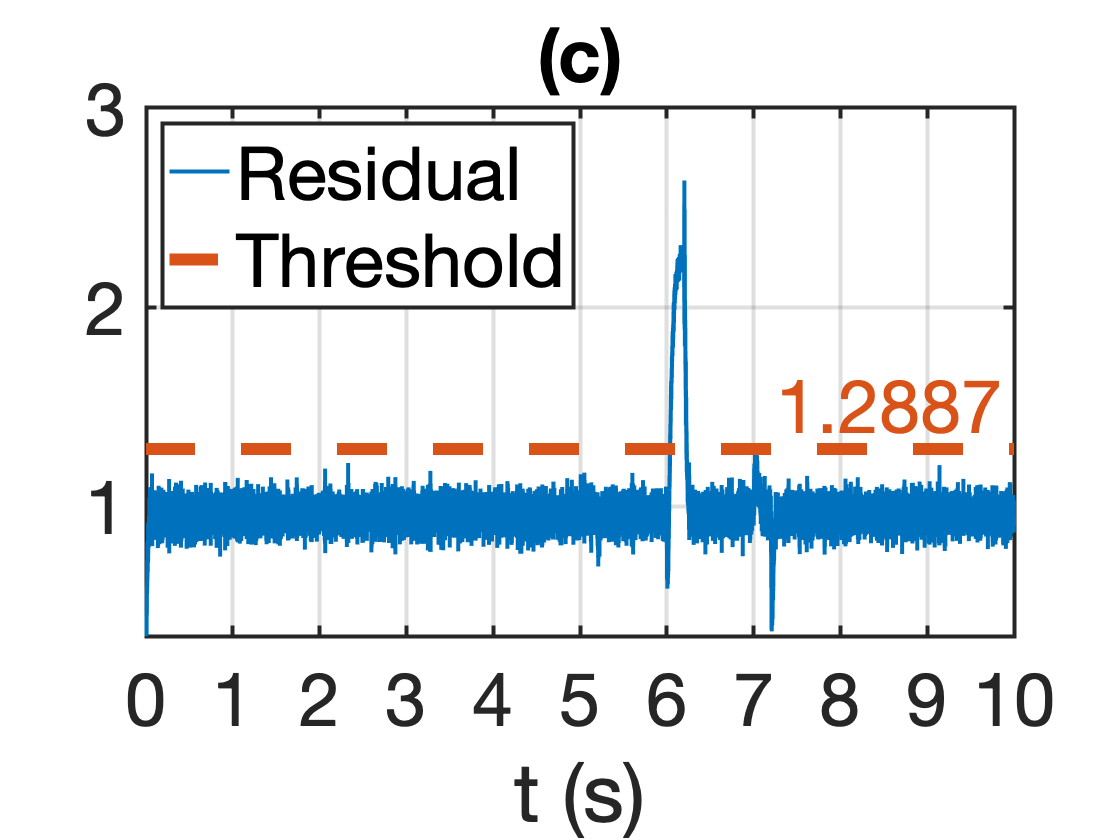}
  \end{subfigure}%
  \begin{subfigure}{0.23\textwidth}
    \centering
    \includegraphics[width=1.0\linewidth]{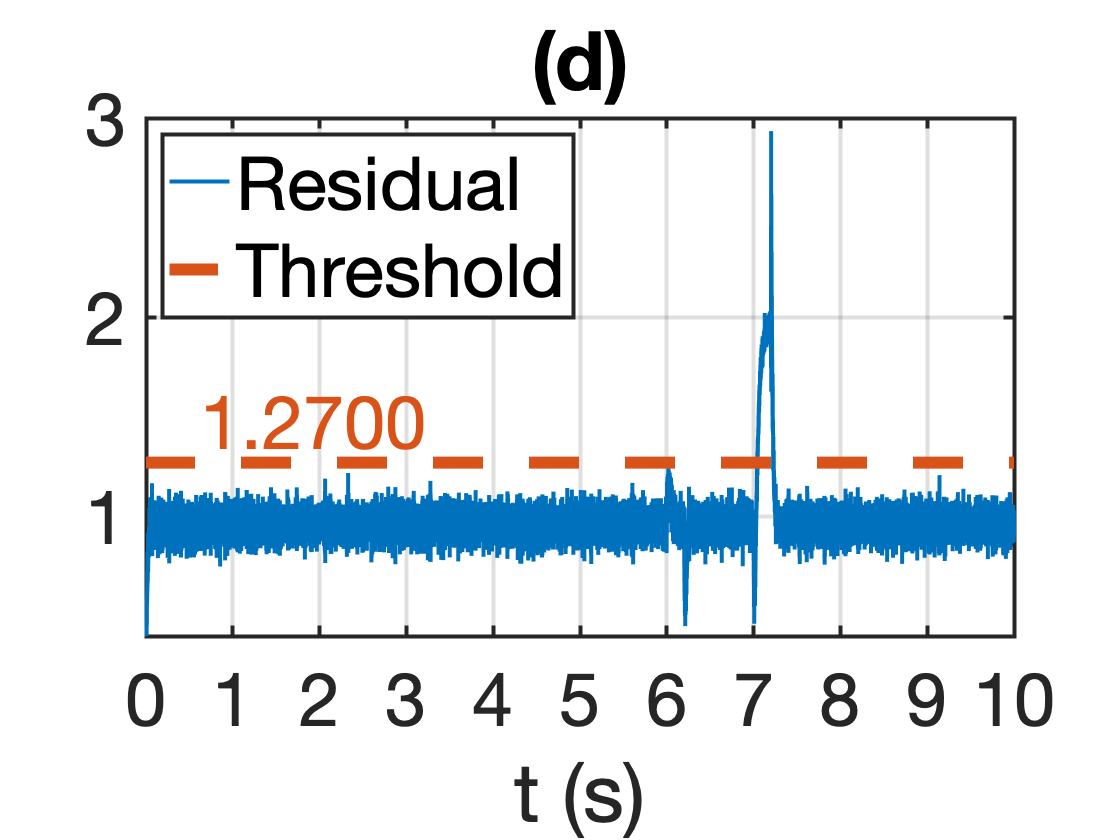}
  \end{subfigure}%
\caption{Residual norms under busbar fault at the PCC using a one-sided Lipschitz observer: (a) GFM $\#1$ at $t=4$, (b) GFM $\#2$ at $t=5$, (c) GFM $\#3$ at $t=6$, and (d) GFM $\#4$ at $t=7$. The faults are cleared after $0.2$ seconds.} 
\label{one-sided-lipschitz-3ph}
\vspace{-0.2cm}
\end{figure}

\begin{figure}
\centering
  \begin{subfigure}{0.23\textwidth}
    \centering
    \includegraphics[width=1.0\linewidth]{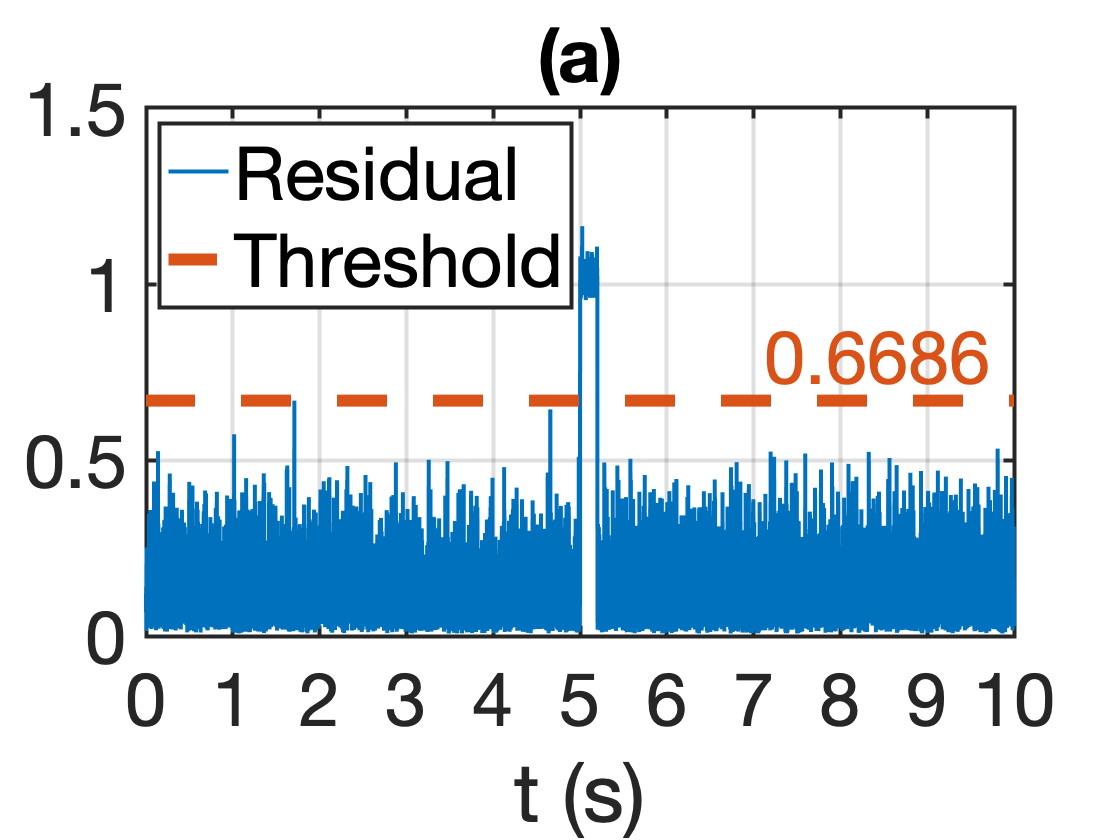}
  \end{subfigure}%
  \begin{subfigure}{0.23\textwidth}
    \centering
    \includegraphics[width=1.0\linewidth]{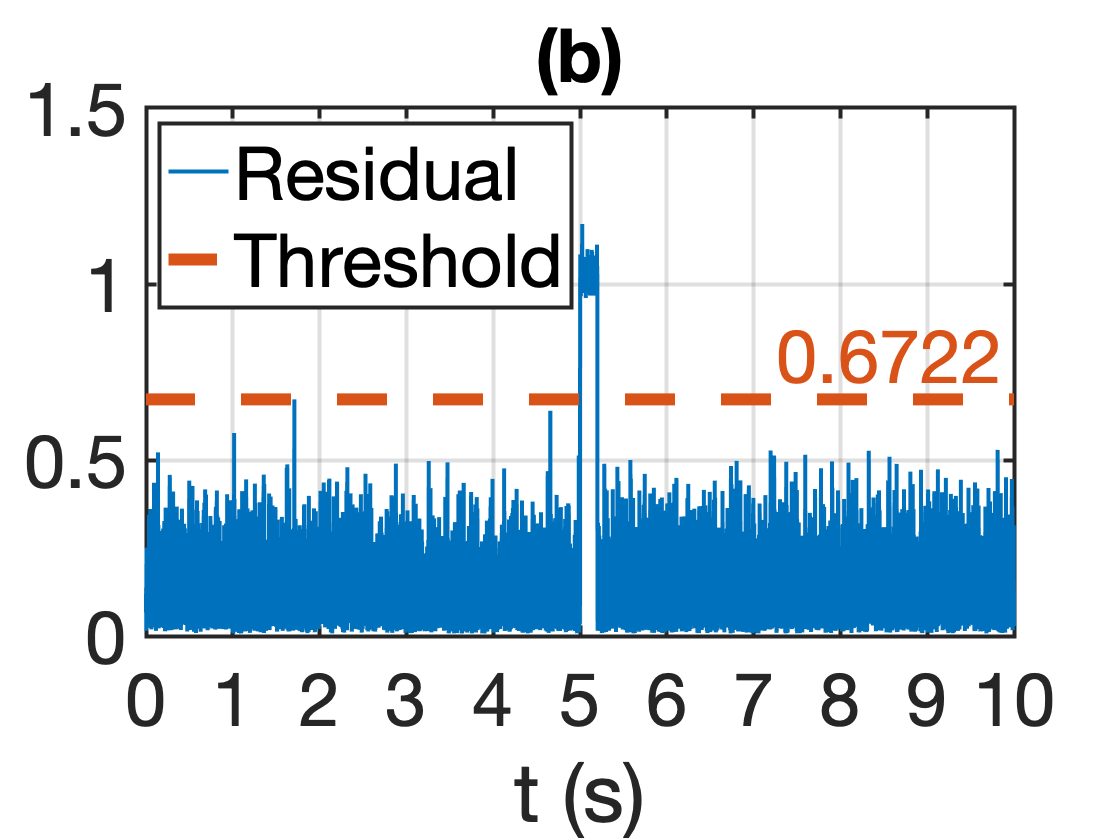}
  \end{subfigure}%
  
  \begin{subfigure}{0.23\textwidth}
    \centering
    \includegraphics[width=1.0\linewidth]{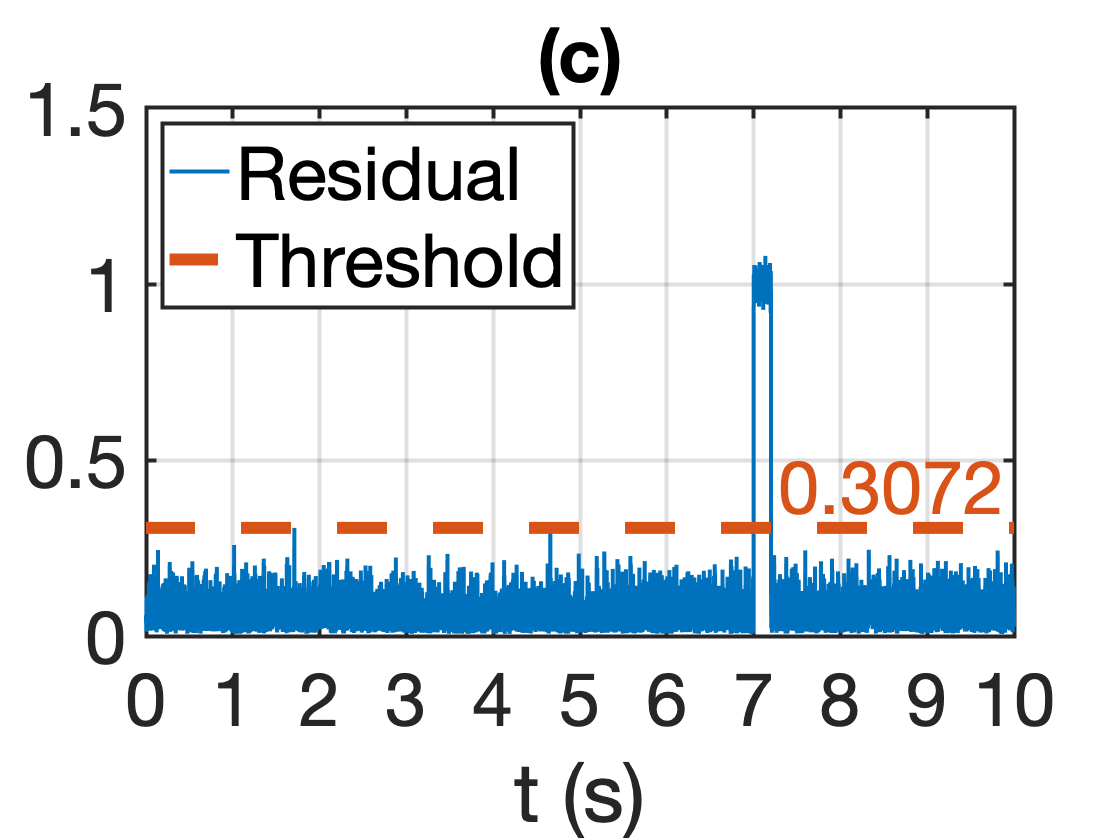}
  \end{subfigure}%
  \begin{subfigure}{0.23\textwidth}
    \centering
    \includegraphics[width=1.0\linewidth]{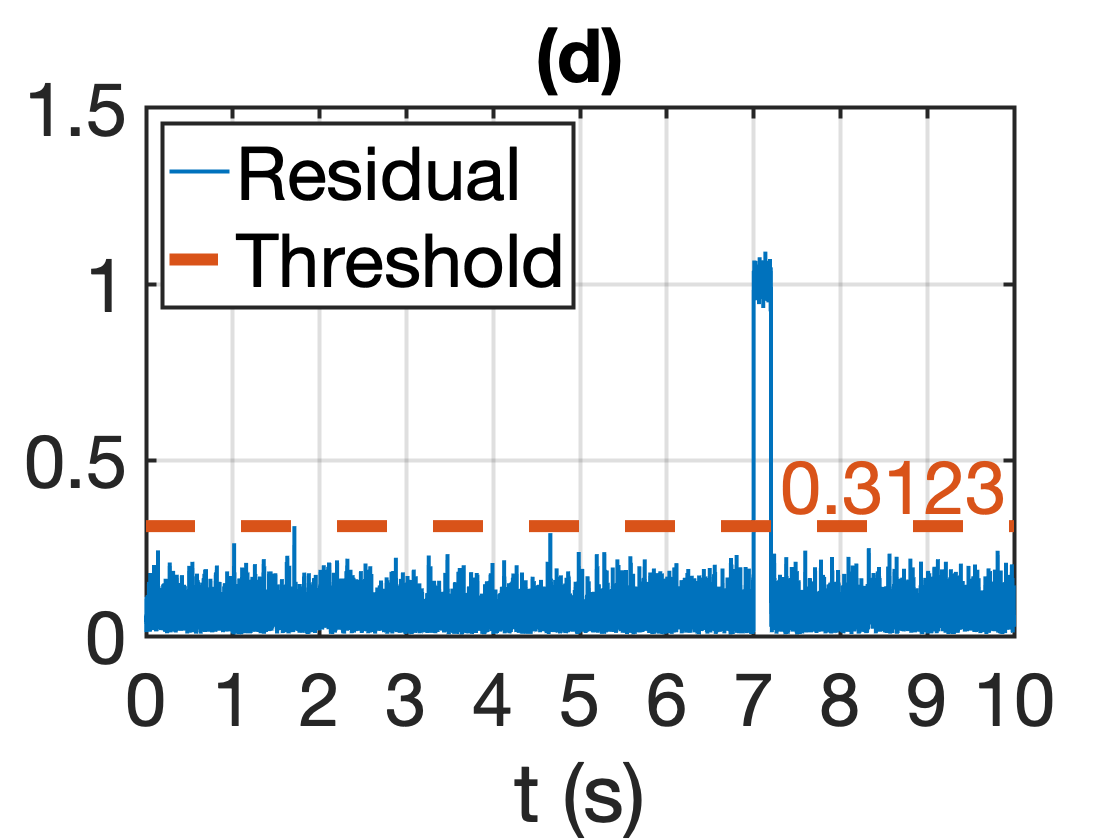}
  \end{subfigure}%
\caption{Residual norms under an actuator fault corresponding to input $\omega_{ni}$ using a one-sided Lipschitz observer (left column) and Lipschitz observer (right column): (a) and (b) GFM $\#2$ at $t=5$; (c) and (d) GFM $\#4$ at $t=7$. The fault is cleared after $0.2$ seconds.} 
\label{wni}
\vspace{-0.5cm}
\end{figure}

\begin{figure}
\centering
  \begin{subfigure}{0.23\textwidth}
    \centering
    \includegraphics[width=1.0\linewidth]{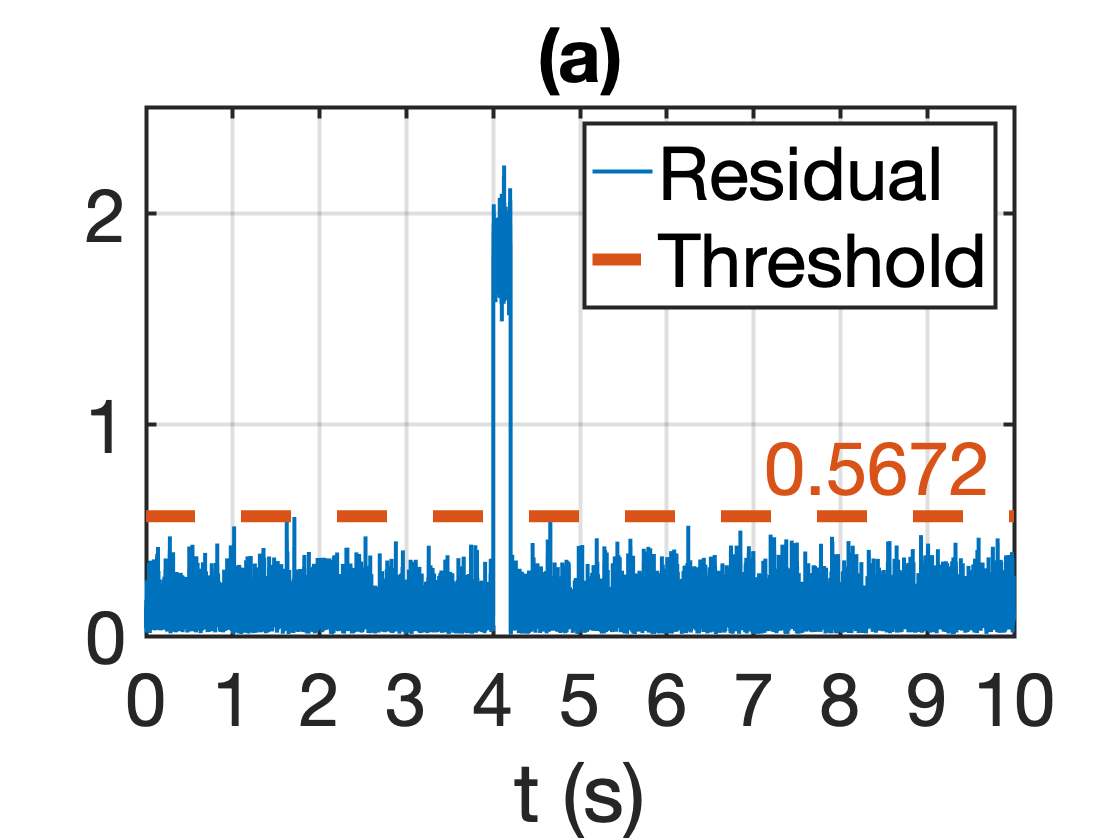}
  \end{subfigure}%
  \begin{subfigure}{0.23\textwidth}
    \centering
    \includegraphics[width=1.0\linewidth]{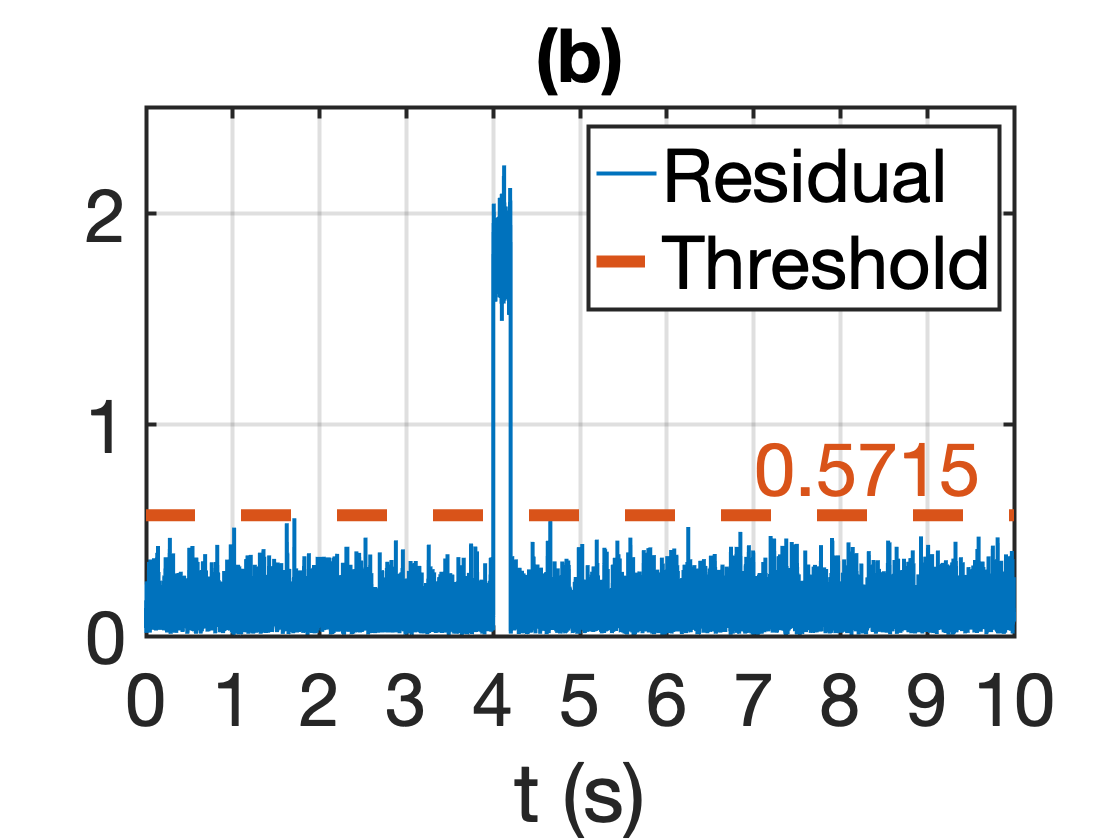}
  \end{subfigure}%
  
  \begin{subfigure}{0.23\textwidth}
    \centering
    \includegraphics[width=1.0\linewidth]{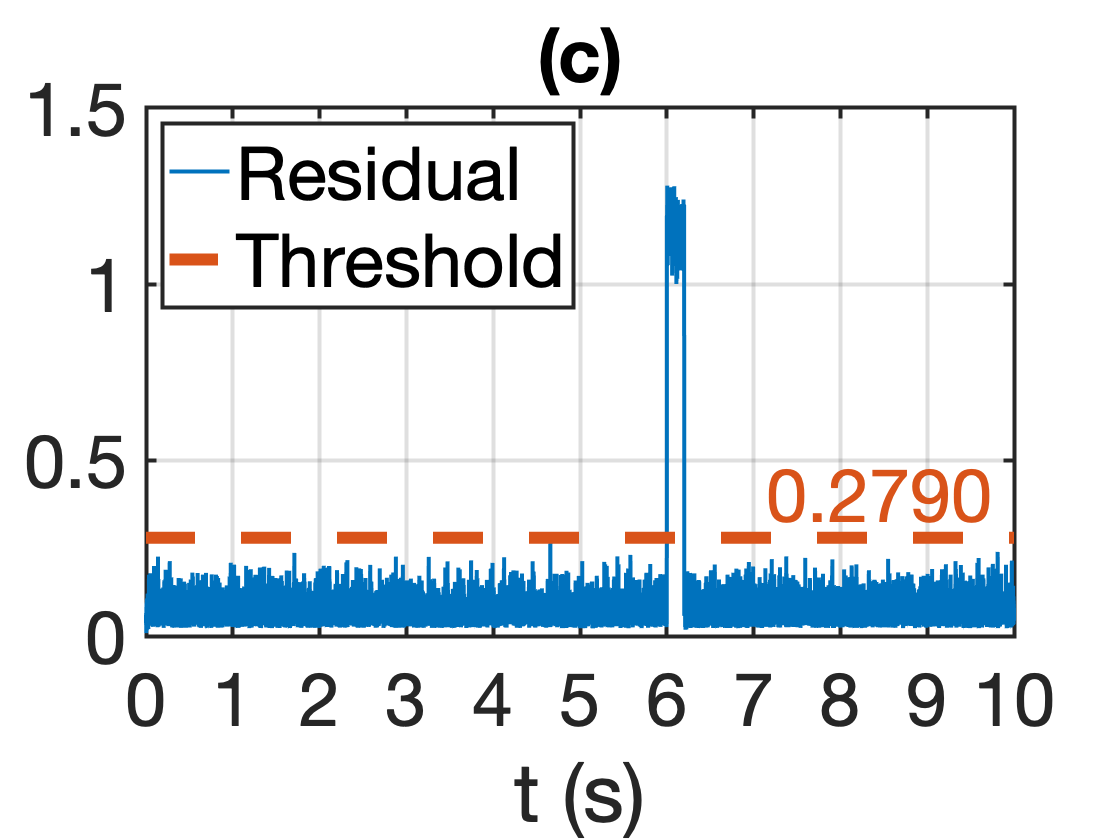}
  \end{subfigure}%
  \begin{subfigure}{0.23\textwidth}
    \centering
    \includegraphics[width=1.0\linewidth]{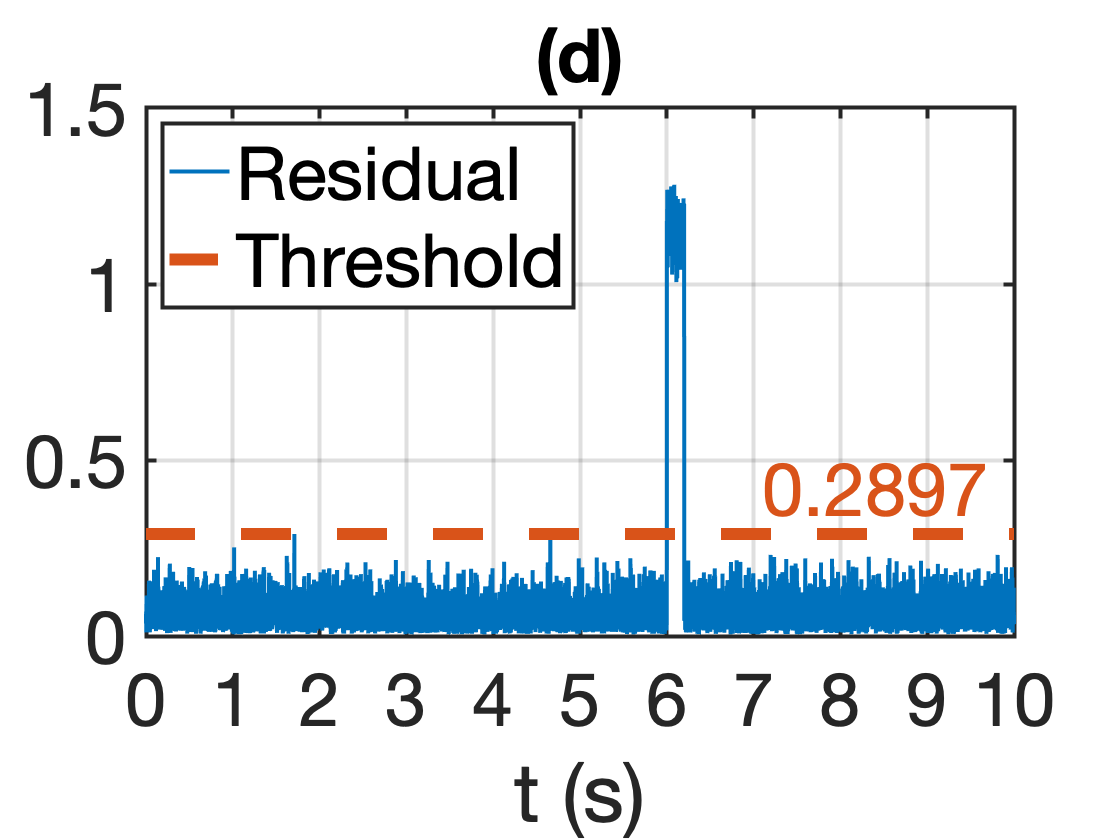}
  \end{subfigure}%
\caption{Residual norms under an actuator fault corresponding to input $V_{ni}$ using a one-sided Lipschitz observer (left column) and Lipschitz observer (right column): (a) and (b) GFM $\#1$ at $t=4$; (c) and (d) GFM $\#3$ at $t=6$. The fault is cleared after $0.2$ seconds.} 
\label{Vni}
\vspace{-0.3cm}
\end{figure}

\begin{figure}
\centering
  \begin{subfigure}{0.23\textwidth}
    \centering
    \includegraphics[width=1.0\linewidth]{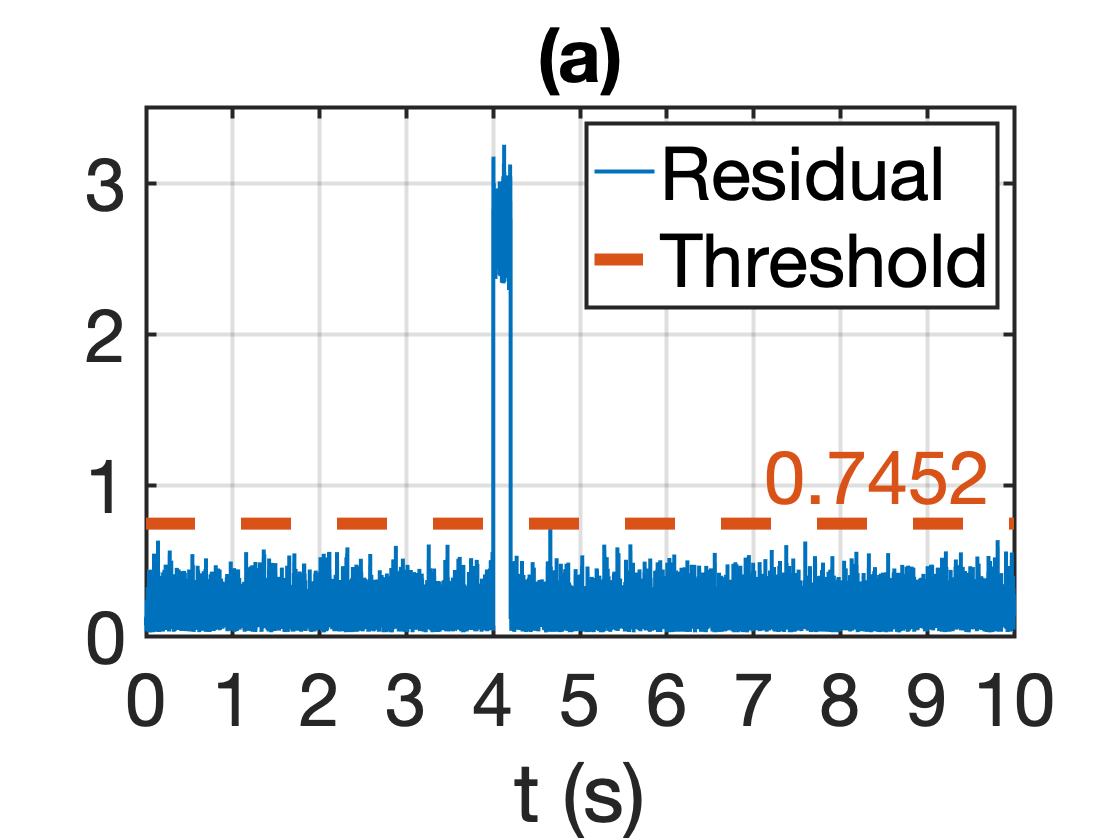}
  \end{subfigure}%
  \begin{subfigure}{0.23\textwidth}
    \centering
    \includegraphics[width=1.0\linewidth]{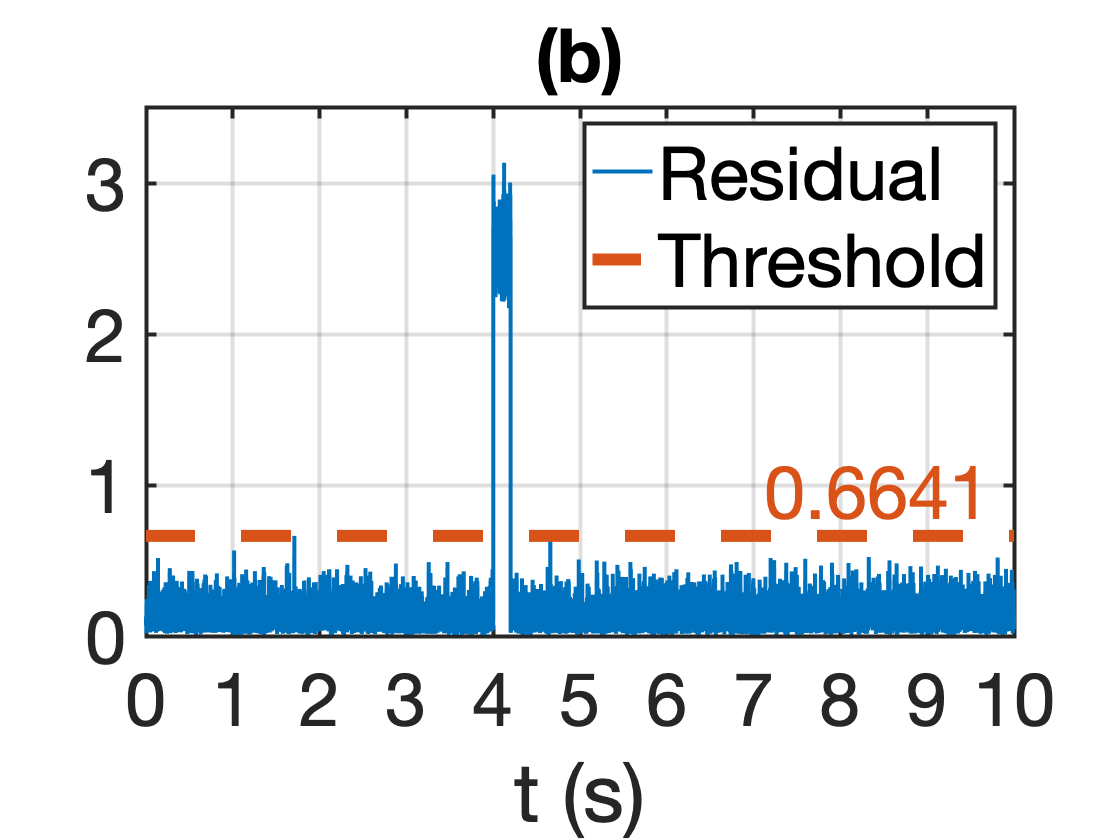}
  \end{subfigure}%
  
  \begin{subfigure}{0.23\textwidth}
    \centering
    \includegraphics[width=1.0\linewidth]{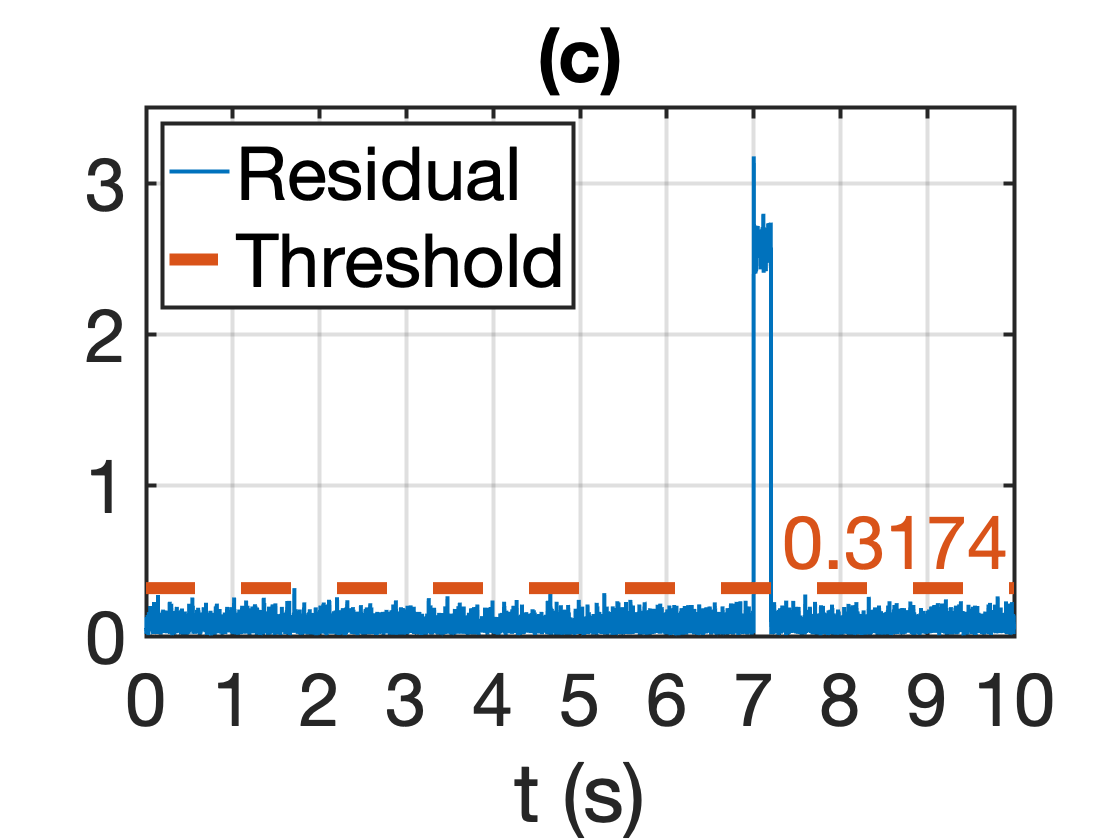}
  \end{subfigure}%
  \begin{subfigure}{0.23\textwidth}
    \centering
    \includegraphics[width=1.0\linewidth]{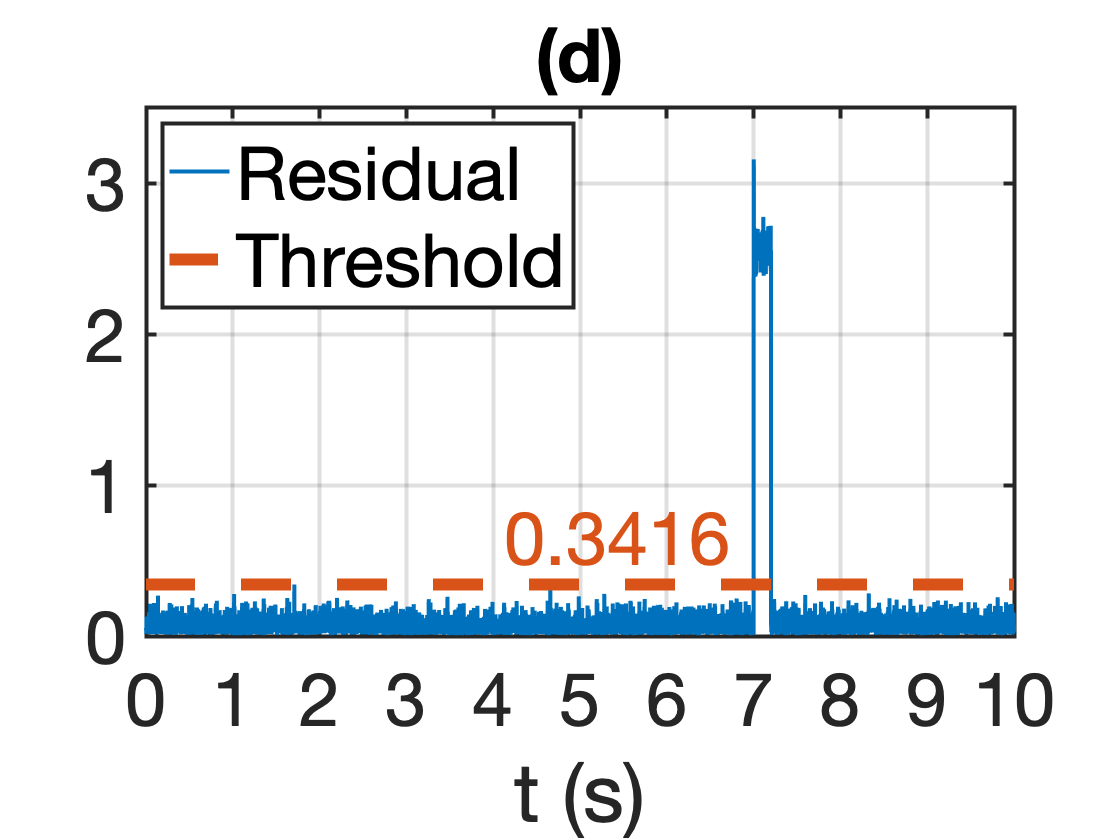}
  \end{subfigure}%
\caption{Residual norms under an inverter bridge fault (sudden efficiency reduction) using a one-sided Lipschitz observer (left column) and Lipschitz observer (right column): (a) and (b) GFM $\#1$ at $t=4$; (c) and (d) GFM $\#4$ at $t=7$. The fault is cleared after $0.2$ seconds.} 
\label{bridge}
\vspace{-0.5cm}
\end{figure}

\begin{figure}
\centering
\begin{tikzpicture}[font=\scriptsize,node distance=0.5cm,thick]
\node[draw,
    rounded rectangle,
    minimum width=1.5cm,
    minimum height=0.5cm] (block1) {START};
 
\node[draw,
    below=of block1,
    minimum width=3.5cm,
    minimum height=0.5cm,
    align=center
] (block2) {Compute the OL and QB constants according to \\ Definitions \ref{one-sided-definition} and \ref{qib-definition}};

\node[draw,
    below=of block2,
    minimum width=3.5cm,
    minimum height=0.5cm
] (block3) {Select the fault vector $\mb{f}$ and matrix $E_f$ according to Section \ref{inverter-model}};

\node[draw,
    below=of block3,
    minimum width=3.5cm,
    minimum height=0.5cm,
    align=center
] (block4) {Obtain the observer gain matrix $L$ by solving \\ the feasibility program  presented in Theorem \ref{one-sided-lipschitz-theorem}};

\node[draw,
    align=center,
    below=of block4,
    minimum width=3.5cm,
    minimum height=0.5cm
] (block5) {Compute the threshold $J_{th}$ according to \eqref{residual-threshold}};

\node[draw,
    align=center,
    below=of block5, 
    minimum width=3.5cm,
    minimum height=0.5cm
] (block6) {Compute the residual norm $\mb{r}$ according to \eqref{error-dynamics-2}};

\node[draw,
    diamond,
    below=of block6,
    minimum width=2.0cm,
    inner sep=0] (block7) {$\|\mb{r}\|_2 \leq J_{th}$};

\node[draw,
    align=center,
    below=of block7,
    minimum width=3.5cm,
    minimum height=0.5cm
] (block8) {Trigger fault alarm};

\node[draw,
    diamond,
    below=of block8,
    minimum width=2.5cm,
    inner sep=0] (block9) {Is the fault cleared?};

\draw[-latex] (block1) edge (block2)
    (block2) edge (block3)
    (block3) edge (block4)
    (block4) edge (block5)
    (block5) edge (block6)
    (block6) edge (block7)
    (block8) edge (block9);

\draw[-latex] 
    (block7) edge node[pos=0.4,fill=white,inner sep=2pt]{No}(block8)
    (block7) -| ++(-3.55,0) node[pos=0.1,fill=white,inner sep=0]{Yes} |- (block6);

\draw[-latex] 
    (block9) -| ++(-3,0) node[pos=0.2,fill=white,inner sep=2pt]{No} |- (block8);

\draw[-latex] 
    (block9.east) -| ++(1.8,0) node[pos=0.2,fill=white,inner sep=0]{Yes} |- (block6.east);

\end{tikzpicture}
\caption{Flowchart of the proposed fault detection algorithm.}
\label{flowchart}
\vspace{-0.4cm}
\end{figure}

\subsection{Relation between Lipschitz and one-sided Lipschitz observer design for nonlinear systems}
The relation between {Lemma \ref{lipschitz-proposition}}  and {Theorem \ref{one-sided-lipschitz-theorem}} is established in {Theorem \ref{relationship}}, which indicates that the latter is less conservative.
\begin{theorem}\label{relationship}
Assume the function $\boldsymbol{\phi}(\mb{x},\mb{u})$ is nonlinear and Lipschitz continuous with $\gamma$ as its Lipschitz constant, and there exist the gain matrices $L$, $P$, $Q$, non-negative scalars $\varepsilon_1$, $\varepsilon_2$ such that the inequalities \eqref{L1-lipschitz} and \eqref{L2-lipschitz} hold. Then, there exist non-negative scalars $\epsilon_1$, $\epsilon_2$, $\epsilon_3$, $\epsilon_4$, real scalars $\rho$, $\delta$, $\varphi$, together with the matrices $L$, $P$, and $Q$ such that the inequalities in \eqref{L1} and \eqref{L2} are satisfied.
\end{theorem}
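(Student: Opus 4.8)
The plan is to exploit the fact that Lipschitz continuity is merely a special case of the combined one-sided Lipschitz (OL) and quadratic inner-bounded (QB) framework. Concretely, I would first show that the Lipschitz hypothesis forces $\boldsymbol{\phi}(\cdot)$ to obey \eqref{one-sided-lipschitz} and \eqref{quad-bounded} with an explicit, canonical choice of the constants $\rho$, $\varphi$, $\delta$; I would then select the free multipliers $\epsilon_1,\dots,\epsilon_4$ so that the proposed LMIs \eqref{L1} and \eqref{L2} collapse term-by-term onto the Lipschitz LMIs \eqref{L1-lipschitz} and \eqref{L2-lipschitz}, which hold by assumption with the same $L$, $P$, $Q$.

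For the first step, applying the Cauchy--Schwarz inequality to \eqref{lipschitz-condition} gives $\langle \boldsymbol{\phi}(\mb{x},\mb{u}) - \boldsymbol{\phi}(\hat{\mb{x}},\mb{u}), \mb{x}-\hat{\mb{x}}\rangle \le \gamma \norm{\mb{x}-\hat{\mb{x}}}^2$, so \eqref{one-sided-lipschitz} holds with $\rho = \gamma$. Squaring \eqref{lipschitz-condition} directly yields $\norm{\boldsymbol{\phi}(\mb{x},\mb{u}) - \boldsymbol{\phi}(\hat{\mb{x}},\mb{u})}^2 \le \gamma^2 \norm{\mb{x}-\hat{\mb{x}}}^2$, which is exactly \eqref{quad-bounded} with $\varphi = 0$ and $\delta = \gamma^2$. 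Because both conditions stem from the same inequality on the same $\boldsymbol{\phi}$, the triple $(\rho,\varphi,\delta) = (\gamma,\,0,\,\gamma^2)$ is simultaneously admissible.

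Next I would set $\epsilon_1 = \epsilon_3 = 0$, $\epsilon_2 = \varepsilon_1$, $\epsilon_4 = \varepsilon_2$, and keep the matrices $L$, $P$, $Q$ unchanged. With $\varphi = 0$ and $\epsilon_1 = 0$, the $(1,3)$ block $P + \tfrac{1}{2}(\epsilon_2\varphi - \epsilon_1)I$ of \eqref{L1} reduces to $P$, the augmentation $(\epsilon_1\rho + \epsilon_2\delta)I$ inside $\Omega_1$ becomes $\varepsilon_1\gamma^2 I$, and the $(3,3)$ block becomes $-\varepsilon_1 I$; the remaining blocks are untouched. Hence \eqref{L1} is literally identical to \eqref{L1-lipschitz}. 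The analogous substitution ($\epsilon_3 = 0$, $\epsilon_4 = \varepsilon_2$) turns \eqref{L2} into \eqref{L2-lipschitz}. Since the Lipschitz LMIs are feasible by hypothesis, the proposed LMIs are satisfied with the same $L$, $P$, $Q$, proving the claim.

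The only delicate point is that the off-diagonal $(1,3)$ block and the diagonal augmentation of the $(1,1)$ block must be matched simultaneously; this is precisely what couples the choices $\epsilon_1 = 0$ and $\varphi = 0$ (more generally, any pair satisfying $\epsilon_2\varphi = \epsilon_1$ would do). Here I rely on the fact that Theorem \ref{relationship} requires the $\epsilon_i$ to be only non-negative rather than strictly positive (unlike Theorem \ref{one-sided-lipschitz-theorem}), so taking $\epsilon_1 = \epsilon_3 = 0$ is legitimate. This establishes that the OL-QB design is feasible whenever the Lipschitz design is, i.e., it is no more conservative; together with the freedom noted earlier to let $\rho$, $\delta$, $\varphi$ take either sign, it indicates that the OL-QB design can in fact be strictly less conservative.
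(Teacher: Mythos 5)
Your proof is correct, and it follows the same high-level strategy as the paper's: show that Lipschitz continuity implies the OL and QB conditions with explicit constants, then choose the multipliers so that \eqref{L1} and \eqref{L2} collapse exactly onto \eqref{L1-lipschitz} and \eqref{L2-lipschitz} with the same $L$, $P$, $Q$. The difference is in the instantiation. The paper keeps $\varphi>0$ as a free parameter, sets $\rho=\pm\gamma$, $\delta = \gamma^2-\varphi\rho$, and matches multipliers via $\epsilon_1=\varepsilon_1\varphi$, $\epsilon_2=\varepsilon_1$ (so that $\epsilon_2\varphi-\epsilon_1=0$ and $\epsilon_1\rho+\epsilon_2\delta=\varepsilon_1\gamma^2$); your choice $(\rho,\varphi,\delta)=(\gamma,0,\gamma^2)$ with $\epsilon_1=\epsilon_3=0$, $\epsilon_2=\varepsilon_1$, $\epsilon_4=\varepsilon_2$ is precisely the $\varphi\to 0$ limit of that parametrization. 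Your version is arguably the cleaner and more rigorous one. The paper's justification of the QB condition runs the inequality chain in the wrong direction: setting $\gamma^2=\delta+\varphi\rho$ with $\rho=\gamma$ and $\varphi>0$ gives $\delta=\gamma^2-\varphi\gamma$, and QB then fails for a function as simple as $\boldsymbol{\phi}(\mb{x},\mb{u})=-\gamma\mb{x}$ (one actually needs $\delta\geq\gamma^2+\varphi\gamma$, since the inner product must be bounded below, not above, when it appears on the larger side); likewise the paper's claim that $\rho$ may be set to $-\gamma$ is not implied by Lipschitz continuity. Your canonical choice $\varphi=0$, $\delta=\gamma^2$ follows directly from squaring \eqref{lipschitz-condition} and is immune to both issues. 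One small caveat in your last paragraph: the generalization ``any pair with $\epsilon_2\varphi=\epsilon_1$ would do'' inherits the paper's problem, because it additionally requires QB to hold with $\delta=\gamma^2-\varphi\rho$, which it need not; so it is good that your argument does not rely on it. Also, if one wanted the strictly positive $\epsilon_i$ demanded by Theorem \ref{one-sided-lipschitz-theorem} rather than the non-negative ones in the statement, your construction still suffices: the LMIs are strict, so perturbing $\epsilon_1,\epsilon_3$ to small positive values preserves \eqref{L1} and \eqref{L2} by continuity.
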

\begin{proof}
The Lipschitz condition implies the OL and QB conditions \cite{Abbaszadeh2010}. Using the Cauchy-Schwarz inequality and the assumption that $\boldsymbol{\phi}(\mb{x},\mb{u})$ is nonlinear satisfying the Lipschitz condition in $\mathcal{U}\times\mathcal{D}$, we obtain:
\begin{align*}
    \lvert\langle\hat{\mb{x}}-\mb{x},\boldsymbol{\phi}(\hat{\mb{x}},\mb{u})-\boldsymbol{\phi}(\mb{x},\mb{u})\rangle\rvert &\leq \|\hat{\mb{x}}-\mb{x}\| \|\boldsymbol{\phi}(\hat{\mb{x}},\mb{u})-\boldsymbol{\phi}(\mb{x},\mb{u})\|  \notag \\
    & \leq \gamma \|\hat{\mb{x}}-\mb{x}\|^2. 
\end{align*}
The OL constant $\rho$ can be set equal to $\pm \gamma$ based on Definition \ref{one-sided-definition}. Similarly, using Definition \ref{qib-definition} we get:
\begin{align*}
    \|\boldsymbol{\phi}(\hat{\mb{x}},\mb{u})-\boldsymbol{\phi}(\mb{x},\mb{u})\|^2 &\leq \varphi \langle \boldsymbol{\phi}(\mb{x},\mb{u})-\boldsymbol{\phi}(\hat{\mb{x}},\mb{u}), \mb{x}-\hat{\mb{x}} \rangle \notag \\
    &\quad + \delta \|\hat{\mb{x}}-\mb{x}\|^2 \notag \\
    &\leq \left(\delta + \varphi\rho \right)\|\mb{x}-\hat{\mb{x}}\|^2\,.
\end{align*}

We can obtain the value of $\delta$ by setting
$\gamma^2 = \left(\delta + \varphi\rho \right)$. Notice that the previous inequality is valid if $\varphi > 0$. By comparing \eqref{L1} with \eqref{L1-lipschitz}, we establish the following relations:
$\epsilon_1 = \varepsilon_1\varphi$,
$\epsilon_2 = \varepsilon_1$, and
$\gamma^2 = \delta + \varphi\rho$.
Given that $\varphi > 0$, then $\epsilon_1 > 0$. A similar analysis can be done by using the entries of inequalities \eqref{L2} and \eqref{L2-lipschitz}.
\end{proof}
\begin{remark}
One-sided Lipschitz continuity does not imply Lipschitz continuity, because the Lipschitz condition is a two-sided inequality \mbox{\cite{Abbaszadeh2010}}. Hence, the converse of Theorem {\ref{relationship}} does not hold, which implies that Theorem {\ref{one-sided-lipschitz-theorem}} is less conservative than Lemma {\ref{lipschitz-proposition}}.
\end{remark}

Figure~{\ref{flowchart}} presents the flowchart of our proposed fault detection algorithm. The flowchart succinctly summarizes the main steps and illustrates the details of the proposed approach.

\section{Numerical tests}\label{numerical-tests}
We evaluate our proposed observer- and residual-based strategy using the islanded AC microgrid shown in Fig.~\ref{test-system}. The microgrid has four GFMs, three inductive branches, and four RL loads. The parameters of the microgrid, GFMs, branches, and loads are presented in Tables \ref{parameters-gfms} and \ref{parameters-microgrid}, as described in \cite{Bidram2013}. The experiments were performed using MATLAB, Simulink, and the \textit{YALMIP} toolbox with the semidefinite programming solvers \textit{LMILAB}, \textit{MOSEK}, \textit{SeDuMi}, and \textit{SDPT3} to solve optimization problem described by {Lemma} \ref{lipschitz-proposition} and {Theorem} \ref{one-sided-lipschitz-theorem}. The Lipschitz, OL, and QB constants are presented in Table \ref{constants} and computed according to \cite{Qi2018}. For a single type of fault, the fault simulations happen in this order: at $t=4$ for GFM \#1, at $t=5$ for GFM \#2, at $t=6$ for GFM \#3, and at $t=7$ for GFM \#4. All the faults are cleared after $0.2$ seconds. The disturbances are modeled as uncorrelated white Gaussian noise on the process and measurements equations \cite{Shoaib2022}. The disturbance matrices are $E_w = B$, and $F_w = D$ \cite{Shoaib2022}. The faults of other GFMs are also considered disturbances or unknown inputs. For each of our experiments, we use a window of ten seconds to find the highest value of the residual vector $\mb{r}$. Such a value is selected as the corresponding threshold $J_{th}$ at each experiment. The following simulations compare our proposed approach based on the OL and QB conditions against the state-of-the-art Lipschitz design. 


\begin{table}[t]
\caption{Parameters of the grid-forming inverters.}
\begin{center}
\begin{tabular}{c|cc} 
\hline
Parameter           & \multicolumn{2}{c}{Values}                  \\
                    & GFM \#1 {\&} \#2          & GFM \#3 {\&} \#4        \\
\hline
Power rating (kVA)  & 45                    & 34                  \\
Voltage rating (V)  & 380                   & 380                 \\
$m_{Pi}$            & $9.4\times10^{-5}$    & $12.5\times10^{-5}$ \\
$n_{Qi}$	        & $1.3\times10^{-3}$    & $1.5\times10^{-3}$  \\
$R_{ci}\:(\Omega)$ 	& $0.03$                & $0.03$              \\
$L_{ci}\:(mH)$	    & $0.35$                & $0.35$              \\
$R_{fi}\:(\Omega)$  & $0.1$                 & $0.1$               \\ 
$L_{fi}\:(mH)$      & $1.35$                & $1.35$              \\
$C_{fi}\:(\:\mu F)$	& $50$                  & $50$                \\
$K_{PVi}$ 	        & $0.1$                 & $0.05$              \\
$K_{IVi}$ 	        & $420$                 & $390$               \\
$K_{PCi}$ 	        & $15$                  & $10.5$              \\    
$K_{ICi}$           & $20000$               & $16000$             \\
$\omega_b\:(rad/s)$	& $314.16$              & $314.16$            \\
\hline
\end{tabular}
\end{center}
\label{parameters-gfms}
\vspace{-0.2cm}
\end{table}

\begin{table}[t]
\caption{Parameters of the AC microgrid.}
\begin{center}
\begin{tabular}{c|cc} 
\hline
              & \multicolumn{2}{c}{Values}  \\
              & Resistance $(\Omega)$   & Inductance $(\mu H)$ \\
\hline   				
Line 1        & $0.23$                  & $318$ \\
Line 2	      & $0.35$                  & $1847$ \\
Line 3 	      & $0.23$                  & $318$ \\
Load 1 	      & $30$                    & $0.477$ \\
Load 2        & $20$                    & $0.318$ \\ 
Load 3        & $25$                    & $0.318$  \\
Load 4	      & $25$                    & $0.477$  \\
\hline
\end{tabular}
\end{center}
\label{parameters-microgrid}
\vspace{-0.5cm}
\end{table}

\subsection{Threshold and residual computation}\label{threshold-residual-computation}
The threshold is computed over a finite-length time period according to two criteria. The first criterion points out that the longer the time period, the better the estimate of the true value of {\eqref{residual-threshold}}. The second criterion suggests that a more extended time period leads to selecting a threshold that helps reduce the rate of false alarms. In this regard, we recommend choosing the minimum data window to be at least ten times the duration of the faults. Although we make the faults last 0.2 s in this work, we have selected a more conservative data window of 10 s  that ensures satisfying the aforementioned criteria. We have selected a more conservative data window of 10 s for threshold computation that ensures satisfying the aforementioned criteria. Notice that the computed threshold is valid before, during, and after the occurrence of the fault, which leads to not requiring a minimum post-fault data. Besides the threshold, the residual signal computation does not require a data window or minimum post-fault data. Instead, the residual calculation is instantaneous because the residual signal is theoretically obtained from the error dynamics represented by {\eqref{error-dynamics-2}}. Notice that the residual is calculated online before, during, and after the presence of the faults.


\subsection{Droop control}\label{droop-control}
\subsubsection{Busbar fault}
Fig.~\ref{one-sided-lipschitz-3ph} shows the response of the residual norm using an OL-QB observer when a busbar fault occurs at the end of the output connector of all GFMs. The fault impedance considered is 10\% of the nominal voltage magnitude of the busbar. The depicted figure illustrates the sensitivity of the four residuals to the busbar fault, with all residuals measuring less than two units. An important observation is that the residual for an individual GFM remains robust against disturbances and faults occurring at other GFMs' PCC. The fault location can be readily identified due to the distinct sensitivity of each residual to its corresponding busbar fault. The observer accommodates the faults at other busbars as disturbances. GFM \#1 exhibits the smallest detection time of $t_o = 8$ ms, whereas GFM \#4 exhibits the largest around $t_o = 45$ ms. The lowest clearing time $t_c = 30$ ms corresponds to GFM \#3, whereas GFM \#1 exhibits the highest $t_c = 49$ ms.

\subsubsection{Actuator faults}
Actuator faults considering the input signal $\omega_{ni}$ for GFMs \#2 and \#4, and the input signal $V_{ni}$ for GFMs \#1 and \#3 are shown in Figs. \ref{wni} and \ref{Vni} respectively. We notice that the behavior of the residual norms is very similar for both the one-sided Lipschitz and Lipschitz observers. The residual thresholds are a little less for the one-sided Lipschitz observer. Notice that the residual norm of each GFM remains unaffected by the occurrence of the faults in the other GFMs. In other words, the actuator faults are correctly identified among the GFMs. Both observer types present fault detection and clearing times of less than 1 ms.

\subsubsection{Inverter bridge fault}
A similar situation occurs with the inverter bridge fault, as shown in Fig.~\ref{bridge}. The inverter bridge fault corresponds to a reduction in efficiency of the bridge. The residual norm is below the threshold during the absence of the fault, and it stays above the threshold under the presence of the fault. The location of the bridge faults is properly identified because the residual norms are sensitive to the corresponding GFM. Both observers present fault detection and clearing times of less than 1 ms.

\begin{table}[t]
\caption{Lipschitz, OL and QB constants for all the grid-forming inverters.}
\begin{center}
\begin{tabular}{c|cccc} 
\hline
GFM \#       & $\gamma$   & $\rho$      & $\delta$    & $\varphi$ \\
\hline
1 \& 2   & 44.7488    & 22.3688     & -0.7493     & 2.3599 \\
3 \& 4   & 44.7488    & 22.3688     & -0.7535     & 2.3679 \\
\hline
\end{tabular}
\end{center}
\label{constants}
\vspace{-0.2cm}
\end{table}

\begin{table}[t]
\caption{Computational time of OL-QB and Lipschitz observers: Five-fold average values of the mean $\mu$ and the standard deviation $\sigma$ (both in seconds).}
\begin{center}
\begin{tabular}{c|cc|cc} 
\hline
Fault type    & \multicolumn{2}{c|}{OL-QB} & \multicolumn{2}{c}{Lipschitz} \\
       & $\mu$      & $\sigma$       & $\mu$         & $\sigma$ \\
\hline
Three-phase         & 1144      & 7.77        & --         & -- \\ 
$\omega_{ni}$       & 1736      & 3.85        & 4076       & 14.89 \\
$V_{ni}$            & 1472      & 3.11        & 3784     & 12.76 \\
Inverter bridge     & 1686    & 6.91     & 1916     & 6.02 \\
\hline
\end{tabular}
\end{center}
\label{processing-time}
\vspace{-0.7cm}
\end{table}


\subsection{Virtual synchronous machine control}\label{vsm}
\begin{figure}
\centering
  \begin{subfigure}{0.23\textwidth}
    \centering
    \includegraphics[width=1.0\linewidth]{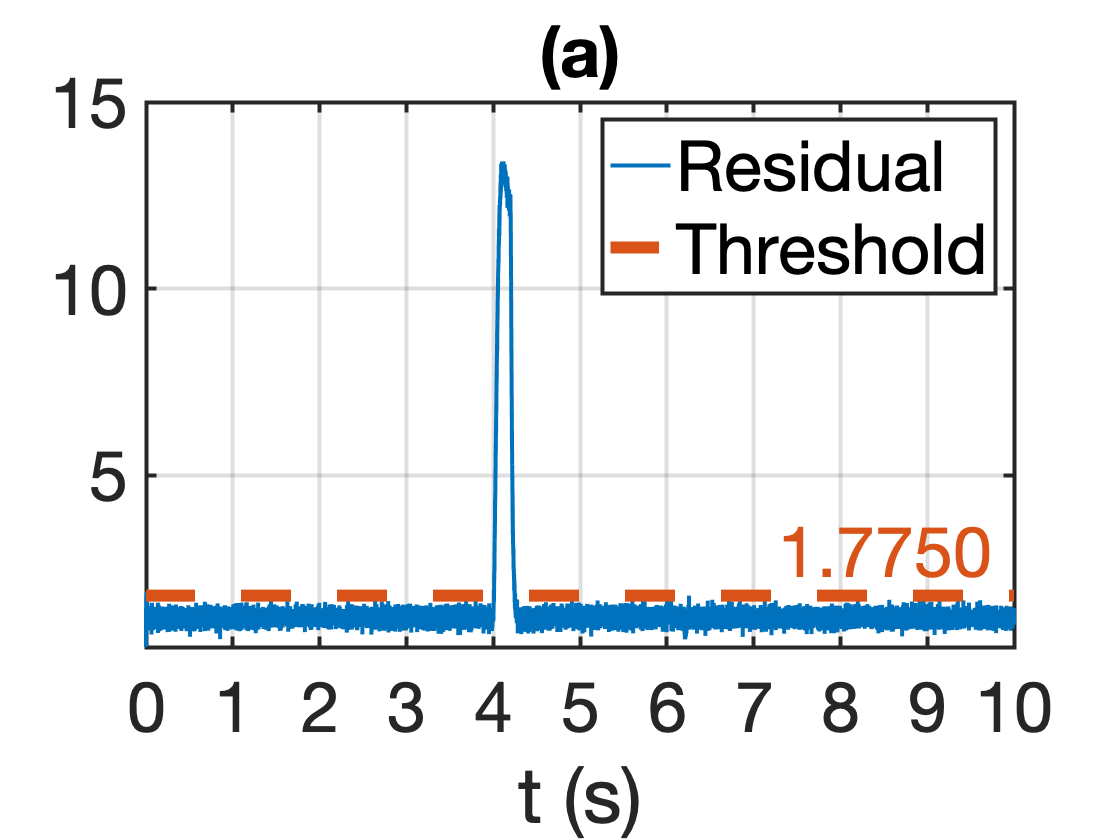}
    \label{diverse-intriago-khan-(a)}
  \end{subfigure}%
  \begin{subfigure}{0.23\textwidth}
    \centering
    \includegraphics[width=1.0\linewidth]{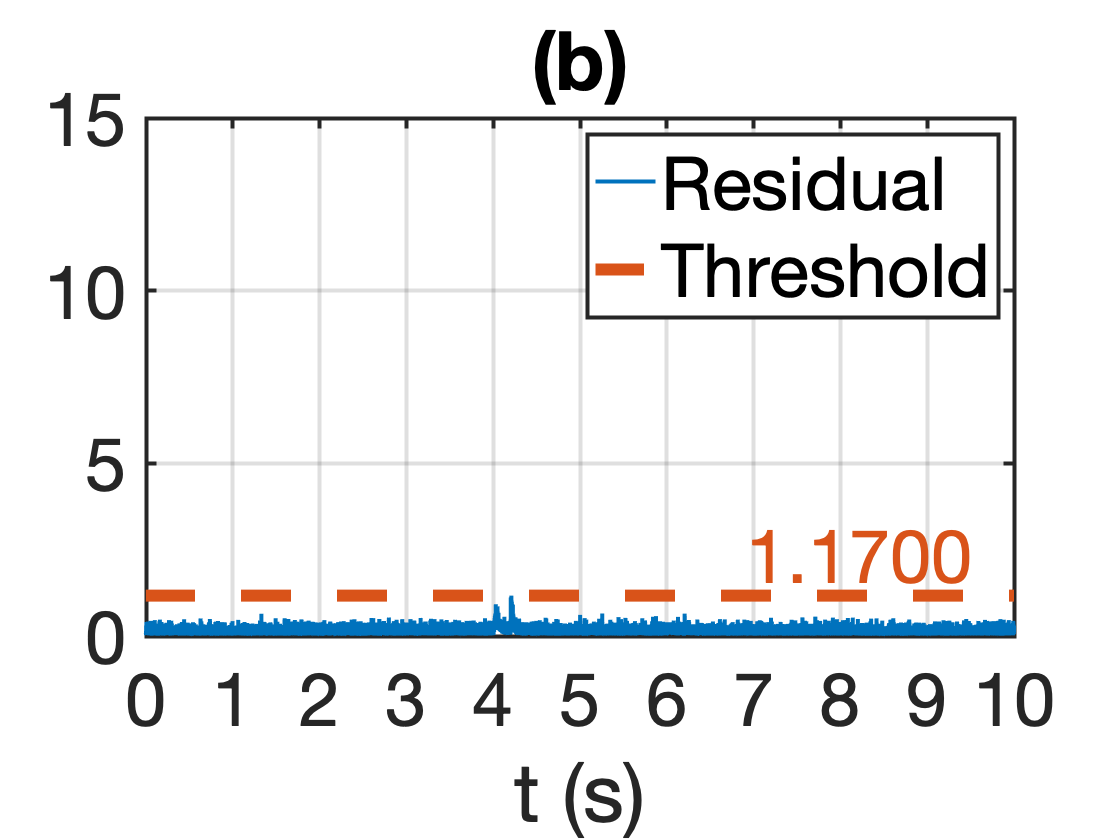}
    \label{diverse-intriago-khan-(b)}
  \end{subfigure}%
  \vspace{-0.3cm}
  \begin{subfigure}{0.23\textwidth}
    \centering
    \includegraphics[width=1.0\linewidth]{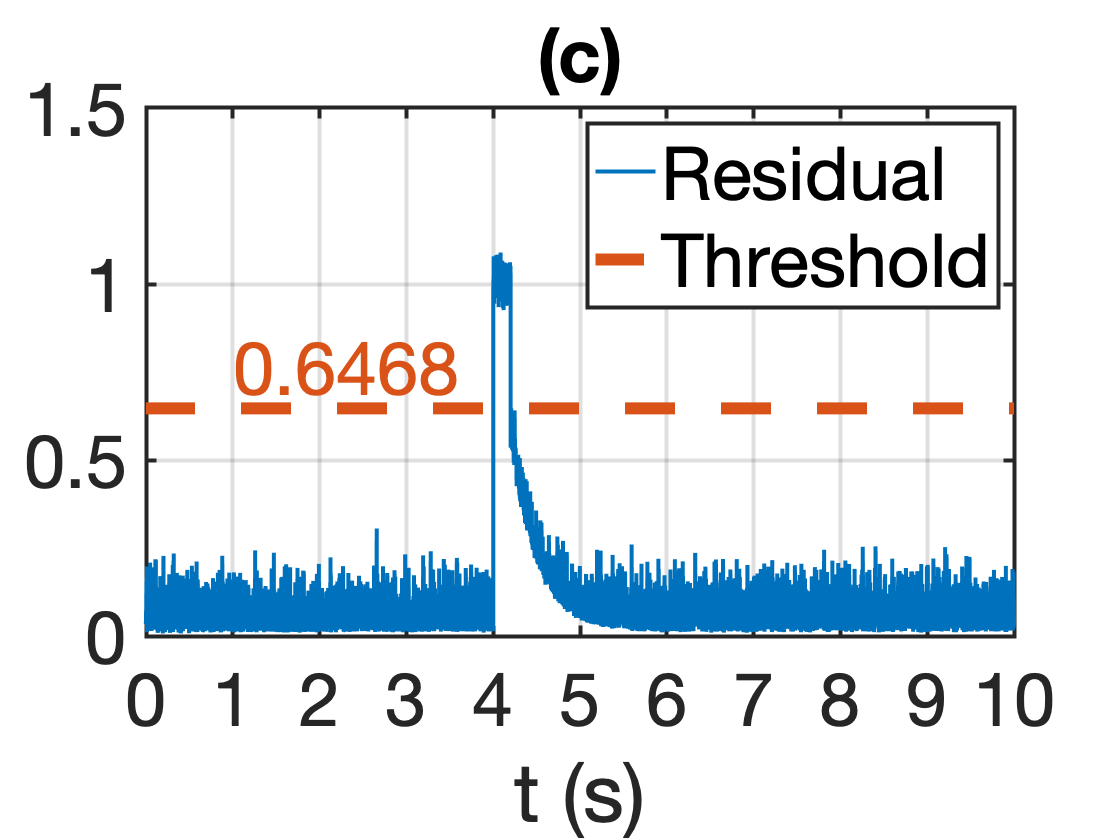}
    \label{diverse-intriago-khan-(c)}
  \end{subfigure}%
  \begin{subfigure}{0.23\textwidth}
    \centering
    \includegraphics[width=1.0\linewidth]{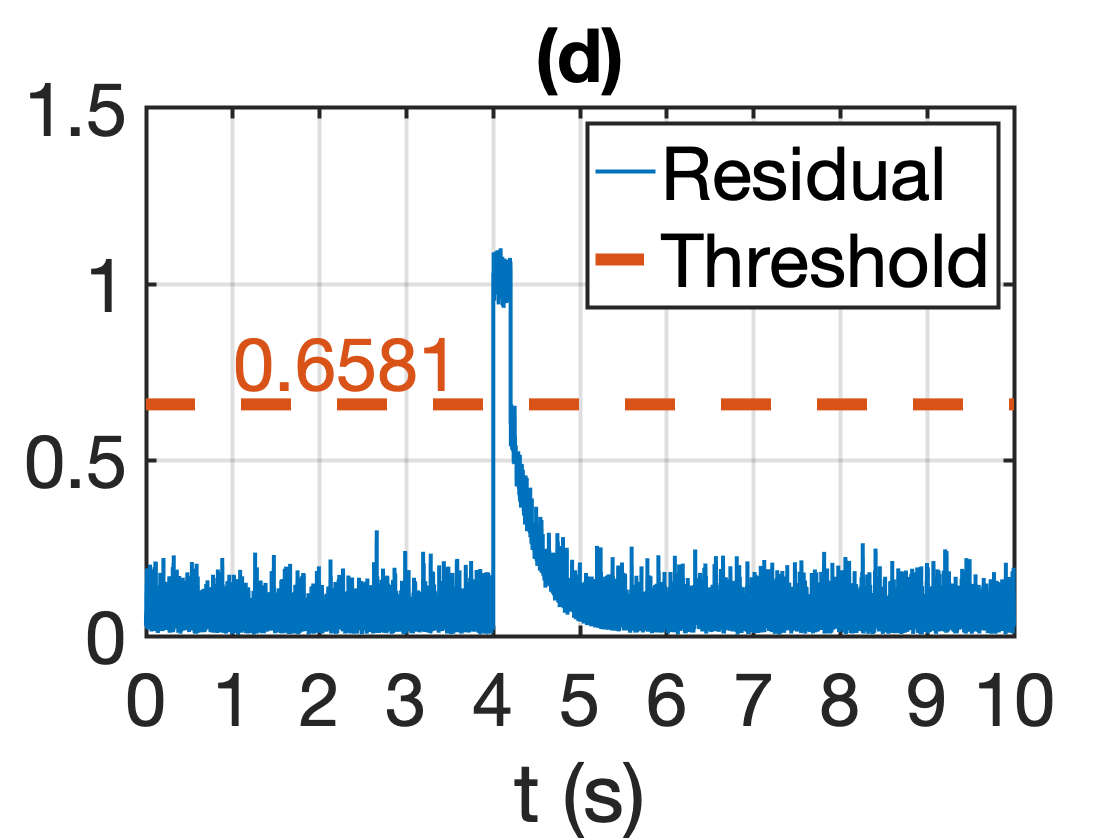}
    \label{diverse-intriago-khan-(d)}
  \end{subfigure}%
  \vspace{-0.3cm}
  \begin{subfigure}{0.23\textwidth}
    \centering
    \includegraphics[width=1.0\linewidth]{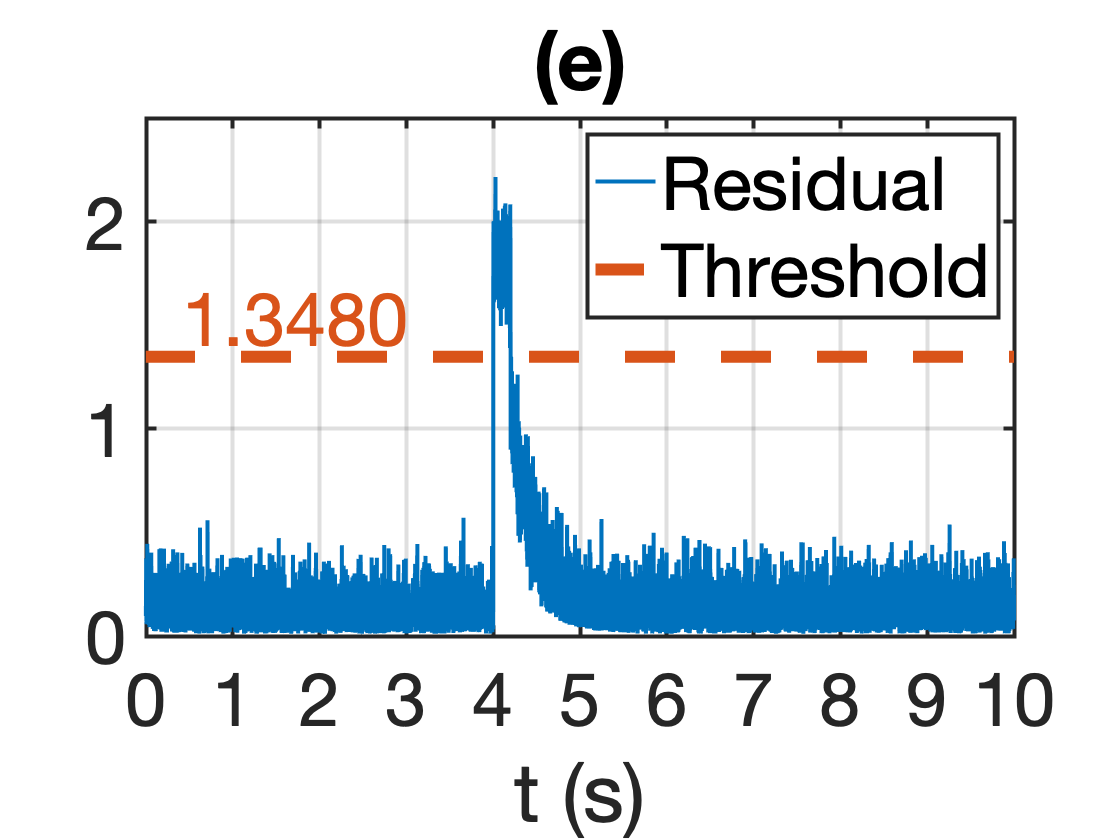}
    \label{diverse-intriago-khan-(e)}
  \end{subfigure}%
  \begin{subfigure}{0.23\textwidth}
    \centering
    \includegraphics[width=1.0\linewidth]{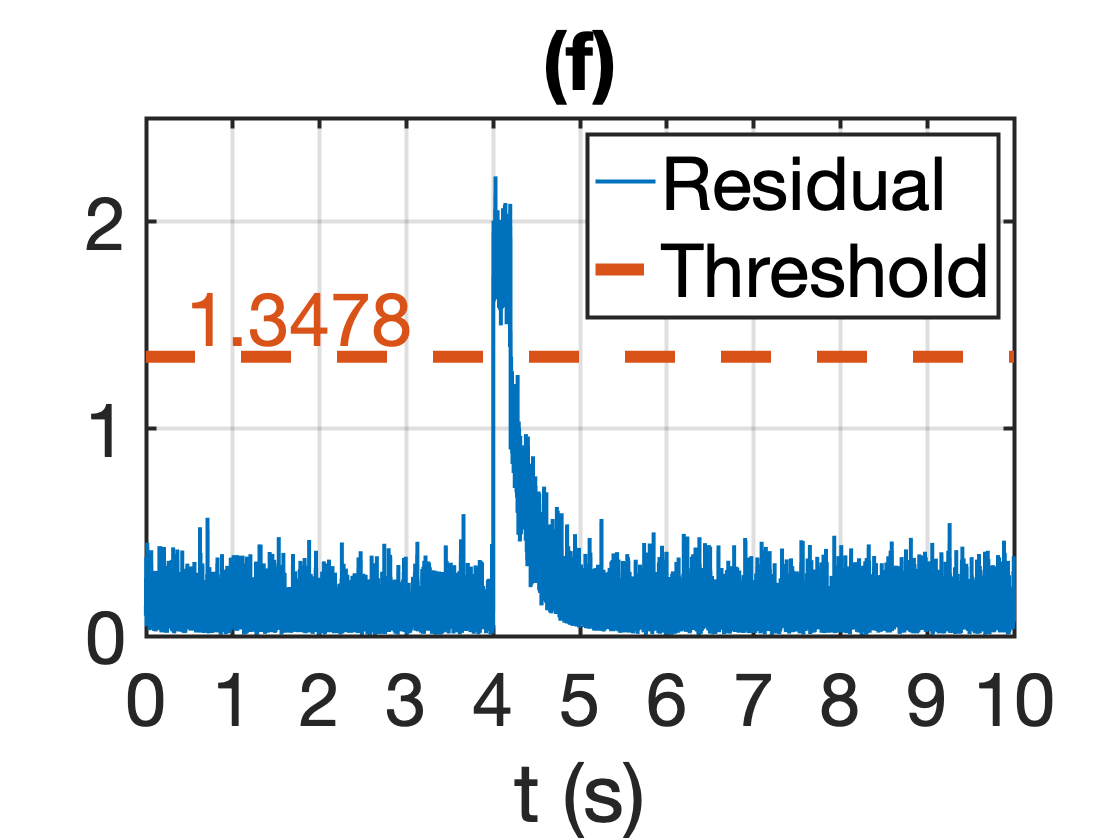}
    \label{diverse-intriago-khan-(f)}
  \end{subfigure}%
  \vspace{-0.3cm}
  \begin{subfigure}{0.23\textwidth}
    \centering
    \includegraphics[width=1.0\linewidth]{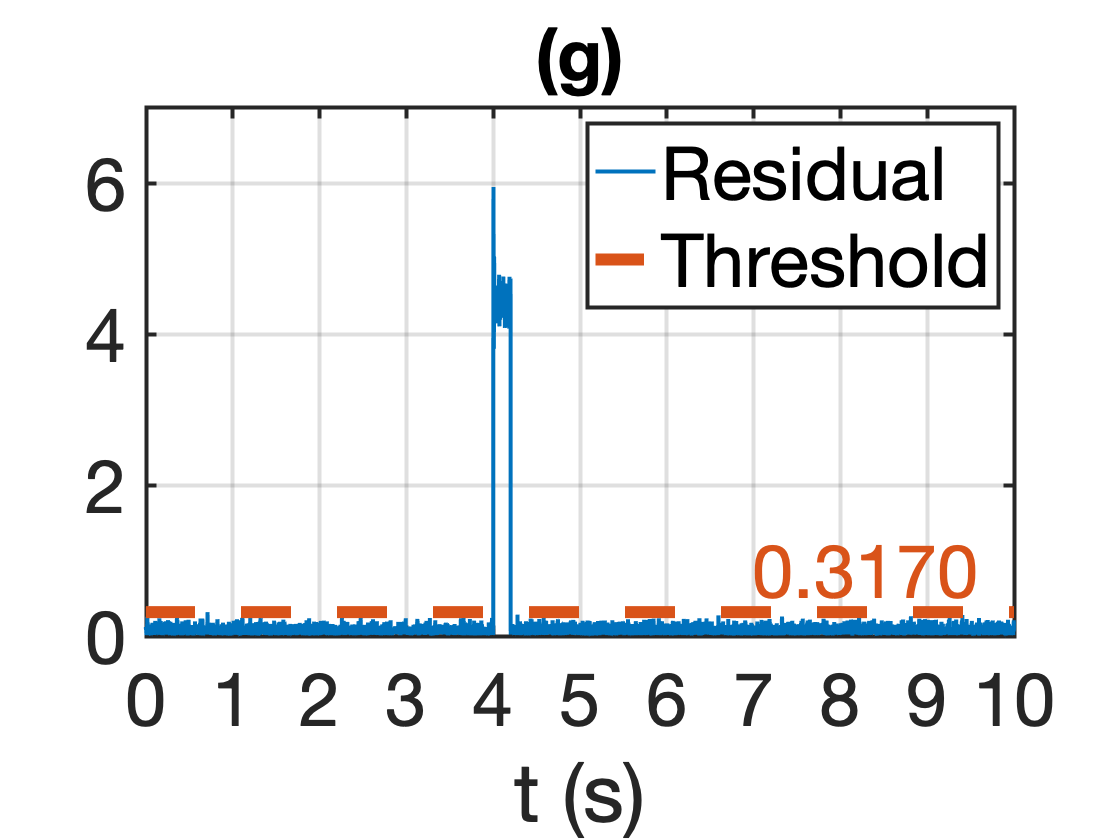}
    \label{diverse-intriago-khan-(g)}
  \end{subfigure}%
  \begin{subfigure}{0.23\textwidth}
    \centering
    \includegraphics[width=1.0\linewidth]{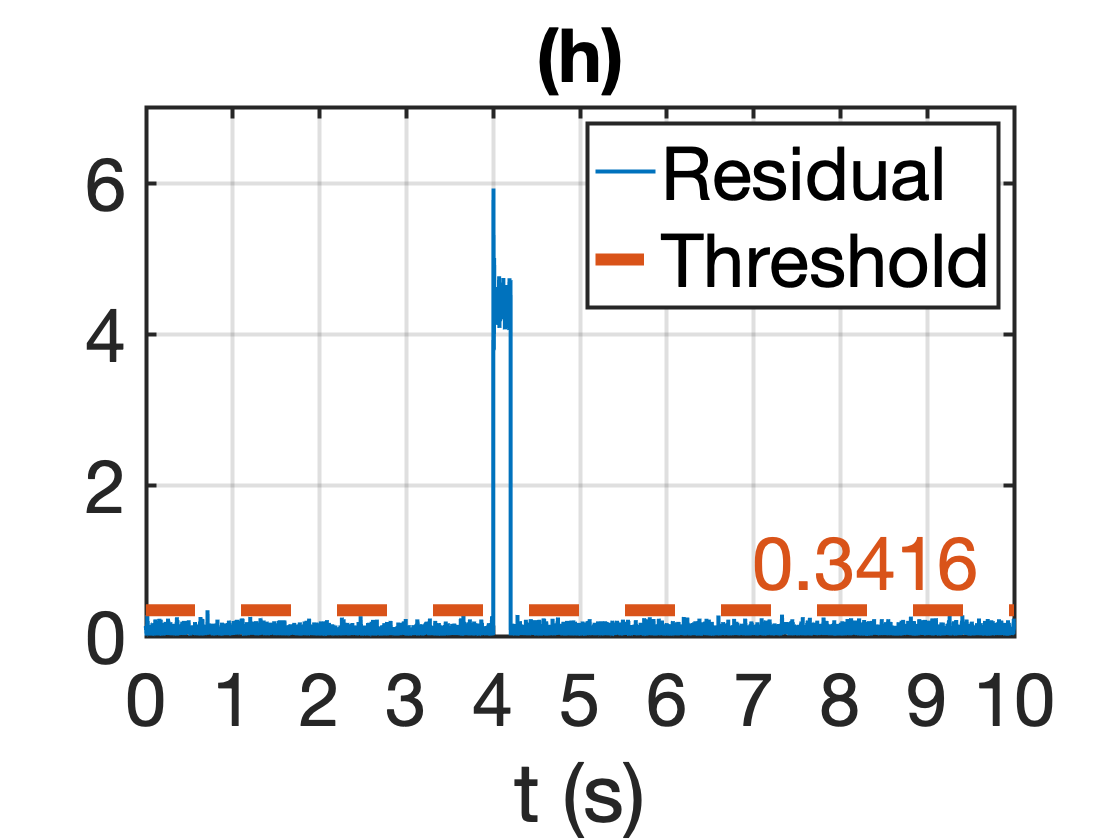}
    \label{diverse-intriago-khan-(h)}
  \end{subfigure}%
  \vspace{-0.4cm}
\caption{Residual norms using an OL-QB observer (left column) and Lipschitz observer (right column) in an AC microgrid with technology mix. (a) and (b) GFM \#1 under a busbar fault; (c) and (d) GFM \#3 under a $\omega_{ni}$ actuator fault; (e) and (f) GFM \#1 under a $V_{ni}$ actuator fault; (g) and (h) GFM \#3 under an inverter bridge fault. The faults last $0.2$ seconds.}
\label{diverse-intriago-khan}
\vspace{-0.4cm}
\end{figure}

\begin{figure}
\centering
  \begin{subfigure}{0.23\textwidth}
    \centering
    \includegraphics[width=1.0\linewidth]{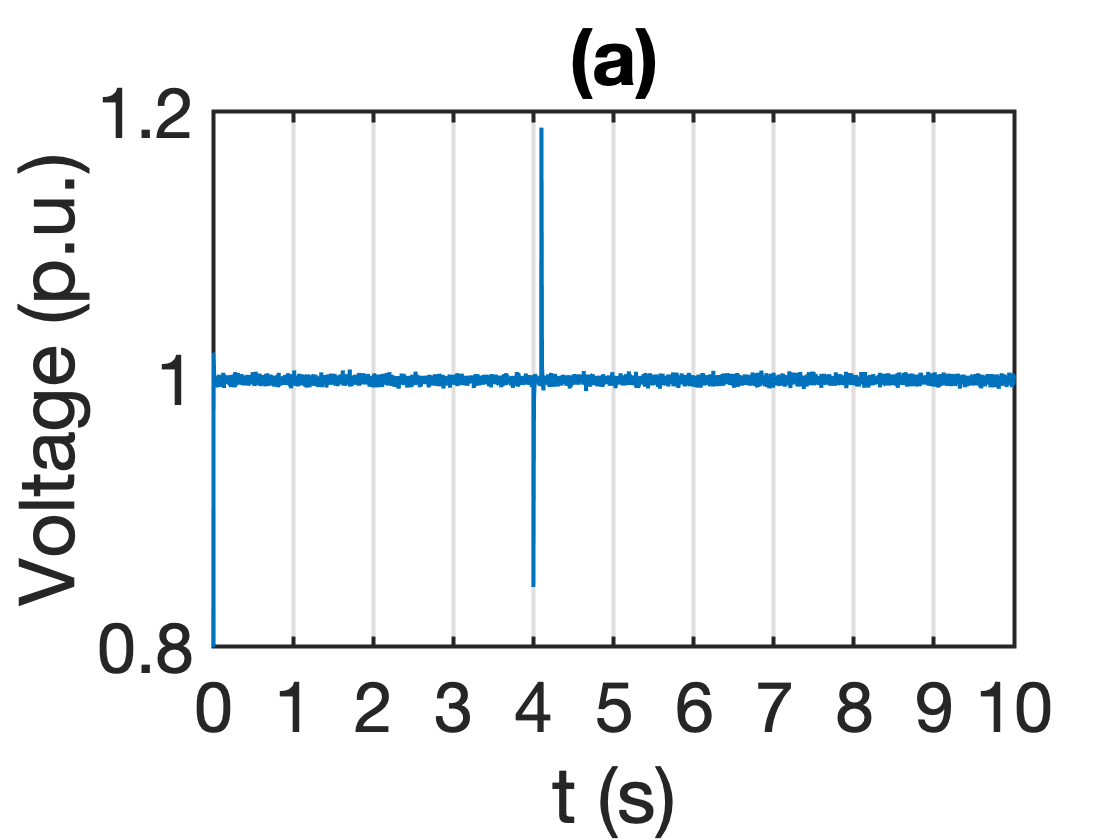}
  \end{subfigure}%
  \begin{subfigure}{0.23\textwidth}
    \centering
    \includegraphics[width=1.0\linewidth]{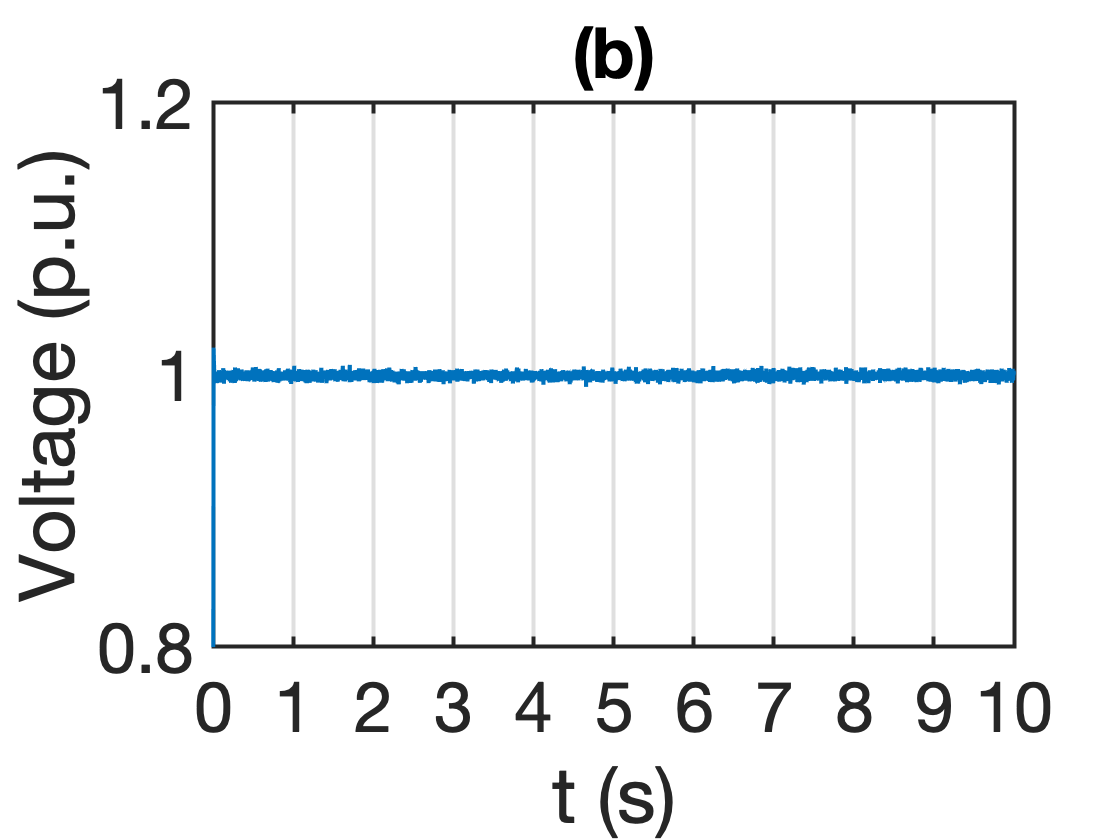}
  \end{subfigure}%
  
  \begin{subfigure}{0.23\textwidth}
    \centering
    \includegraphics[width=1.0\linewidth]{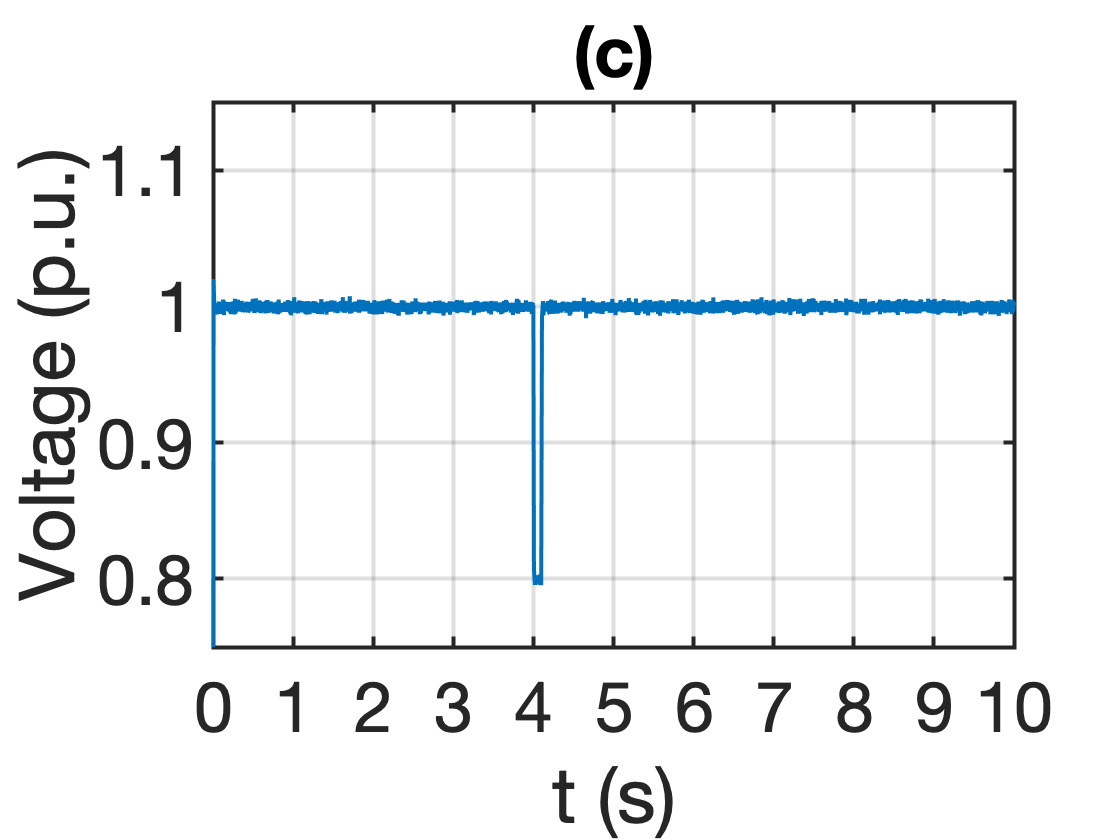}
  \end{subfigure}%
  \begin{subfigure}{0.23\textwidth}
    \centering
    \includegraphics[width=1.0\linewidth]{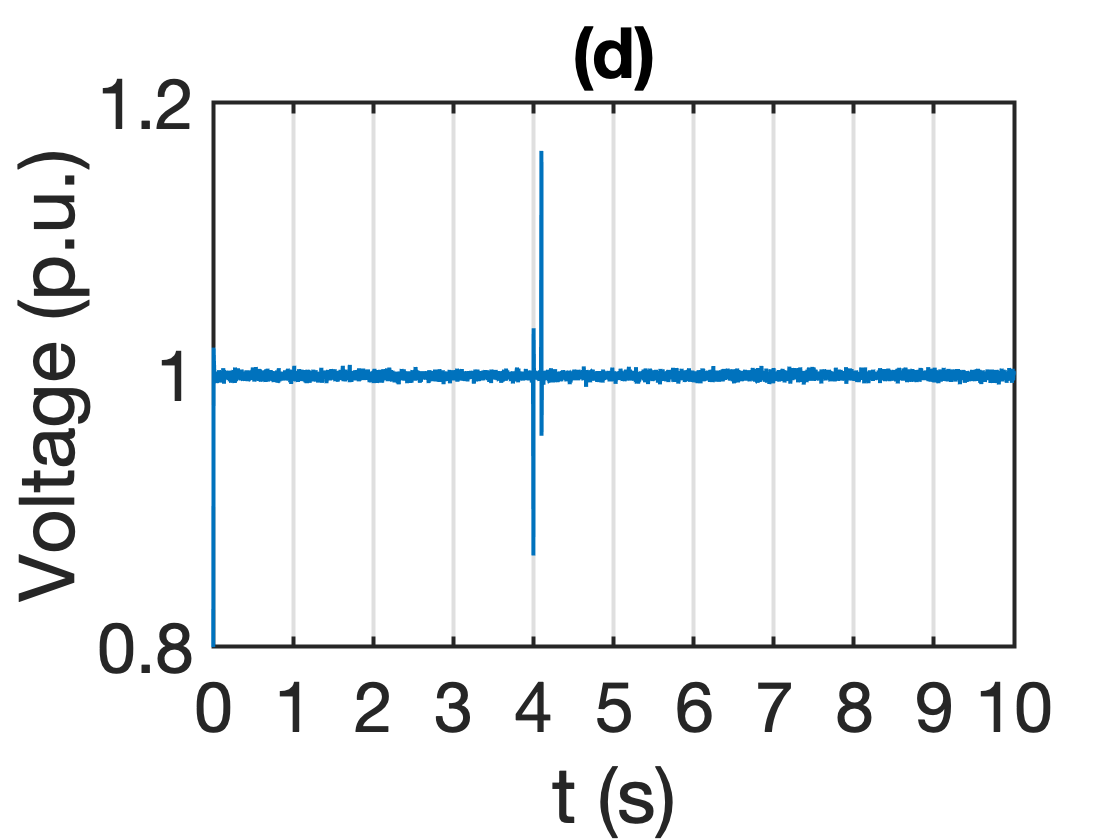}
  \end{subfigure}%
\caption{Voltage magnitude (p.u.) of GFM \#1 in an AC microgrid with technology mix. (a) busbar fault, (b) $\omega_{ni}$ actuator fault, (c) $V_{ni}$ actuator fault, and (d) inverter bridge fault. The faults last $0.2$ seconds.} 
\label{diverse-gfm1-vmag}
\end{figure}

\begin{figure}
\centering
  \begin{subfigure}{0.23\textwidth}
    \centering
    \includegraphics[width=1.0\linewidth]{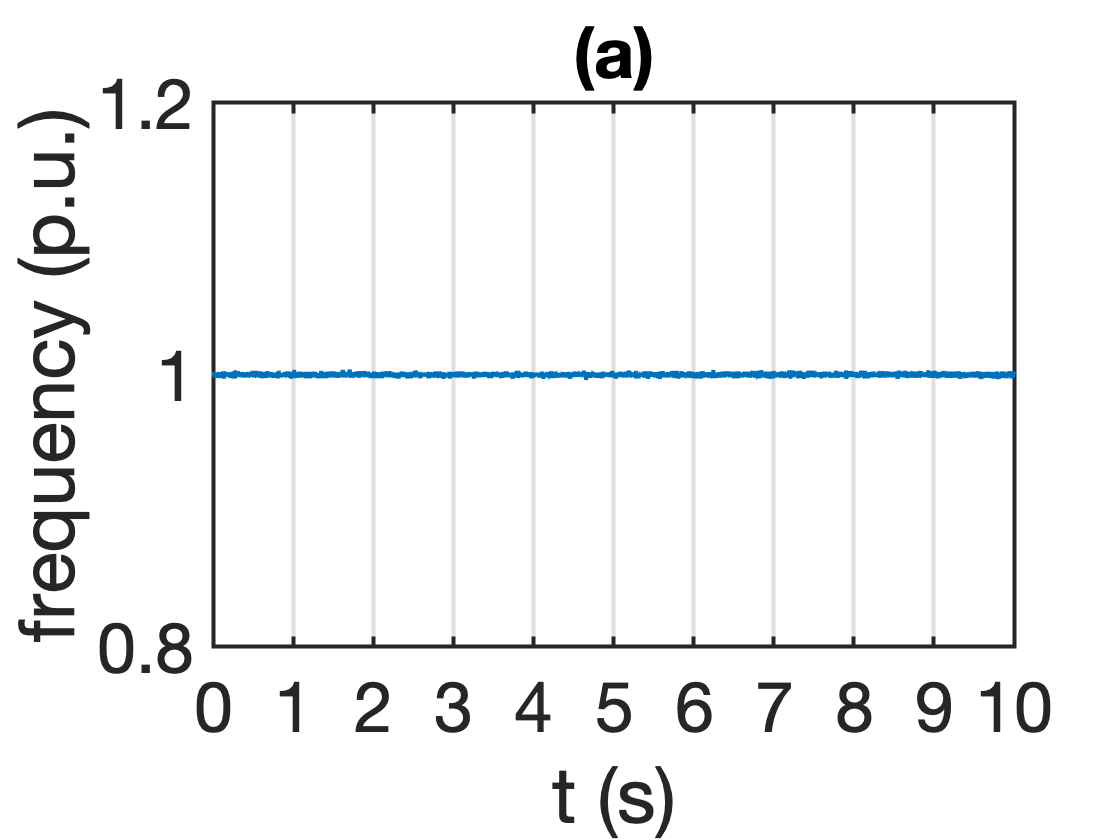}
  \end{subfigure}%
  \begin{subfigure}{0.23\textwidth}
    \centering
    \includegraphics[width=1.0\linewidth]{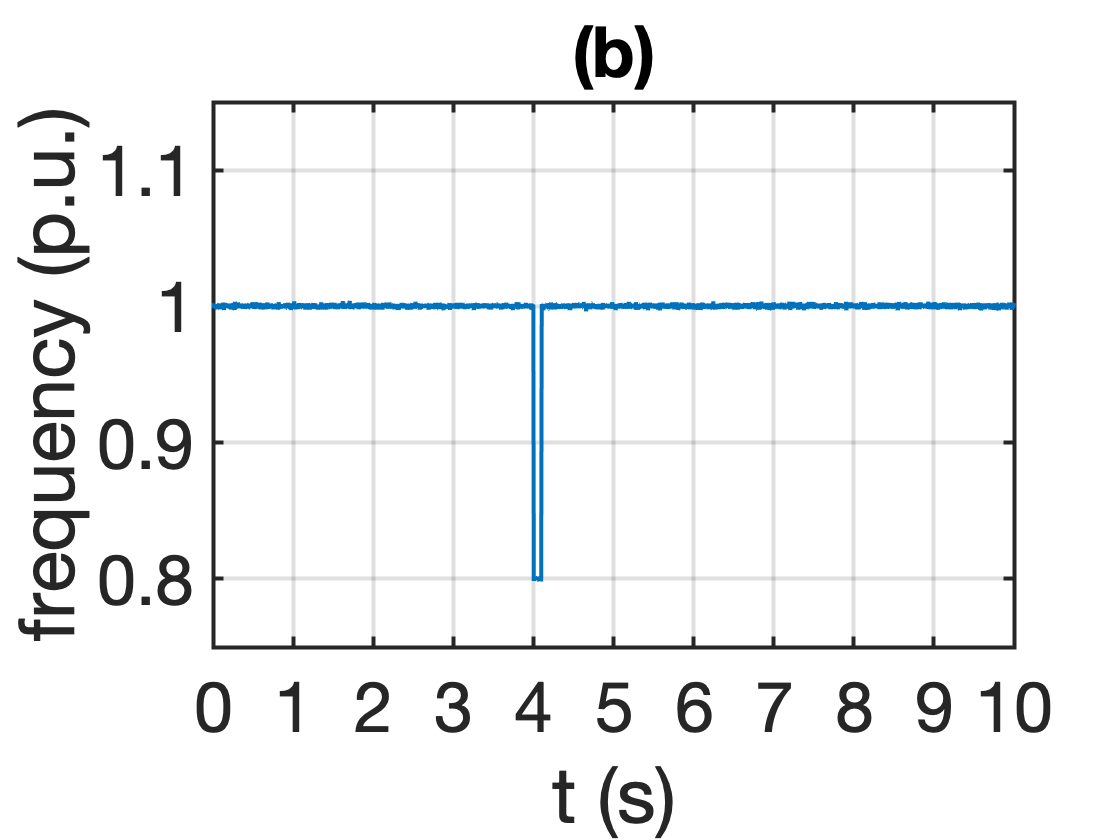}
  \end{subfigure}%
  
  \begin{subfigure}{0.23\textwidth}
    \centering
    \includegraphics[width=1.0\linewidth]{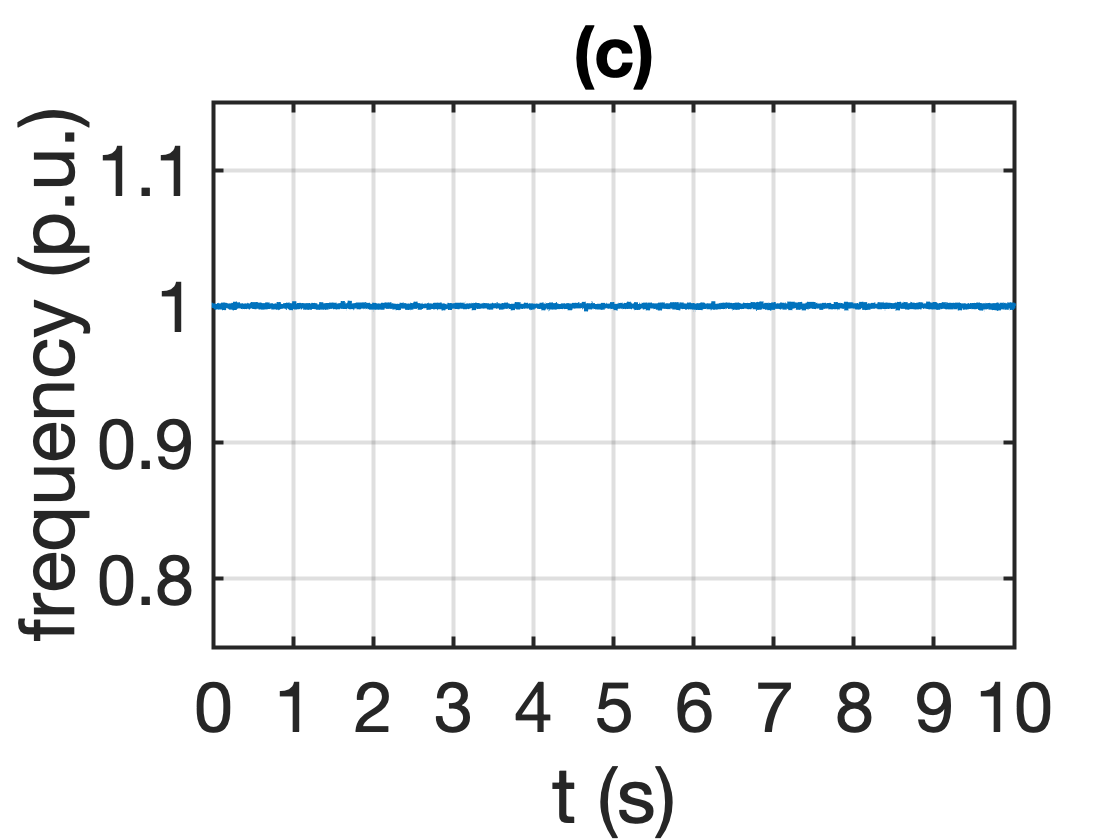}
  \end{subfigure}%
  \begin{subfigure}{0.23\textwidth}
    \centering
    \includegraphics[width=1.0\linewidth]{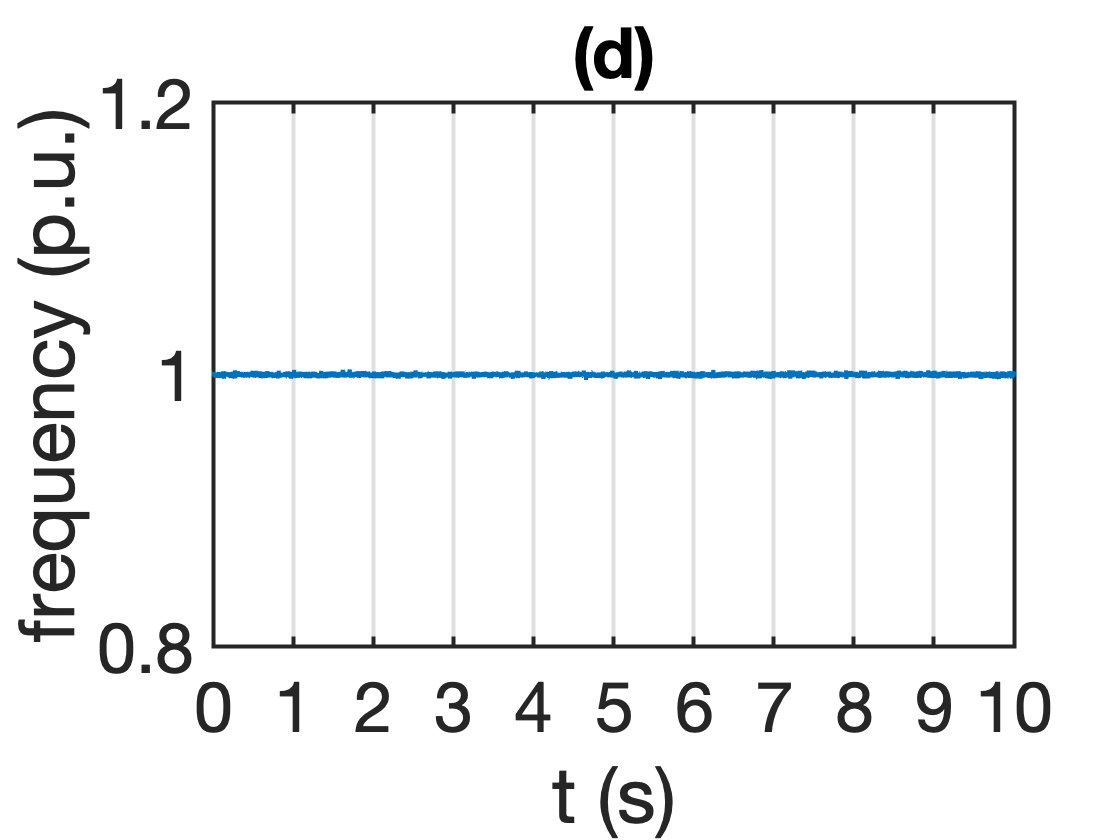}
  \end{subfigure}%
\caption{Frequency (p.u.) of GFM \#1 in an AC microgrid with technology mix. (a) busbar fault, (b) $\omega_{ni}$ actuator fault, (c) $V_{ni}$ actuator fault, and (d) inverter bridge fault. The faults last $0.2$ seconds.} 
\label{diverse-gfm1-freq}
\end{figure}

In this section, we modify the technology mix of the microgrid by replacing GFM \#2 and GFM \#4 with a grid-following inverter (GFL) and a fourth-order synchronous machine. The dynamic equations of the GFL and parameters can be found in \cite{BidramMulti2014}. Similarly, the synchronous machine (SM) parameters and model with turbine-governor and exciter control are found in \cite{Ghahremani2011,neplan,powerworld}. Second, the droop control of the remaining GFMs \#1 and \#3 is transformed into its equivalent virtual synchronous machine control \cite{Sajadi2022}. 

\subsubsection{Busbar fault}
Figure~{\ref{diverse-intriago-khan}a} shows that the residual norm of the OL-QB observer has increased its peak magnitude during the presence of the busbar fault while preserving the threshold's magnitude compared with the norm's response using droop control as presented in Figure~{\ref{one-sided-lipschitz-3ph}a}. The residual norm of the Lipschitz observer exhibits unresponsive behavior during the fault occurrence. The OL-QB observer exhibits a fault detection time around $t_{o} = 9$ ms, whereas the clearing time is close to $t_{c} = 50$ ms.


\subsubsection{Actuator faults}
Figure~{\ref{diverse-intriago-khan}} shows that the shapes of the residual norm under the $\omega_{ni}$ and $V_{ni}$ actuator faults present a slow decay after the fault clearance. Such a decay demands an increase in the threshold for the OL-QB and Lipschitz observers. According to our experiments, the slow decay is a consequence of the dynamics from the frequency and voltage magnitude secondary controllers. The detection time for both type of observers is $t_{o} = 1$ ms, whereas the clearing times are less than $t_{o} = 1.3$ ms.

\subsubsection{Inverter bridge fault}
The response of the residual norm under the inverter bridge fault for GFM \#3 is shown in Figures~{\ref{diverse-intriago-khan}g} and {\ref{diverse-intriago-khan}h}. Both responses show higher peak values during the presence of the fault compared to the peak values presented in Figure~{\ref{bridge}}. The threshold magnitudes remain similar to the magnitudes of the droop-controlled GFM \#4 shown in Figures~{\ref{bridge}c} and {\ref{bridge}d}. Both observers present fault detection and clearing times of less than 1 ms.

Figures \ref{diverse-gfm1-vmag} and \ref{diverse-gfm1-freq} show the response of the voltage magnitude and frequency in per unit at the PCC of GFM \#1 when the four internal faults happen. A remarkable observation is that faults are likely to be decoupled between voltage magnitude and frequency. For example, the busbar fault is clearly observable in the voltage magnitude, not the frequency. The reason for this phenomenon lies in the characteristic swing equation where the equivalent inertia is inversely proportional to the droop coefficient $m_{Pi}$, which is in the order of $10^{-5}$. Consequently, any change in the unfiltered active power due to the busbar fault does not largely influence the dynamics of frequency deviations $\Delta\omega$. A similar analysis can be derived for the other internal faults and the droop-controlled GFMs.

\begin{table}[t]
\caption{Minimum and maximum fault detection ($t_o$) and clearance ($t_c$) response times of the proposed OL-QB observers. The values are obtained by comparing the response times of the observers among all the experiments.}
\begin{center}
\begin{tabular}{ccccccccc} 
\hline
    & \multicolumn{8}{c}{Fault} \\
    & \multicolumn{2}{c}{Busbar} & \multicolumn{2}{c}{$\omega_{ni}$} & \multicolumn{2}{c}{$V_{ni}$} & \multicolumn{2}{c}{Bridge} \\
    & $t_o$ & $t_c$ & $t_o$ & $t_c$ & $t_o$ & $t_c$ & $t_o$ & $t_c$ \\
\hline
\hline
min & 7.5   & 24.9  & 1.0   & 1.1   & 1.0   & 1.1   & 1.0   & 1.0   \\ 
max & 49.7  & 52.2  & 1.0   & 1.3   & 1.0   & 1.2   & 1.0   & 1.0   \\ 
\hline
\end{tabular}
\end{center}
\label{response-times-proposed}
\vspace{-0.5cm}
\end{table}

Table \ref{processing-time} shows a comparison of the simulation times between the OL-QB and Lipschitz observers. We can not report the simulation time of the Lipschitz design under a busbar fault because its corresponding optimization program (cf. Lemma \ref{lipschitz-proposition}) that computes the observer gain matrix is reported as infeasible by the optimization solvers (cf. Remark~\ref{remark3}). Table {\ref{response-times-proposed}} presents the minimum and maximum fault detection and clearing times of our approach among all the simulations.

\begin{remark}\label{remark3}

We identify a drawback of the state-of-the-art Lipschitz observer design. For this particular design, notice that the squared Lipschitz constant in the block matrices $\Omega_1$ and $\Omega_2$ forces the positive definiteness of the LMIs in {\eqref{L1-lipschitz}} and {\eqref{L2-lipschitz}}. Hence, the feasible set for the matrix gain $L$ becomes more restricted than the counterpart of the one-sided Lipschitz observer. The situation is further aggravated by the magnitude of the Lipschitz constant. The feasibility issue becomes noticeable for the busbar faults. The four solvers report the optimization problem described by {Lemma} {\ref{lipschitz-proposition}} as infeasible. On the other hand, the OL and QB constants are not squared in their corresponding constraints and are at least half of the Lipschitz constant (cf. Table {\ref{constants}}).

\end{remark}

\subsection{Simultaneous occurrence of faults}\label{simultaneous}
\begin{figure}
\centering
  \begin{subfigure}{0.23\textwidth}
    \centering
    \includegraphics[width=1.0\linewidth]{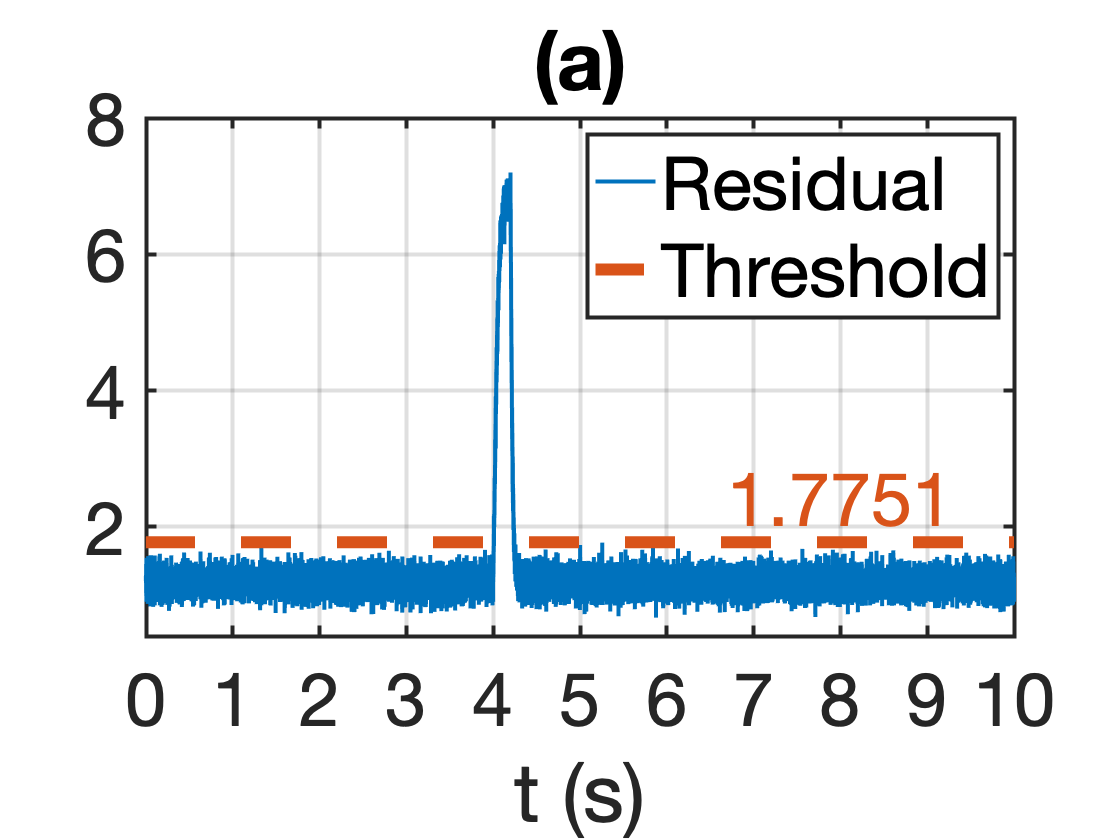}
  \end{subfigure}%
  \begin{subfigure}{0.23\textwidth}
    \centering
    \includegraphics[width=1.0\linewidth]{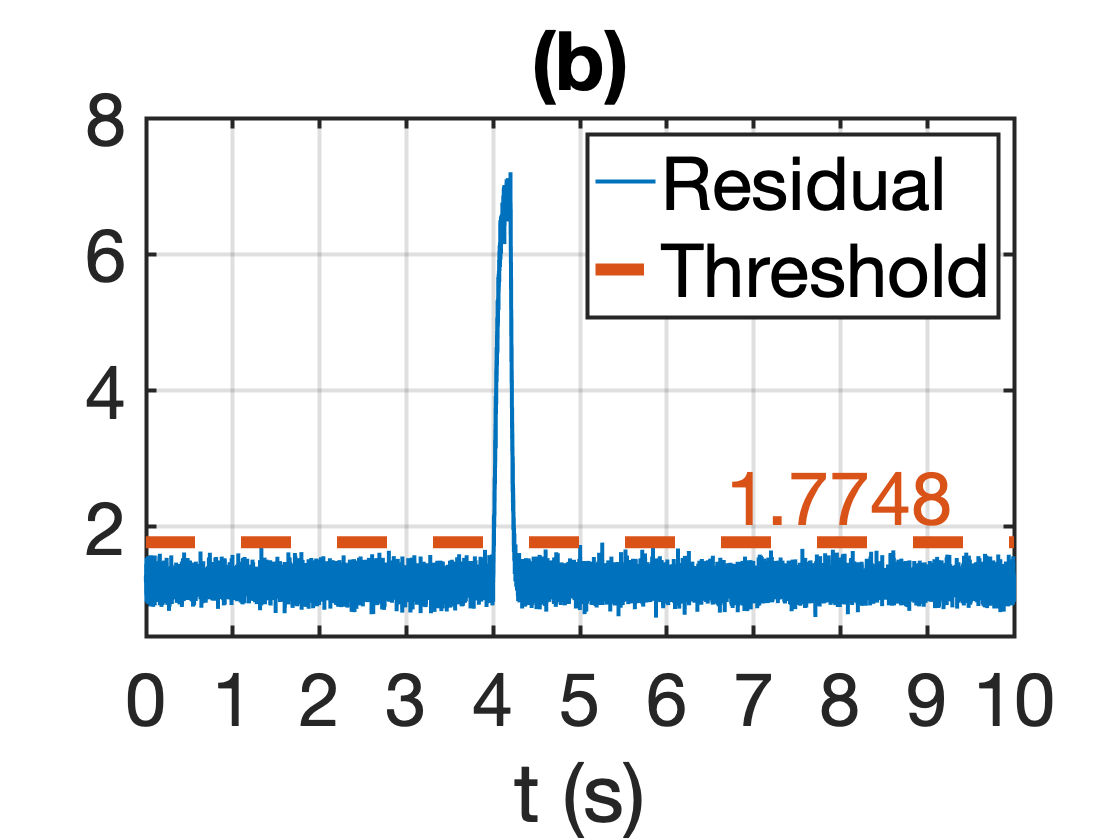}
  \end{subfigure}%
  \vspace{0.2cm}
  \begin{subfigure}{0.23\textwidth}
    \centering
    \includegraphics[width=1.0\linewidth]{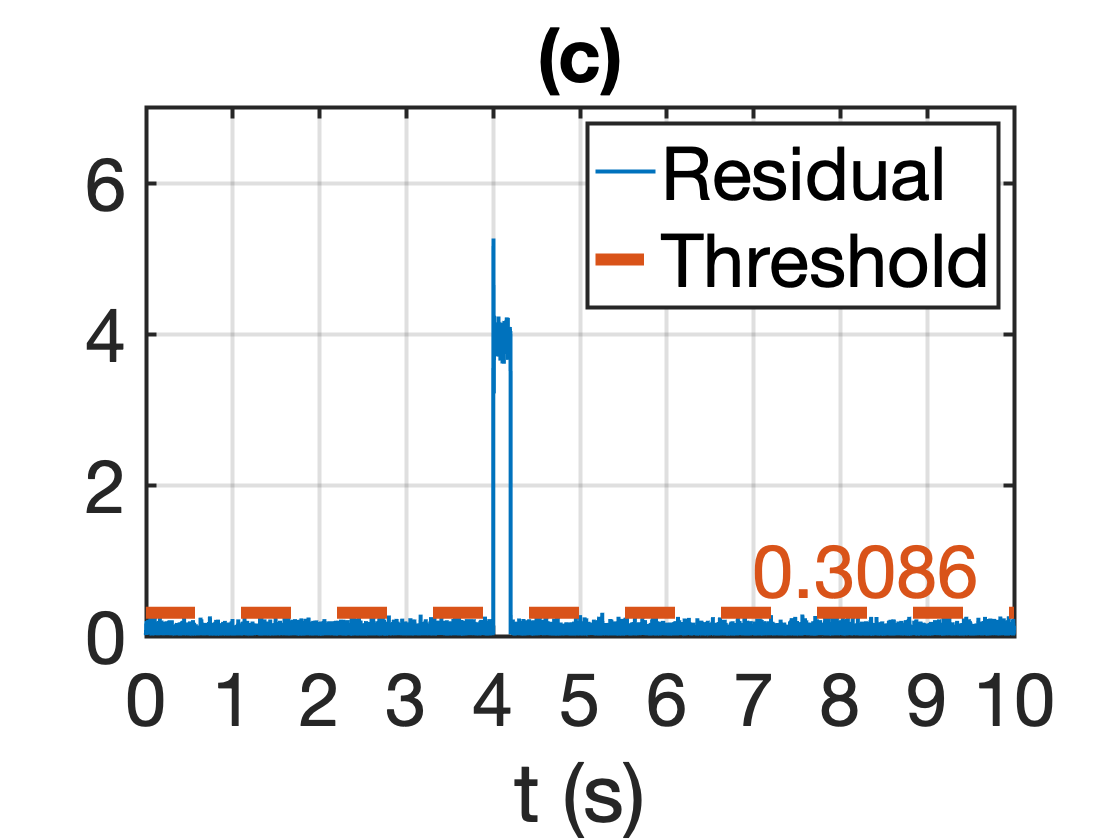}
    \label{simultaneous-vsm-bridge-observer-d}
  \end{subfigure}%
  \begin{subfigure}{0.23\textwidth}
    \centering
    \includegraphics[width=1.0\linewidth]{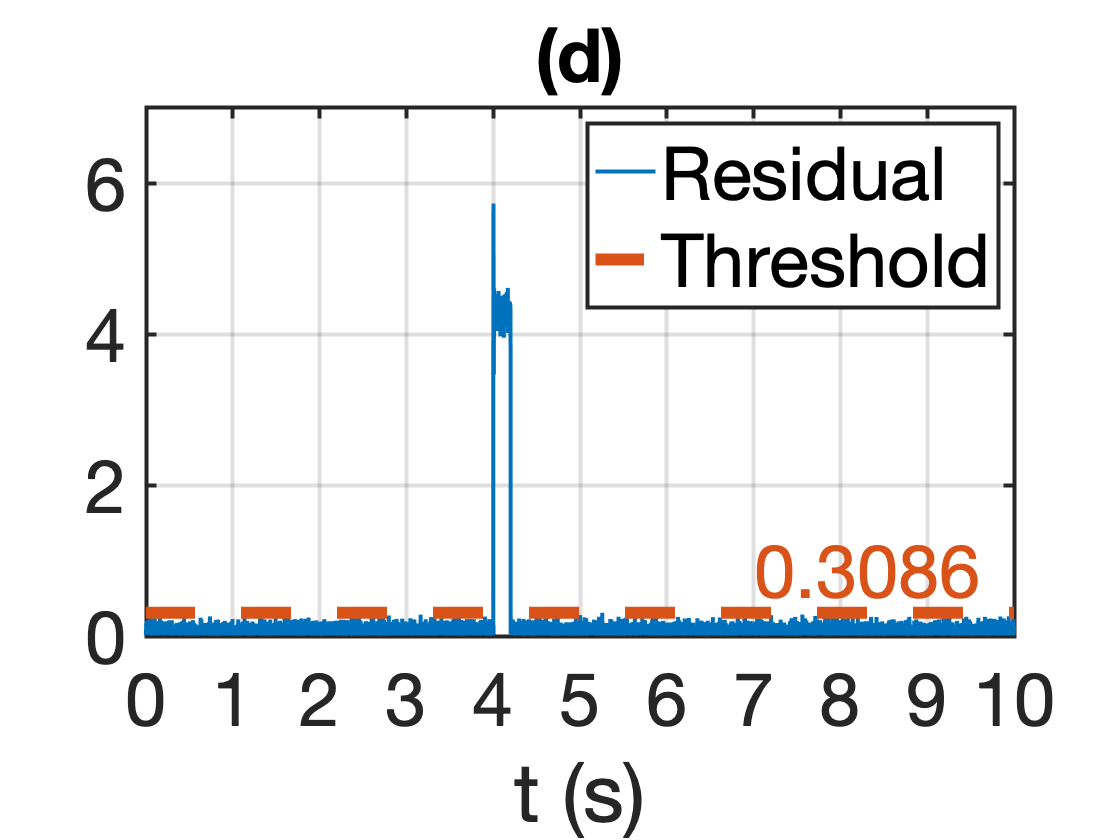}
    \label{simultaneous-vsm-bridge-observer-f}
  \end{subfigure}%
  \vspace{-0.4cm}
\caption{Residual norms under the simultaneous occurrence of faults in GFM $\#1$ using the proposed OL-QB observer with droop control (top row) and Lipschitz observer with virtual synchronous machine (bottom row). (a) busbar and actuator $\omega_{ni}$ fault; (b) busbar and both actuator $\omega_{ni}$ and $V_{ni}$ faults; (c) inverter bridge and actuator $V_{ni}$ fault; (d) inverter bridge, busbar and actuator $V_{ni}$ faults. The simultaneous faults occur at $t=4$ and are cleared after $0.2$ seconds.} 
\label{simultaneous-faults}
\vspace{-0.1cm}
\end{figure}

Figure~{\ref{simultaneous-faults}} presents a set of experiments under the simultaneous occurrence of faults for GFM \#1. Figure~{\ref{simultaneous-faults}a} shows the response of the residual norm under the simultaneous presence of a busbar and {$V_{ni}$} faults using our proposed observer designed for a busbar fault, whereas an additional {$\omega_{ni}$} fault is added to the experiment in Figure~{\ref{simultaneous-faults}b}. The busbar fault is the major contributor to the sensitivity of the residual signal. It can be seen that the {$\omega_{ni}$} fault plays an almost negligible role in the sudden change of the residual norm. Figure~{\ref{simultaneous-faults}c} shows the response of the residual norm under the presence of an inverter bridge and actuator {$V_{ni}$} faults using a Lipschitz observer designed for an inverter bridge fault; an additional busbar fault is added to the experiment in Figure~{\ref{simultaneous-faults}d}. The residual signal exhibits a slight increase due to the presence of the busbar fault. Notice that regardless of the simultaneous presence of the faults in the previous four experiments, we observe that the busbar and inverter bridge faults can be identified precisely.

\begin{remark}
Fault diagnosis is a two-stage process consisting of fault detection and fault location stages. The fault detection stage sets the basis for a reliable fault location technique. A fault location strategy can be easily derived once a fault is detected. It is worth noting that a single fault may perturb the residual norms of different observers. In this regard, we propose employing a bank of observers to identify triggering patterns among them, thus mitigating the occurrence of false alarms. Another approach is to incorporate the matrix expressions $E_f$ and $F_f$ corresponding to other faults into the disturbance matrix expressions $E_w$ and $F_w$ of the specific fault being analyzed. Alternatively, a combination of these two solutions, as outlined in \mbox{\cite{Wang2021}}, can also be considered. However, the comprehensive exploration of these directions is beyond the scope of this paper and will be pursued in future research endeavors.
\end{remark}

\vspace{-0.5cm}

\section{Discussion}\label{discussion}
\subsection{Complexity and computational burden}
We divide our proposed scheme into four important stages to define its complexity and computation burden. The first stage computes the OL and QB  constants in less than 20 minutes according to the algorithms in \mbox{\cite{Qi2018}} with a complexity of $\mathcal{O}(n^3\:n_{\mathcal{D}})$ and $\mathcal{O}(n^3\:n_{\mathcal{D}}^2)$ respectively, where $n$ is the number of states and $n_{\mathcal{D}}$ is the number of samples taken from the operating region $\mathcal{D}$. The second stage solves the semidefinite program posed in Theorem 1 in less than twenty seconds using primal-dual interior points methods, which have a worst-case complexity estimate of $\mathcal{O}(p^{2.75}K^{1.5})$ where $p$ is the number of variables (states and measurements), and $K$ is the number of constraints \mbox{\cite{Boyd1994}}. The threshold computation performed in the third stage has a lesser processing burden than previous stages with a complexity of $\mathcal{O}(t)$ proportional to the finite-length period $t$ discussed in Section {\ref{threshold-residual-computation}}. Lastly, the fourth stage computes the residual norm online with a complexity of $\mathcal{O}(m)$ where $m$ is the number of the system's measurements.


\vspace{-0.4cm}

\subsection{Comparison between similar fault detection schemes}
The works \mbox{\cite{Wang2021}}, \mbox{\cite{Shoaib2022}}, and \mbox{\cite{Mehmood2023}} propose methods similar to our proposed approach because they develop a residual-, observer-, and model-based fault detection strategy using the $\mathcal{H}_{-}/\mathcal{H}_{\infty}$ optimization framework. However, the main limitation of \mbox{\cite{Wang2021}} is that the observers' design targets linear dynamic systems in DC microgrids. The essential drawback of \mbox{\cite{Shoaib2022}} is that the authors need to consider the limitations of the Lipschitz condition while designing the observers for fault detection. In \mbox{\cite{Mehmood2023}}, the authors limit their approach to sensor faults and do not study grid-forming inverters. Our proposed fault detection method overcomes these limitations because the observers are designed for the nonlinear dynamics of the GFMs in AC microgrids, considering four different types of faults based on the OL-QB conditions that improve upon the limitation of the state-of-the-art Lipschitz design.

\vspace{-0.3cm}

\subsection{Applicability to inverter-dominant large-scale power systems}
The proposed methodology needs further analysis before its immediate application to inverter-dominant large-scale power systems due to the significant stability challenges. The dynamics of numerous converters with different controls may operate on a similar time scale as the dynamics of the lines, resulting in resonance phenomena and, ultimately, instability \mbox{\cite{Dorfler2023}}. Large-scale power systems may demand longer interconnection lines, which is detrimental to the stable penetration of inverter-based resources \mbox{\cite{Ding2021}}. Potential changes in the grid-forming inverters at the parametric and control levels are required to accommodate the aforementioned stability challenges. We recommend considering a sensitivity analysis of the inverter's parameters to study the feasibility of the LMIs in Theorem {\ref{one-sided-lipschitz-theorem}}. Also, we consider it essential to study the impact of new controller designs over the OL-QB constants.

\subsection{Capacitive lines}
We investigate how our fault detection algorithm should vary if lines are capacitive. Our findings reveal that lines with more than 0.1 $mF$ capacitances impact the proposed algorithm, especially the observers for busbar faults. We notice that the observers for the other types of faults are less impacted by the capacitive lines. The reason is that the capacitive lines interact with the inverters' voltage controllers by exchanging reactive power. The interaction is observed as an oscillation in the observers' residual signal. We suggest modifying our proposed approach by including a low-pass filter applied to the residual norm and/or changing the threshold value to attenuate the impact of oscillations.



\section{Conclusion}\label{conclusion}
This work proposes a nonlinear observer design based on the OL and QB conditions for detecting internal faults in the emergent grid-forming inverters technology. We derive the matrix expressions of faults and disturbances affecting GFMs and consider them in the observer design process. The internal faults considered in this work are busbar, actuator, and inverter bridge faults. The nonlinear model of the GFM is expressed as a one-sided Lipschitz formulation. We pose a set of LMI constraints and $\mathcal{H}_{-}/\mathcal{H}_{\infty}$ optimization to design an observer that achieves sensitivity to faults and robustness against disturbances. The proposed observer design is compared with the state-of-the-art design based on the Lipschitz condition. The association between the Lipschitz and the OL-QB observer is studied theoretically and experimentally, showing that the latter allows for a less restrictive observer design with less computational time. We demonstrate that our proposed approach performs better than the state-of-the-art designs regarding trustworthy detection and computational time. Moreover, leveraging the available signals of typical grid-forming inverters, our approach does not require additional sensors yielding a cost-effective solution. Our proposed approach is evaluated with an islanded droop-controlled microgrid. The numerical tests corroborate our proposed approach's effectiveness and feasibility of real-world implementation.
\vspace{-0.5cm}

\bibliographystyle{IEEEtran}
\bibliography{references}
\end{document}